\pgfplotsset{compat=1.11}
\theoremstyle{plain}
\newtheorem{theorem}{Theorem}
\newtheorem{lemma}[theorem]{Lemma}
\newtheorem{corollary}[theorem]{Corollary}
\newtheorem{proposition}[theorem]{Proposition}
\newtheorem{fact}[theorem]{Fact}
\newtheorem{problem}[theorem]{Problem}
\newtheorem{claim}[theorem]{Claim}
\theoremstyle{remark}
\newtheorem{remark}{Remark}
\newtheorem{example}{Example}
\renewcommand{\epsilon}{\varepsilon}
\newcommand{\defequals}{\stackrel{\mathrm{def}}{=}}
\providecommand{\st}{}
\renewcommand{\st}{\mathrel{\mid}}
\newcommand{\uu}{\boldsymbol{u}}
\newcommand{\vv}{\boldsymbol{v}}
\newcommand{\xx}{\boldsymbol{x}}
\newcommand{\yy}{\boldsymbol{y}}
\newcommand{\hh}{\boldsymbol{h}}
\renewcommand{\aa}{\boldsymbol{a}}
\newcommand{\bb}{\boldsymbol{b}}
\renewcommand{\cc}{\boldsymbol{c}}
\newcommand{\dd}{\boldsymbol{d}}
\newcommand{\rr}{\boldsymbol{r}}
\renewcommand{\ss}{\boldsymbol{s}}
\newcommand{\pp}{\boldsymbol{p}}
\newcommand{\qq}{\boldsymbol{q}}
\newcommand{\vspan}{\mathrm{span}}
\newcommand{\Aa}{\mathcal{A}}
\newcommand{\Bb}{\mathcal{B}}
\newcommand{\Cc}{\mathcal{C}}
\newcommand{\inter}[1]{[\![#1]\!]}
\newcommand{\funcA}{\inter{\Aa}}
\newcommand{\funcB}{\inter{\Bb}}
\newcommand{\accruns}[2]{\mathrm{Acc}_{#1}(#2)}
\newcommand{\periods}{\mbox{Periods}}
\newcommand{\distr}{\mathcal{D}}
\newcommand{\prob}[2]{\mathrm{Pr}_{#1}(#2)}
\newcommand{\nat}{\mathbb{N}}
\renewcommand{\int}{\mathbb{Z}}
\newcommand{\rat}{\mathbb{Q}}
\newcommand{\real}{\mathbb{R}}
\tikzset{
	every state/.style={inner sep=0.3mm, minimum size=0.5cm,
                        draw=blue!50,very thick,fill=blue!20,scale=0.8},
	accepting/.style={accepting by arrow},
	trans/.style={->, >=latex}
}
\journal{Journal of Computer and System Sciences}
\begin{document}

\begin{frontmatter}

\title{When are Emptiness and Containment Decidable \\ for Probabilistic Automata?}

\author[warwick]{Laure Daviaud}
\ead{L.Daviaud@warwick.ac.uk}
\author[warwick]{Marcin Jurdzi\'nski}
\ead{Marcin.Jurdzinski@warwick.ac.uk}
\author[warwick]{Ranko Lazi\'{c}}
\ead{R.S.Lazic@warwick.ac.uk}
\author[bordeaux]{Filip Mazowiecki}
\ead{filip.mazowiecki@u-bordeaux.fr}
\author[brussels]{Guillermo A. P\'erez}
\ead{gperezme@ulb.ac.be}
\author[oxford]{James Worrell}
\ead{James.Worrell@cs.ox.ac.uk}

\address[warwick]{University of Warwick, Coventry, UK}
\address[bordeaux]{Universit\'e de Bordeaux, Bordeaux, France}
\address[brussels]{Universit\'e libre de Bruxelles, Brussels, Belgium}
\address[oxford]{University of Oxford, Oxford, UK}

\begin{abstract}
    The emptiness and containment problems for probabilistic automata are
    natural quantitative generalisations of the classical language emptiness and
    inclusion problems for Boolean automata.  It is well known that both
    problems are  undecidable.  In this paper we provide a more refined view of
    these problems in terms of the degree of ambiguity of probabilistic
    automata.  We show that a gap version of the emptiness problem (that is
    known be undecidable in general) becomes decidable for automata of
    polynomial ambiguity.  We complement this positive result by showing that
    the emptiness problem remains undecidable even when restricted to automata
    of linear ambiguity.  We then turn to finitely ambiguous automata.  Here we
    show decidability of containment in case one of the automata is assumed to
    be unambiguous while the other one is allowed to be finitely ambiguous.  Our
    proof of this last result relies on the decidability of the theory of real
    exponentiation, which has been shown, subject to Schanuel's Conjecture, by
    Macintyre and Wilkie.  
\end{abstract}

\begin{keyword}
    Probabilistic automata \sep Emptiness \sep Containment \sep Ambiguity 
\end{keyword}

\end{frontmatter}

\section{Introduction}
\label{sec:introduction}
Probabilistic automata (PA) are a quantitative extension of classical Boolean
automata that were first introduced by Rabin~\cite{rabin63}.  In this model, 
non-deterministic choices are replaced by probabilities: each transition carries 
a rational number which gives its probability to be chosen amongst all the other transitions going
out of the same state and labelled by the same letter.  Then, instead of simply
accepting or rejecting a word, such an automaton measures the probability of it
being accepted. 

PA can be seen as (blind) partially observable Markov
decision processes~\cite{puterman05}. PA are also closely related 
to hidden Markov models, which are finite-state models for generating probability
distributions over strings~\cite{baum1966}.  Such probabilistic finite-state machines have numerous applications in
the field of artificial intelligence~\cite{rn10,learning-survey}. Further
applications for PA include, amongst others, verification of
probabilistic systems~\cite{vardi85,knpq10,FengHKP11}, reasoning about inexact
hardware~\cite{PalemA13}, quantum complexity theory~\cite{YakaryilmazS11},
uncertainty in runtime modelling~\cite{GieseBPRIWC11}, as well as text and
speech processing~\cite{mpr02}.
PA are very expressive, as witnessed by the 
applications described above, and it is thus not surprising that most natural verification-related decision problems for them are
undecidable.  (However, equivalence and minimisation do admit
efficient algorithms~\cite{Schutzenberger61b,Tzeng92}.)
Due to these negative results, many sub-classes of probabilistic
automata have been studied. These include, for example, hierarchical~\cite{fgko15},
leaktight~\cite{csvb15} and bounded-ambiguity
automata~\cite{frw17} (see~\cite{fijalkow17} for a survey). 

We focus on the emptiness and containment problems for PA, which are natural analogs of the
like-named problems for Boolean automata.  The emptiness problem asks:
\textit{given an automaton $\Aa$, determine whether there exists a word $w$ such
that the probability $\inter{\Aa}(w)$ of it being accepted is strictly greater 
than~$\frac{1}{2}$}.\footnote{In our formulation of the emptiness problem we fix the probability threshold to be $1/2$.  Many authors allow the threshold to be any rational number between 0 and 1 and consider it as an additional input to the problem.  However it is straightforward to reduce the general case to the fixed-threshold formulation that we adopt here.}  The containment problem asks: 
\textit{given two automata~$\Aa$ and~$\Bb$,
    determine whether for all words $w$ it holds that $\inter{\Aa}(w) \leq \inter{\Bb}(w)$.} 
    The emptiness problem (also called the threshold problem) has long been known to be undecidable~\cite{paz14}.  
    It directly follows that containment, being a generalisation of emptiness, is also undecidable. 
    Undecidability of both problems holds even for automata with a fixed number of states~\cite{BlondelC03}, while in case of a unary alphabet decidability of emptiness is equivalent to longstanding open problems in number theory~\cite{AkshayAOW15}.   
    
Remarkably, even the following gap version of the emptiness problem is also undecidable~\cite{cl89}. 
The \textit{gap emptiness problem} takes as input a probabilistic automaton $\Aa$ and a rational $\epsilon \in \rat \cap (0,1)$. The task consists in distinguishing the following two cases:
\begin{enumerate}
\item there exists a word $w$ such that $\inter{\Aa}(w) > \frac{1}{2} + \epsilon$,
\item $\inter{\Aa}(w) \leq \frac{1}{2}$ for all words $w$.
\end{enumerate}
An algorithm for the gap problem is required to output "YES" in Case 1, "NO" in Case 2, and may output anything if neither case holds.
Thus the gap problem is a promise problem~\cite{Goldreich05} that can be seen as the decision analog of the problem of approximating
$\sup_{w\in\Sigma^*} \inter{\Aa}(w)$ to within given precision $\varepsilon$.   In contrast to the situation with probabilistic automata, the results of~\cite{DerksenJK05} imply that the analogous gap problem for quantum automata is decidable.

In this paper we undertake a systematic analysis of the decidability of emptiness and containment for PA in terms of their \emph{ambiguity}. 
We define the ambiguity of a PA to be that of the underlying
non-deterministic finite automaton: thus a PA is \emph{finitely ambiguous} if there exists a finite bound on the number of accepting runs of any word and \emph{polynomially ambiguous} if there exists a polynomial function 
$f:\mathbb{N} \rightarrow \mathbb{N}$ such that every word of length~$n$ has at most $f(n)$ accepting runs. 
The classes of finitely ambiguous and polynomially ambiguous automata respectively admit characterisations in terms of the structure
of their transition graphs and, based on these characterisations, it is decidable in polynomial time whether a PA is finitely ambiguous and whether it is polynomially ambiguous~\cite{WeberS91}. 

Our main results are as follows:

\paragraph{Gap emptiness problem}
We show that the gap emptiness problem, which is undecidable in general, becomes decidable when we restrict to the class of polynomially ambiguous automata.
\begin{theorem}\label{theorem:threshold}
    The gap emptiness problem is decidable for the class of polynomially
    ambiguous probabilistic automata.
\end{theorem}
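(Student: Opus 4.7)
The plan is to use polynomial ambiguity to relate $\inter{\Aa}(w)$ to the probability $m(w) := \max_{r} \mathrm{Pr}(r)$ of the best single accepting run of $\Aa$ on $w$, and then to decide the gap problem for $m$ using tropical-automaton techniques. Let $f(n)=O(n^d)$ bound the ambiguity of $\Aa$. Since $\inter{\Aa}(w)$ is the sum of at most $f(|w|)$ nonnegative run-probabilities, we have the sandwich
\[
m(w)\ \leq\ \inter{\Aa}(w)\ \leq\ f(|w|)\cdot m(w),
\]
so $\inter{\Aa}$ and $m$ are polynomially related. The function $m$ is the value function of the max-times weighted automaton $\Bb$ obtained from $\Aa$ by replacing the sum with a maximum in the semantics; equivalently, $-\log m(\cdot)$ is the value function of a min-plus (tropical) weighted automaton $\Cc$ inheriting the polynomial-ambiguity structure of $\Aa$.

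First, I would invoke the Weber--Seidl structural characterisation of polynomial ambiguity---the absence of the EDA pattern in the underlying NFA---to construct $\Cc$ and to control its ambiguity. Next, I would translate the gap dichotomy into the tropical setting: Case~1 ($\exists w.\ \inter{\Aa}(w)>1/2+\epsilon$) implies the existence of a word $w$ with $\inter{\Cc}(w)<d\log|w|+c_\epsilon$ for a constant $c_\epsilon$ depending on $\epsilon$; Case~2 ($\forall w.\ \inter{\Aa}(w)\leq 1/2$) implies $\inter{\Cc}(w)\geq 1$ for every $w$. Finally, I would decide this "sub-logarithmic growth" question on $\Cc$ by adapting classical decidability results for limitedness of tropical automata (Hashiguchi, Leung, Simon) to the polynomially ambiguous case, where the EDA-free structure admits sharper pumping arguments.

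The main obstacle will be this final step: the question to decide is not the standard limitedness problem (bounded vs.\ unbounded values of $\inter{\Cc}$), but the finer question of whether some word realises a tropical value below $d\log|w|+c_\epsilon$. Bridging this will require a pumping argument tailored to polynomially ambiguous tropical automata, showing that if a witness for Case~1 exists, then a witness of controllable shape exists---obtained by iterating a short word through an appropriate cycle of $\Aa$---and can be found effectively. The gap $\epsilon$ is essential: it provides enough slack to absorb the polynomial error factor $f(|w|)$ when converting between $\inter{\Aa}$ and $m$, turning what would otherwise be an exact-threshold problem (known to be undecidable) into an approximate question amenable to tropical methods.
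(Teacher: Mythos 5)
Your proposal diverges substantially from the paper's, and as written it contains a gap that I do not see how to repair.

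The central difficulty is that the sandwich $m(w)\leq\inter{\Aa}(w)\leq f(|w|)\cdot m(w)$ gives a \emph{multiplicative} error that grows with $|w|$, and this destroys the gap structure under translation. Concretely, setting $\inter{\Cc}(w)=-\log m(w)$: Case~1 of the promise yields a word $w$ with $\inter{\Cc}(w)< d\log|w|+c_\epsilon$, while Case~2 yields only $\inter{\Cc}(w)\geq\log 2$ for all $w$. These two conditions are \emph{not} mutually exclusive: a sufficiently long $w$ with, say, $\inter{\Cc}(w)=\log 2 + 0.1$ satisfies the conclusion drawn from Case~1 while being perfectly compatible with Case~2. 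So an algorithm that decides the tropical question ``does some $w$ have $\inter{\Cc}(w)<d\log|w|+c_\epsilon$?'' and then answers the gap emptiness problem accordingly can give the wrong output: a Case~2 instance can produce a ``yes'' answer to the tropical question and would then be misclassified as Case~1. In short, the polynomial factor $f(|w|)$ is unbounded, so the additive slack $\epsilon$ in the original promise is swallowed whole for long words, and the reduction is not sound. There is also a secondary concern: the ``sub-logarithmic growth'' property of a tropical automaton is not the limitedness problem and is not covered by Hashiguchi/Leung/Simon; you defer its decidability to an unspecified pumping argument, which is exactly the technical heart of the claim and cannot be left unproved.

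By contrast, the paper's proof constructs a finitely ambiguous approximant $\Aa'$ (obtained by truncating each run after $N$ non-deterministic choices, for a computable $N$) with a \emph{uniform additive} error: $\inter{\Aa'}(w)\leq\inter{\Aa}(w)\leq\inter{\Aa'}(w)+\epsilon$ for all $w$. This is precisely what the gap promise requires: Case~1 then forces $\inter{\Aa'}(w)>\tfrac12$ for some $w$, Case~2 forces $\inter{\Aa'}(w)\leq\tfrac12$ for all $w$, and the two are genuinely separated, so the problem reduces to exact emptiness of the finitely ambiguous automaton $\Aa'$, which is known to be decidable. The lemma that makes the additive bound go through is a run-counting estimate: the number of accepting runs using exactly $m$ non-deterministic choices is polynomial in $m$ (uniformly in $w$), so the total probability of the truncated tail $\sum_{m>N}\alpha^m P_\Aa(m)$ converges and can be driven below $\epsilon$. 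If you want to salvage the spirit of your approach, the lesson is that you need an approximation whose error is additive and word-length-independent, not multiplicative in $|w|$; the best-single-run approximation cannot provide that.
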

To prove Theorem~\ref{theorem:threshold} we approximate polynomially ambiguous automata by finitely ambiguous automata
and rely on the decidability of emptiness in the finitely ambiguous case~\cite{frw17}.

\paragraph{Emptiness problem}
While polynomial ambiguity suffices for decidability of the gap emptiness problem, we show that the emptiness problem 
itself is undecidable even for linearly ambiguous automata.
\begin{theorem}
\label{theorem:undecidability}
    The emptiness and containment problems are undecidable for the class of linearly ambiguous
    probabilistic automata.
\end{theorem}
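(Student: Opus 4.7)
The plan is to reduce from Post's Correspondence Problem (PCP). Given an instance $(u_1,v_1),\ldots,(u_k,v_k)$ over some alphabet $\Sigma$, I would construct a probabilistic automaton $\Aa$ over input alphabet $\{1,\ldots,k\}$ such that the PCP instance admits a solution iff there exists a word~$w$ with $\inter{\Aa}(w) > 1/2$, while carefully arranging for $\Aa$ to be linearly ambiguous. The classical undecidability constructions for PA emptiness (due to Paz, Condon--Lipton, and others) typically produce automata whose ambiguity is polynomial rather than linear, so the refinement to the linear case is the substance of the theorem.

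First I would adapt the standard base-$r$ probabilistic encoding of strings: for a sufficiently large base~$r$, there is an \emph{unambiguous} sub-automaton that, reading $i_1\cdots i_n$, accepts with probability equal to the fractional number whose base-$r$ expansion is $0.u_{i_1}u_{i_2}\cdots u_{i_n}$, and similarly one for the $v$-strings. Testing equality of these two encodings by a direct comparison (e.g.\ squaring the difference) tends to inflate the ambiguity. Instead, the idea is to let $\Aa$ perform a single nondeterministic guess while reading its input: a position $p$ at which an alleged mismatch between the $u$-encoding and the $v$-encoding is verified by a deterministic sub-automaton. Because exactly one nondeterministic choice is made per accepting run (the choice of $p$), the number of accepting runs on a word of length $n$ is $O(n)$, which gives linear ambiguity. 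The total acceptance probability is then a weighted sum over all positions, and the weights are to be calibrated so that the threshold $\tfrac{1}{2}$ is crossed exactly when every position matches, i.e.\ when $w$ encodes a valid PCP solution.

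The main obstacle is this precise calibration: the contribution of each guessed position $p$ to the total acceptance must decay (or renormalise) in a way that is realisable by a linearly ambiguous structure, while still allowing the global threshold $\tfrac{1}{2}$ to separate PCP-solvable from unsolvable instances. I expect to have to combine a constant number of such linearly ambiguous sub-automata via an initial probabilistic branch (so that the overall ambiguity remains linear) in order to turn a ``position-wise mismatch detector'' into a global ``is-there-any-mismatch'' tester with sharp threshold behaviour. Finally, the undecidability of containment follows as a corollary: take $\Bb$ to be the unambiguous one-state PA with constant acceptance probability $\tfrac{1}{2}$; then $\inter{\Aa}(w)\leq\inter{\Bb}(w)$ holds for every~$w$ iff $\Aa$ is empty in the sense above. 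Since $\Bb$ is unambiguous, the pair $(\Aa,\Bb)$ still lies in the linearly ambiguous class.
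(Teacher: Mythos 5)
The structural insight you identify---make each automaton component perform a \emph{single} nondeterministic guess of a position so that the number of accepting runs is $O(n)$---is exactly the mechanism the paper exploits, and the corollary for containment (comparing against a constant-$\tfrac12$ unambiguous automaton) is also right. But the paper does not reduce from PCP: it reduces from the halting problem for two-counter machines, encoding executions as words $a^{n_0}b^{m_0}\,t_1\,a^{n_1}b^{m_1}\,t_2\cdots$ in which the counter values appear explicitly as unary blocks. This choice is not cosmetic. It is what makes the ``guess-a-position-and-verify-locally'' strategy actually realizable.

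The gap in your proposal is the verification step. In the PCP encoding, the quantities being compared are the base-$r$ reals $0.u_{i_1}\cdots u_{i_n}$ and $0.v_{i_1}\cdots v_{i_n}$, and ``a mismatch at position $p$'' is not a local property of the input word $i_1\cdots i_n$. A mismatch at the $p$-th digit of the concatenated strings depends on the cumulative offsets $\sum_{j<p}|u_{i_j}|$ versus $\sum_{j<p}|v_{i_j}|$, which a finite-state device cannot track when the tile lengths differ; this is precisely why the classical PCP reductions resort to global real-number comparisons (squaring a difference, etc.), which is exactly what inflates the ambiguity past linear. You acknowledge the ``calibration'' as the main obstacle, but the issue is more structural than calibration: there is no deterministic finite sub-automaton that can check ``digit $p$ of the two encodings disagree.'' The two-counter-machine reduction sidesteps this because each property to verify (``after a $T_1^+$ transition, the next $a$-block is one longer'') \emph{is} a relation between two adjacent unary blocks of the input, so a constant-size deterministic gadget can check it after one nondeterministic guess of which transition occurrence to examine. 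Even then, the paper does not verify a boolean mismatch: its gadget $\Cc(x,y,z)$ contributes probabilities of the form $y^{n_i+1} z^{n_{i+1}}$, and the comparison with a reference automaton is forced by the AM--GM-type inequality $pq \le \tfrac12(p^2+q^2)$ with equality iff $p=q$. That analytic trick, rather than a weight normalisation scheme, is what converts local equality constraints into the sharp threshold at $\tfrac12$. So while your high-level plan captures the ambiguity-control idea, the reduction source and the verification mechanism both need to be replaced before the argument can go through.
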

Theorem~\ref{theorem:undecidability} strengthens a previous result that emptiness and containment are undecidable for 
quadratically ambiguous PA~\cite{frw17}.  Recall that
emptiness is decidable for finitely ambiguous PA~\cite{frw17} and hence this result is optimal.

\paragraph{Containment problem}
Theorem~\ref{theorem:undecidability} leads us to consider the decidability 
of containment between finitely
ambiguous automata.  Here we have the following result:
\begin{theorem}
\label{theorem:decidability}
    If Schanuel's conjecture holds then 
    the containment problem 
    is decidable if at least one of the input automaton is unambiguous 
    while the other is finitely ambiguous.
\end{theorem}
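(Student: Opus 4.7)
The plan is to reduce containment to deciding an existential sentence in the first-order theory of real exponentiation $\langle \real, +, \cdot, <, \exp \rangle$, whose decidability modulo Schanuel's conjecture is the Macintyre--Wilkie theorem.

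First I would invoke the structural characterisations of (finitely) ambiguous NFA due to Weber and Seidl to decompose $\Sigma^*$ into finitely many \emph{patterns} of the shape $L_\pi = \{u_0 v_1^{k_1} u_1 \cdots v_m^{k_m} u_m : (k_1, \ldots, k_m) \in \nat^m\}$, chosen so that, in each of $\Aa$ and $\Bb$, every word in $L_\pi$ admits the same bounded number of accepting runs and each such run follows a fixed skeleton of visited cycles. Unambiguity of $\Aa$ forces at most one skeleton per pattern on the $\Aa$-side, while finite ambiguity of $\Bb$ only bounds the number of skeletons on the $\Bb$-side. For each pattern $\pi$ and each word $w \in L_\pi$ one then obtains closed forms
\[
\inter{\Aa}(w) = c_\pi \prod_{j=1}^m p_j^{k_j}, \qquad
\inter{\Bb}(w) = \sum_{i=1}^{N_\pi} d_{\pi,i} \prod_{j=1}^m q_{\pi,i,j}^{k_j},
\]
where $c_\pi, p_j, d_{\pi,i}, q_{\pi,i,j}$ are algebraic reals built from products of transition probabilities along the fixed skeleton pieces.

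Containment then fails iff for some $\pi$ there exists $(k_1, \ldots, k_m) \in \nat^m$ satisfying the strict inequality $c_\pi \prod_j p_j^{k_j} > \sum_i d_{\pi,i} \prod_j q_{\pi,i,j}^{k_j}$. To cast this into $\langle \real, +, \cdot, <, \exp \rangle$ I would relax the integer parameters $k_j$ to real variables $x_j \ge 0$ and rewrite each $\alpha^{x_j}$ as $\exp(x_j \cdot \ln \alpha)$, where $\ln \alpha$ is definable from the algebraic constant $\alpha > 0$ using $\exp$. The resulting existential sentence is decidable by Macintyre--Wilkie, and its truth translates back into a decision for containment.

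The main obstacle is closing the gap between integer-valued parameters (arising from word lengths) and real quantification: an exponential polynomial can in principle be positive on a nonempty open set of $\real_{\ge 0}^m$ without taking a positive value at any point of $\nat^m$. The key to overcome this here is the highly asymmetric form of the inequality: the $\Aa$-side is a \emph{single} exponential monomial. Asymptotic analysis along rays $\bar x = t \bar v$ shows that the sign of the difference for large $t$ is determined by whether the $\Aa$-monomial's exponential rate exceeds every individual rate appearing in the $\Bb$-sum; whenever this happens along some rational direction $\bar v$, rounding $t \bar v$ to nearby lattice points yields integer witnesses. The genuinely delicate case is when the two sides have matching dominant rates and the decision hinges on comparing sub-exponential corrections -- and this is precisely where the full strength of the theory of real exponentiation (beyond Tarski's theory of real-closed fields) is needed.
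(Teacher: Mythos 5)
Your reduction to exponential inequalities over integer exponents and the subsequent appeal to Macintyre--Wilkie follow the same overall architecture as the paper (which works via simple cycle decompositions rather than pattern languages, but the resulting normal form $S(\pp,\qq_1,\ldots,\qq_{k'})(\xx) > S(\rr,\ss_1,\ldots,\ss_{\ell'})(\xx)$ with $\xx \in \nat^n$ is essentially what you describe). You also correctly locate the crux: a real existential sentence being satisfiable does not, by itself, yield an integer witness, and this is what must be bridged.

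However, the bridging step is where your proposal has a genuine gap. Your plan is to classify directions $\bar v$ along which the single $\Aa$-monomial dominates and then "round $t\bar v$ to nearby lattice points." This does not work as stated. First, a rational direction $\bar v$ needs no rounding at all (scale $\bar v$ to an integer vector and take $t \in \nat$), and the problematic situation is precisely when the only directions in the recession cone along which the $\Aa$-rate dominates are \emph{irrational}; rounding a single ray $t\bar v$ then fails because the lattice points near $t\bar v$ may drift away from the ray in directions where the inequality flips. What is actually needed is simultaneous Diophantine approximation, and the paper invokes a corollary of Kronecker's theorem for exactly this: if no nonzero integer vector is orthogonal to the recession cone $C$ of the "enveloping polygon," then integer translates of $C$ are dense enough that $X$ must contain a lattice point (Lemma~\ref{lemma:alternative}). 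Second, and more importantly, you do not give a \emph{decision procedure} in the remaining case. When the dominant rates match, you wave at "the full strength of the theory of real exponentiation," but Macintyre--Wilkie decides sentences over $\real$, not over $\nat$; something further must happen. The paper's mechanism here is a recursive slicing procedure: by the contrapositive of the Kronecker argument together with a Minkowski--Weyl decomposition (Lemma~\ref{lemma:orthog}), if $X$ has no integer point then some integer $\dd$ makes $\{\dd^\top \xx : \xx \in X\}$ bounded; one then enumerates the finitely many integer values $i$ of $\dd^\top \xx$, restricts to the affine lattice slice $\dd^\top\xx = i$, and recurses in dimension $n-1$. Macintyre--Wilkie is used only as a subroutine to verify the boundedness condition for each candidate $(\dd,a,b)$, and semi-decidability of the positive side is trivial by enumeration. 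Without some version of this (Kronecker density + slicing + recursion, or an equivalent substitute), your proof does not close.

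Two smaller remarks. (i) The theorem covers two asymmetric cases, and you only address $\Aa$ unambiguous, $\Bb$ finitely ambiguous; the other direction ($\Aa$ finitely ambiguous, $\Bb$ unambiguous) is much easier and in fact requires no Schanuel at all -- after normalising, one just compares componentwise maxima and checks the origin -- but it still needs to be stated. (ii) Your "sub-exponential corrections" framing is slightly off: each term here is a pure exponential monomial $p\,q_1^{x_1}\cdots q_n^{x_n}$, so the difficulty is not asymptotic expansion in a single variable but the geometry of the solution region in $\real^n$ versus the lattice $\int^n$.
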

We prove Theorem~\ref{theorem:decidability} by reduction to an arithmetical decision problem involving integer exponentiation,
which we solve by a process of relaxation and rounding.
The dependence on Schanuel's conjecture in this theorem is due to our use
of the result of Macintyre and Wilkie~\cite{mw96} that
the theory of real exponentiation is decidable subject to Schanuel's conjecture.
The decidability of containment between two finitely ambiguous automata remains open 
and appears to involve difficult number-theoretic considerations.

\paragraph{Organisation of the paper}
In Section~\ref{sec:definitions}, we give the formal definition of PA, the notion of ambiguity and classical results that will be useful in the paper. 
In Section~\ref{sec:problems}, we recall the problems under consideration.  In
Section~\ref{sec:threshold} we show that the gap
emptiness problem for polynomially ambiguous PA is decidable (Theorem~\ref{theorem:threshold}). Then, in Sections~\ref{sec:decidabilitykvs1} and~\ref{sec:decidability1vsk}, we prove Theorem~\ref{theorem:decidability}. In
Section~\ref{sec:decidabilitykvs1}, we explain how to translate the containment
problem into a problem about the existence of integral exponents for certain
exponential inequalities. Using this formalism, we prove that the containment
problem for $\Aa$ and $\Bb$, as stated above, is decidable if $\Aa$ is finitely
ambiguous and $\Bb$ is unambiguous.  In Section~\ref{sec:decidability1vsk}, we
tackle the more challenging direction and prove that the containment problem is
also decidable if $\Aa$ is unambiguous and $\Bb$ is finitely ambiguous.
Finally, in Section~\ref{sec:undecidability}, we prove that the emptiness and the containment
problems are undecidable provided that one of the automata is allowed to be linearly ambiguous (Theorem~\ref{theorem:undecidability}).

\section{Preliminaries}
\label{sec:definitions}
In this section, we define probabilistic automata and recall some classical
results.

\paragraph{Notation} 
We use boldface lower-case letters, e.g., $\aa, \bb, \dots$, to denote
vectors and upper-case letters, e.g., $M, N,\dots$, for matrices.
For a vector $\aa$, we write $a_i$
for its $i$-th component, and $\aa^\top$ for its transpose.

\subsection{Probabilistic automata and ambiguity} \label{sec:ambiguity}
For a finite set~$S$, we say that a function $f : S \to \rat_{\geq 0}$ is a
\emph{distribution over $S$} if $\sum_{s \in S}f(s) \le 1$.  We write
$\distr(S)$ for the set of all distributions over $S$.  We also say that a
vector $\dd = (d_1, d_2, \dots, d_n) \in \mathbb{Q}^n_{\ge 0}$ of non-negative
rationals is a distribution if~$\sum_{i=1}^n d_i \leq 1$.

A \emph{probabilistic automaton (PA)} $\Aa$ is a tuple
$(\Sigma,Q,\delta,\iota,F)$, where:
\begin{itemize}
    \item $\Sigma$ is the finite alphabet,
    \item $Q$ is the finite set of states,
    \item $\delta : Q \times \Sigma \to \distr(Q)$ is the (probabilistic)
        transition function,
    \item $\iota \in \distr(Q)$ is the initial distribution, and 
    \item $F \subseteq Q$ is the set of final states.
\end{itemize}
We write $\delta(q,a,p)$ instead of $\delta(q,a)(p)$ for the \emph{probability
of moving from $q$ to $p$ reading $a$}. Consider the word $w  = a_1 \dots a_n
\in \Sigma^*$.  A \emph{run $\rho$ of $\Aa$ over} $w = a_1 \dots a_n$ is a
sequence of transitions $(q_{0},a_1,q_{1}),(q_{1},a_2,q_{2}), \dots,
(q_{n-1},a_n,q_{n})$ where $\delta(q_{i-1},a_i,q_{i}) > 0$ for all $1 \leq i
\leq n$.  It is an \emph{accepting run} if $\iota(q_0) > 0$ and $q_n \in F$.
The \emph{probability of the run} $\rho$ is $\prob{\Aa}{\rho} \defequals
\iota(q_0) \cdot \prod_{i =1}^n \delta(q_{i-1},a_i,q_{i})$.

The automaton $\Aa$ realizes a function $\inter{\Aa}$ mapping words over the
alphabet $\Sigma$ to values in $[0,1]$. Formally, for all $w \in \Sigma^*$, we
set:
\(
    \inter{\Aa}(w) \defequals \sum_{\rho \in \accruns{\Aa}{w}} \prob{\Aa}{\rho}
\)
where $\accruns{\Aa}{w}$ is the set of all accepting runs of $\Aa$ over $w$.

\paragraph{Ambiguity}
The notion of ambiguity depends only on the structure of the underlying
automaton (i.e., whether a probability is null or not, but not on its actual
value).  An automaton $\Aa$ is said to be \textit{unambiguous} (resp.\
\textit{$k$-ambiguous}) if for all words~$w$, there is at most one accepting run
(resp.\ $k$ accepting runs) over $w$ in $\Aa$.  If an automaton is $k$-ambiguous
for some $k$, then it is said to be \textit{finitely ambiguous}.  If there
exists a polynomial~$P$, such that for every word $w$, the number of accepting
runs of $\Aa$ on $w$ is bounded by $P(|w|)$ (where $|w|$ is the length of $w$),
then $\Aa$ is said to be \textit{polynomially ambiguous}, and \textit{linearly
ambiguous} whenever the degree of~$P$ is at most~$1$.

It is well-known that if an automaton is not finitely ambiguous then it is at
least linearly ambiguous (see, for example, the criterion in~\cite[Section
3]{ws91}).  The same paper shows that if an automaton is finitely ambiguous then
it is $k$-ambiguous for $k$ bounded exponentially in the number of states of
that automaton.

We give two examples of PA and discuss their ambiguity in
Figure~\ref{fig:too-much-ambiguity}. As usual, they are depicted as graphs.  The
initial distribution is denoted by ingoing arrows associated with their
probability (when there is no such arrow, the initial probability is $0$) and
the final states are denoted by outgoing arrows.

\begin{figure}
\begin{minipage}[t]{0.35\textwidth}
\centering
\begin{tikzpicture}[scale=1]

\node[state] (q1) at (0,0) {};
\node[state] (q2) at (2.5,0) {};
\node (fant) at (-0.65,-0.65) {\small{$1$}};

\draw[->] (fant)--(q1);

\node (fant2) at (3,-0.6) {};
\draw[->] (q2)--(fant2);

\path[trans]
(q1) edge[loop above] node {\small{$a : \frac{1}{2}$}} (q1)
(q1) edge node[above] {\small{$b : 1$}} (q2)
(q2) edge[loop above] node {\small{$a,b : 1$}} (q2)
;
\end{tikzpicture}
\end{minipage}
\hfill
\begin{minipage}[t]{0.6\textwidth}
\centering
\begin{tikzpicture}[scale=1]

\node[state] (q1) at (0,0) {};
\node[state] (q2) at (2.5,0) {};
\node[state] (qt) at (-2.5,0)  {\small{$q_\bot$}};

\path[trans]
(q1) edge[loop above] node {\small{$a : \frac{1}{2}$}} (q1)
(q1) edge node[above] {\small{$b : 1$}} (q2)
(q1) edge node[above] {\small{$a : \frac{1}{2}$}} (qt)
(q2) edge[loop above] node {\small{$a,b : 1$}} (q2)
(qt) edge[loop above] node {\small{$a,b : 1$}} (qt)
;

\node (fant) at (-0.65,-0.65) {\small{$1$}};
\draw[->] (fant)--(q1);

\node (fant2) at (0.6,-0.6) {};
\draw[->] (q1)--(fant2);

\node (fant3) at (-1.9,-0.6) {};
\draw[->] (qt)--(fant3);
\end{tikzpicture}
\end{minipage}
\caption{Two PA over the alphabet $\Sigma = \{a, b\}$ are
depicted.  On the left hand side, automaton~$\Aa$ induces the function
$a^nb\Sigma^* \mapsto \frac{1}{2^n}$ and $a^* \mapsto 0$.  On the right hand
side, the automaton~$\overline{\Aa}$ induces the function $a^nb\Sigma^* \mapsto
1 - \frac{1}{2^n}$ and $a^* \mapsto 1$.  Observe that $\Aa$ is unambiguous
and~$\overline{\Aa}$ is linearly ambiguous.}
\label{fig:too-much-ambiguity}
\end{figure}
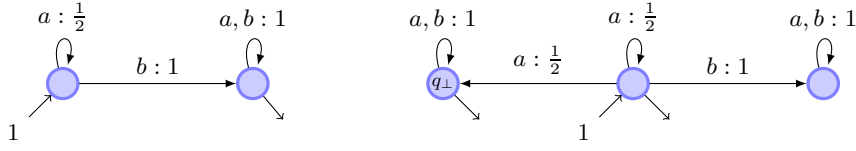

\subsection{Classical results}
\paragraph{Weighted-sum automaton} For PA~$\Aa_1, \Aa_2,
\dots, \Aa_n$ over the same alphabet, and for a discrete distribution $\dd =
(d_1, d_2, \dots, d_n)$, the \emph{weighted-sum automaton} (of~$\Aa_1, \Aa_2, \dots,
\Aa_n$ with weights~$\dd$) is defined to be the disjoint union of the $n$
automata with the initial distribution $\iota(q) \defequals d_i \cdot
\iota_i(q)$ if $q$ is a state of~$\Aa_i$, where $\iota_i$ is the initial
distribution of~$\Aa_i$.  Note that if $\Bb$ is the weighted sum of~$\Aa_1,
\Aa_2, \dots, \Aa_n$ with weights~$\dd$ then it is also a probabilistic
automaton and $\inter{\Bb}  = \sum_{i=1}^n d_i \cdot \inter{\Aa_i}$.

\paragraph{Complement automaton} For a PA~$\Aa$, we
define its \emph{complement automaton} $\overline{\Aa}$ in the following way.
First, define the PA~$\Aa'$ by modifying~$\Aa$ as follows:
\begin{itemize}
    \item add a new sink state~$q_\bot$;
    \item obtain the transition function~$\delta'$ from~$\delta$ by adding
        transitions:
        \begin{itemize}
            \item  $\delta'(q_\bot, a, q_\bot) = 1$ for all $a \in \Sigma$,
            \item $\delta'(q, a, q_\bot) = 1 - \sum_{r \in Q} \delta(q, a, r)$
                for all $(q, a) \in Q \times \Sigma$;
        \end{itemize}
    \item obtain the initial distribution~$\iota'$ from~$\iota$ by adding
        $\iota'(q_\bot) = 1 - \sum_{q \in Q} \iota(q)$.
\end{itemize}
Observe that $\inter{\Aa'} = \inter{\Aa}$, that $\sum_{r \in Q} \delta'(q, a, r)
= 1$ for all $(q, a) \in Q \times \Sigma$, and that $\sum_{q \in Q} \iota'(q) =
1$.  We obtain~$\overline{\Aa}$ from~$\Aa'$ by swapping its final and non-final
states.  As expected, it is the case that $\inter{\overline{\Aa}} = 1 -
\inter{\Aa}$.

\begin{remark}[Preserving ambiguity]
    The ambiguity of a weighted-sum automaton is at most the sum of 
    the ambiguities of the individual automata,
    and the ambiguity of a complement automaton may be larger than the 
    ambiguity of the original one
    (see Figure~\ref{fig:too-much-ambiguity}). 
\end{remark}

\section{Decision problems}\label{sec:problems}
In this work, we are interested in comparing the functions computed by
PA.  We write $\inter{\Aa} \leq \inter{\Bb}$ if ``$\Aa$ is
contained in $\Bb$'', that is if $\inter{\Aa}(w) \leq \inter{\Bb}(w)$ for all $w
\in \Sigma^*$; and we write~$\inter{\Aa} \leq \frac{1}{2}$ if $\inter{\Aa}(w) \leq
\frac{1}{2}$ for all $w \in \Sigma^*$.  We are interested in the following
decision problems for PA.
\begin{itemize}
    \item \textit{Containment problem:} Given probabilistic automata~$\Aa$
        and~$\Bb$, does $\inter{\Aa} \leq \inter{\Bb}$ hold?
    \item \textit{Emptiness problem:} Given a probabilistic automaton $\Aa$,
        does $\inter{\Aa} \leq \frac{1}{2}$ hold?
        \item \textit{Gap emptiness problem:}
\begin{itemize}
\item \textbf{Input}: $\epsilon \in \rat \cap (0,1)$ and a probabilistic automaton $\Aa$
such that either there is a word $w$ satisfying $\inter{\Aa}(w) > \frac{1}{2} + \epsilon$ or $\inter{\Aa}(w)
\leq \frac{1}{2}$ for all words $w$.
\item \textbf{Output}: does $\inter{\Aa}(w) \leq \frac{1}{2}$ hold?
\end{itemize}
\end{itemize}

We will argue that the containment and emptiness problems are both undecidable
when considered for the class of linearly ambiguous automata
(Section~\ref{sec:undecidability}). 
The emptiness problem is known to be
decidable for the class of finitely ambiguous automata~\cite{frw17}. We tackle
here the more difficult containment problem (Sections~\ref{sec:decidabilitykvs1}
and~\ref{sec:decidability1vsk}).

As for the gap emptiness problem, recall that for general PA, it is
known to be undecidable~\cite{cl89}.  We know that it is decidable for
finitely-ambiguous PA (because the emptiness problem is decidable).
Section~\ref{sec:threshold} is devoted to clarifying the decidability frontier
for polynomially-ambiguous PA.

\section{Decidability of gap emptiness for polynomially ambiguous automata}
\label{sec:threshold}

This section is devoted to proving that the gap emptiness problem is
decidable for the class of polynomially ambiguous PA.

For the rest of the section, fix a rational $\epsilon \in (0,1)$ and a PA $\Aa =
(\Sigma,Q,\delta,\iota,F)$ that is polynomially ambiguous. We also assume,
without loss of generality, that $\Aa$ is trimmed (i.e., all states are reachable
from some initial
state and can reach some final state).

The key ingredient of this section is to show that we can compute a PA $\Aa'$ such that:
\begin{itemize}
\item $\Aa'$ is finitely ambiguous,
\item for all words $w$, $\inter{\Aa'}(w) \leq \inter{\Aa}(w) \leq \inter{\Aa'}(w) + \epsilon$. 
\end{itemize} 

Using such a construction we can easily prove Theorem~\ref{theorem:threshold} reducing the question to the emptiness problem of $\Aa'$ (which is decidable since $\Aa'$ is finitely ambiguous). 
Indeed, suppose that for all $w$ we have $\Aa'(w) \leq \frac{1}{2}$. Then for all $w$, $\Aa(w) \leq \frac{1}{2} + \epsilon$. Hence there does not exist a word $w$ such that $\Aa(w) > \frac{1}{2} + \epsilon$.
Conversely, if there exists $w$ such that $\Aa'(w) > \frac{1}{2}$ then $\Aa(w) > \frac{1}{2}$. But then it is not the case that for all $w$ we have $\Aa(w) \leq \frac{1}{2}$.

\paragraph*{Construction of $\Aa'$}
Let $N$ be a positive integer ($N$ will be fixed later in the proof, depending
only on $\epsilon$ and $\Aa$). Let $\Aa'$ be the same as $\Aa$ except that on
every run, we are only allowed to make the first $N$ non-deterministic choices
(including the choice of the initial state). In other words, we can only take at
most $N$ times a transition $(p,a,q)$ of non-zero probability such that there
exists another transition $(p,a,q')$ of non-zero probability. After seeing the
$(N+1)$-th non-deterministic choice, the automaton rejects the run. This can be
achieved easily by making $N$ copies of $\Aa$. Clearly, $\Aa'$ is
finitely ambiguous: for every word, there are at most $|Q|^N$ accepting runs.
Moreover, since the runs of $\Aa'$ can be embedded in the runs of $\Aa$,
$\inter{\Aa'}(w) \leq \inter{\Aa}(w)$ for all words. It remains to prove that $\inter{\Aa}(w) \leq \inter{\Aa'}(w) + \epsilon$ for all words $w$.

\paragraph*{Proof that $\inter{\Aa}(w) \leq \inter{\Aa'}(w) + \epsilon$ for all words $w$}

The combination of the two following lemmas will prove $\inter{\Aa}(w) \leq \inter{\Aa'}(w) + \epsilon$ for all words $w$.

\begin{lemma}
\label{lemma:weights}
Any run in $\Aa$ using exactly $m$ non-deterministic choices has probability at
most $\alpha^m$, where $\alpha$ is the maximal transition probability not equal to $1$
in~$\Aa$.
\end{lemma}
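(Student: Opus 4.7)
The plan is a direct product-bound argument, with no induction needed. For a run
\[
\rho = (q_0, a_1, q_1), (q_1, a_2, q_2), \ldots, (q_{n-1}, a_n, q_n),
\]
the probability is by definition
\[
\prob{\Aa}{\rho} = \iota(q_0) \cdot \prod_{i=1}^n \delta(q_{i-1}, a_i, q_i),
\]
and every factor lies in $(0, 1]$. So the job reduces to bounding, factor by factor, the contribution of the $m$ non-deterministic steps and observing that the remaining factors are at most $1$.

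The key observation I would make is the following. At a step $i$ that is a non-deterministic choice, by definition there exists some alternative $q' \neq q_i$ with $\delta(q_{i-1}, a_i, q') > 0$. Since $\delta(q_{i-1}, a_i, \cdot)$ is a distribution, $\sum_{r \in Q} \delta(q_{i-1}, a_i, r) \leq 1$, and the presence of the strictly positive alternative forces $\delta(q_{i-1}, a_i, q_i) < 1$. Hence, by the definition of $\alpha$ as the maximum transition probability strictly less than~$1$, we have $\delta(q_{i-1}, a_i, q_i) \leq \alpha$. The same reasoning covers the initial choice: if $\iota$ has more than one state of positive mass, then $\iota(q_0) < 1$ and the value $\iota(q_0)$ can be absorbed into $\alpha$ (formally by viewing $\iota$ as the transitions out of a fresh virtual start state, or by enlarging $\alpha$ to include also initial masses below $1$).

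Combining the two observations, among the factors making up $\prob{\Aa}{\rho}$ exactly $m$ correspond to non-deterministic choices (with the convention above for the initial one) and are each bounded above by $\alpha$, while every other factor is bounded above by $1$. Multiplying yields $\prob{\Aa}{\rho} \leq \alpha^m \cdot 1 = \alpha^m$, as required.

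There is essentially no obstacle to this proof: the only point that requires any care is the bookkeeping for the initial choice, which I would handle by the virtual-start-state reduction so that the initial distribution is treated uniformly with the transition function. Everything else is a one-line inequality on the product.
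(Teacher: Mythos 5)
Your proof is correct and takes essentially the same approach as the paper's one-line argument: each of the $m$ non-deterministic steps contributes a factor $<1$ and hence $\leq\alpha$, while all other factors are $\leq 1$. You are also right to flag the bookkeeping for the initial choice (which the paper's terse proof glosses over), and your virtual-start-state fix handles it cleanly.
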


\begin{lemma}
\label{lemma:numberofruns}
There exists a polynomial $P_\Aa$ such that for all words $w$, there are at most
$P_{\Aa}(m)$ accepting runs of $\Aa$ over $w$ using exactly $m$ non-deterministic choices.
\end{lemma}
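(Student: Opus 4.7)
The plan is to bound $A_m(w)$, the number of accepting runs of $\Aa$ on $w$ with exactly $m$ non-deterministic choices, by a polynomial in $m$, leveraging the polynomial ambiguity of $\Aa$.

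I would first observe that a run with $m$ non-deterministic choices is uniquely determined by the sequence $(q_0, q_1, \ldots, q_m) \in Q^{m+1}$ consisting of the initial state and the states reached immediately after each non-deterministic choice. Indeed, between two consecutive non-deterministic steps, the behaviour of $\Aa$ is forced to be deterministic, so both the positions $i_1 < \cdots < i_m$ of the non-deterministic transitions and all intermediate states are dictated by $w$ and the $q_j$'s. Consequently, $A_m(w)$ equals the number of such valid state sequences.

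Next, I would compress each corresponding run by a pumping argument. A run with $m$ non-deterministic choices consists of $m+1$ deterministic segments separated by $m$ non-deterministic transitions; whenever such a segment has length exceeding $|Q|$, it must revisit a state, and the cycle between the two visits can be excised without disturbing any non-deterministic choice or the accepting state. After this compression, each segment has length at most $|Q|$ and the resulting run is an accepting run of $\Aa$ on a word of length at most $(m+1)\cdot |Q| + m = O(m)$. The polynomial ambiguity of $\Aa$ then bounds the number of accepting runs on any such short word by a polynomial in $m$.

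The main obstacle is that distinct state sequences on $w$ may compress to runs on different short words, so the polynomial ambiguity bound of $\Aa$ does not directly apply uniformly. To address this, I would either encode the compressed structure into an auxiliary NFA $\Bb$ (built from $\Aa$ by shortcutting bounded-length deterministic segments, with care so that $\Bb$ inherits polynomial ambiguity from $\Aa$), whose paths of length $m$ are in bijection with the valid state sequences; or group the state sequences by their non-deterministic letter projection $a_{i_1} \cdots a_{i_m}$ and exploit structural properties of polynomial ambiguity (absence of the EDA pattern) to control both the number of such projections and the number of runs per projection. Either way, the conclusion is a polynomial $P_\Aa$ depending only on $\Aa$ such that $A_m(w) \leq P_\Aa(m)$ for every word $w$.
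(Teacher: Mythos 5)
There is a genuine gap. Your compression step is fine and is indeed injective for a fixed $w$ (since the sequence $(q_0,\dots,q_m)$ of states at the non-deterministic steps is preserved), so it maps the runs you want to count to accepting runs of $\Aa$ on words of length $O(m)$. But polynomial ambiguity of $\Aa$ bounds the number of accepting runs \emph{on each single word}; different runs on $w$ will in general compress to runs on \emph{different} short words $w'$, and the number of distinct such $w'$ need not be polynomial in $m$. Summing a per-word bound $P(L)$ over up to $|\Sigma|^{O(m)}$ candidate short words gives nothing. Your two proposed fixes do not close this hole. If the auxiliary automaton $\Bb$ is over a unary (or otherwise impoverished) alphabet so that ``paths of length $m$'' is a word-independent count, $\Bb$ need not inherit polynomial ambiguity: take $\Aa$ with states $p,q,r$, transitions $p\xrightarrow{a}q$, $p\xrightarrow{a}r$, $q\xrightarrow{b}p$, $r\xrightarrow{c}p$; this $\Aa$ is unambiguous, yet the unary shortcut automaton has two length-$2$ cycles $q\to q\to q$ and $q\to r\to q$, i.e.\ an EDA pattern. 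If instead $\Bb$'s alphabet records the deterministic segment (so that $\Bb$ does inherit the absence of EDA), then you are back to a per-word bound on $\Bb$, while the runs you want to count still spread over many $\Bb$-words. For the projection-grouping variant, the number of realisable position sets $\{i_1<\dots<i_m\}$ can depend on $|w|$, and the number of runs sharing a projection can be as large as $|Q|^m$; you give no argument that either quantity, let alone their product, is polynomial in $m$, and controlling that product is precisely the lemma.

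The paper takes a structurally different route: it decomposes $\Aa$ into strongly connected components, uses the EDA criterion to show that \emph{within a single SCC} there is at most one run between any two states on a given word (hence at most $|Q|^2$ runs entering and leaving a fixed SCC on any prefix), and then bounds the total by a counting argument on trees of bounded width with at most $m$ split nodes per branch, closing with an induction on the number of states via source SCCs. That is, the polynomial bound is extracted locally from the SCC structure rather than globally from per-word ambiguity, which is what avoids the ``many short words'' problem your approach runs into.
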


Let us first prove that these two lemmas lead to the result.  By definition of~$\Aa'$, 
Lemma~\ref{lemma:weights}, and Lemma~\ref{lemma:numberofruns}, we have:
\begin{align*}
\inter{\Aa}(w) - \inter{\Aa'}(w) \le \sum_{m = N+1}^\infty \alpha^{m} P_{\Aa}(m) 
\end{align*}
for some polynomial function $P_{\Aa}$, where $\alpha$ is the maximal transition
probability not equal to $1$ in $\Aa$.

Since $\alpha < 1$, it is easily verified that the series $\sum_{m = 0}^\infty
\alpha^{m} P_{\Aa}(m)$ converges (e.g. by the d'Alembert's ratio test). This is
equivalent to 
\[
    \lim_{N \to \infty} \sum_{m = N+1}^\infty \alpha^{m} P_{\Aa}(m) = 0.
\]
Hence it suffices to take $N$ such that $\sum_{m = N+1}^\infty \alpha^{m}
P_{\Aa}(m) \leq \epsilon$. Such an $N$ is computable from $\epsilon$ and $\Aa$.
Hence, it suffices to prove Lemma~\ref{lemma:weights} and
Lemma~\ref{lemma:numberofruns}. 

Lemma~\ref{lemma:weights} is immediate: by
definition, on a run using $m$ non-deterministic choices, there are at least $m$
transitions with probability smaller than $1$, and thus with probability at most
$\alpha$ (while the other transitions have probability at most $1$).

Let us now turn to the proof of Lemma~\ref{lemma:numberofruns}. 

\begin{lemma}\label{lemma:polynomial}
For all words $w$, there are at most $2^{|Q|}((m+1)|Q|^2)^{|Q|^{3}}$ runs of
$\Aa$ over $w$ using exactly $m$ non-deterministic choices.
\end{lemma}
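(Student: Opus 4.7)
I would associate to each run with exactly $m$ non-deterministic choices a compact ``skeleton'' that records only the information not recoverable from deterministic evolution, and then bound the number of valid skeletons using the polynomial-ambiguity hypothesis.

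For a run $\rho = q_0 \trans{a_1} q_1 \cdots \trans{a_n} q_n$ of $\Aa$ on $w$ with non-deterministic choices occurring exactly at positions $0 < i_1 < i_2 < \cdots < i_m \le n$, the segments of $\rho$ strictly between two consecutive non-deterministic positions are forced (there is only one transition of positive probability at those steps). Hence $\rho$ is uniquely reconstructed from the tuple $(q_0, q_{i_1}, q_{i_2}, \ldots, q_{i_m}) \in Q^{m+1}$ consisting of the initial state together with the state reached after each non-deterministic choice; in fact, given $q_0$ and $q_{i_1},\ldots,q_{i_j}$, the next position $i_{j+1}$ is itself determined by the deterministic evolution of $\Aa$. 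Counted naively, the number of such skeletons is at most $|Q|^{m+1}$, which is exponential in $m$ and therefore too weak.

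To sharpen the count to the claimed polynomial bound, I would invoke the Weber--Seidl characterization: $\Aa$ is polynomially ambiguous iff it contains no EDA pattern, i.e., no state $p$ and word $u$ admitting two distinct cycles $p \trans{u} p$. The plan is then to work in the triple product $\Aa \times \Aa \times \Aa$, whose $|Q|^3$ states match the exponent of the target bound, and use the absence of EDA to show that each admissible skeleton $(q_0, q_{i_1}, \ldots, q_{i_m})$ is determined by (i) the subset $S \subseteq Q$ of states that actually occur in the skeleton (accounting for the factor $2^{|Q|}$) and (ii) a bounded list of at most $|Q|^3$ ``pivot'' data, each of which is an index in $\{0,1,\ldots,m\}$ together with one of $\le |Q|^2$ source--target types of non-deterministic transition (accounting for the factor $((m+1)|Q|^2)^{|Q|^3}$). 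Everything else in the skeleton should be forced by the combination of $S$ and the pivots via a pigeonhole/Ramsey-style argument in the triple product.

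The main obstacle is step (ii): converting the no-EDA hypothesis into the combinatorial statement that every skeleton admits such a compact encoding via at most $|Q|^3$ pivots. Concretely, one must argue that any skeleton too ``rich'' to fit the encoding exhibits a repetition of some triple of states in $\Aa^3$ along matching word pieces, which can then be spliced to produce two distinct cycles $p \trans{u} p$ on some word $u$, contradicting polynomial ambiguity. Making this argument tight enough to yield the exact factor $|Q|^3$ (rather than a weaker structural bound) is the delicate part of the proof, and it is where the constants in the lemma's bound are calibrated.
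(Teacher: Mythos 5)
Your proposal contains a genuine gap and does not constitute a proof. The skeleton observation at the start is correct: a run is determined by the word together with the sequence of states reached immediately after each non\-deterministic step, so there are at most $|Q|^{m+1}$ such runs naively. But the entire content of the lemma is in replacing that exponential-in-$m$ count by a polynomial one, and that is exactly the step you leave open. Your step (ii) — that the absence of the EDA pattern forces every admissible skeleton to be encodable by the set $S$ of visited states plus at most $|Q|^3$ ``pivots'' of the specified form — is not proved; you yourself flag it as ``the delicate part.'' As stated it is not even precisely formulated: you do not define what a pivot is, nor how the skeleton is reconstructed from $S$ and the pivots, nor why a skeleton that escapes the encoding yields two distinct cycles $p \trans{u} p$ on a common word $u$. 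Sketching a Ramsey/pigeonhole plan without exhibiting the splice that produces the forbidden pair of cycles leaves the core of the argument missing.

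The paper's proof is also structurally quite different, and the $|Q|^3$ exponent has a different provenance than your triple-product reading. The paper first proves that a polynomially ambiguous automaton has at most one run between any two states of a single SCC on a fixed word (via the EDA criterion), hence at most $|Q|^2$ runs on a fixed word that stay inside one SCC. It then bounds, by an explicit tree-counting argument with bounded width and a bound on the number of split nodes per branch, the number of runs inside one SCC over all prefixes that use at most $m$ non\-deterministic choices by $((m+1)|Q|^2)^{|Q|^2}$. Finally it peels off a source SCC and inducts on the number of states, and the extra factor of $|Q|$ in the exponent $|Q|^3 = |Q|^2 \cdot |Q|$ comes from this induction over the chain of SCCs, not from a product automaton $\Aa^3$. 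If you want to salvage your line of attack, you would need to make the pivot encoding and the splicing argument fully precise; the SCC decomposition in the paper is what makes the EDA hypothesis usable locally (one run per pair of states within an SCC), and your global skeleton approach currently has no analogous lever.
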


The rest of this section is devoted to proving this lemma, which will conclude the proof.

We define the strongly connected components (SCC) of $\Aa$ as the SCCs of the
underlying graphs when ignoring the labels of the transitions (only considering
transitions with positive probability).

\begin{lemma}
For all states $p,q$ in a same SCC of $\Aa$ and for all words $w$, there is at
most one run from $p$ to $q$ on $w$.
\end{lemma}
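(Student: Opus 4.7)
The plan is to argue by contradiction, exploiting the Weber--Seidl structural characterisation of polynomial ambiguity. Suppose there were two distinct runs $\rho_1, \rho_2$ from $p$ to $q$ labelled by the same word $w$, with $p$ and $q$ belonging to the same SCC $C$ of $\Aa$. Since $p$ and $q$ lie in the same SCC, there is a path $\sigma$ from $q$ back to $p$, labelled by some word $v$ (in the underlying graph where we keep only transitions of positive probability).

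Concatenating, we obtain two runs $\rho_1 \sigma$ and $\rho_2 \sigma$, both from $p$ to $p$ on the word $wv$. These two runs remain distinct because they differ on their $w$-prefix. This is exactly the ``EDA'' (exponential degree of ambiguity) pattern of Weber and Seidl~\cite{ws91}: two distinct cycles on the same state labelled by the same word. By iterating this pattern $k$ times, one obtains $2^k$ distinct cycles from $p$ to $p$ on $(wv)^k$, which, combined with a run reaching $p$ from the initial distribution (exists since $\Aa$ is trimmed) and a run from $p$ to a final state, yields an exponential number of accepting runs in the length of the input. This contradicts the standing assumption that $\Aa$ is polynomially ambiguous.

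I do not expect any real obstacle here: the main step is simply invoking the Weber--Seidl criterion, which is the characterisation already alluded to in Section~\ref{sec:ambiguity}. The only mild subtlety is noticing that the fact one uses transitions of positive probability (rather than arbitrary edges) is harmless, since ambiguity is defined from exactly this underlying graph, and that trimness of $\Aa$ lets us turn the $p$-to-$p$ cycles into genuine accepting runs so the exponential blow-up is visible at the level of $\inter{\Aa}$.
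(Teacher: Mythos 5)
Your proof is correct and takes essentially the same route as the paper: assume two distinct runs from $p$ to $q$ on $w$, close the cycle using the SCC to get two distinct $p$-to-$p$ runs on a common word, and invoke the Weber--Seidl EDA criterion to contradict polynomial ambiguity. The only cosmetic difference is that you additionally unpack why the EDA pattern yields exponential ambiguity, whereas the paper simply cites the criterion.
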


Note that by definition of SCC, any such run remains in the SCC.

\begin{proof}
We use a criterion from~\cite{ws91} that characterizes ambiguity of automata.
%
The EDA criterion states that an automaton is not polynomially ambiguous if and only if there is a word $w \in \Sigma^*$ and a state $s \in Q$ such that there are at least two different runs from $s$ to $s$ on $w$.


Suppose now that there is a word $w$ such that there are two distinct runs from $p$ to $q$ as defined in the lemma. Since $p$ and $q$ belong to the same SCC, there is a word $w'$ and a run from $q$ to $p$ on $w'$. Thus there are two distinct runs from $p$ to $p$ on $ww'$. Using the EDA criterion, $\Aa$ cannot be polynomially ambiguous and we get a contradiction. 
\end{proof}

As a direct consequence, we have:

\begin{corollary}\label{cor:SCCruns}
For all words $w$, there are at most $|Q|^2$ runs on $w$ that start and end in
any given SCC.
\end{corollary}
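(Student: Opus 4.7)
The plan is to deduce the corollary directly from the preceding lemma by a counting argument over pairs of endpoints. Fix an SCC $C$ of $\Aa$ and a word $w$. Any run of $\Aa$ on $w$ that starts and ends in $C$ is determined, in particular, by the pair $(p,q) \in C \times C$ consisting of its first and last state, since by the preceding lemma for each such pair there is at most one run from $p$ to $q$ reading $w$.

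Consequently, the number of runs on $w$ starting and ending in $C$ is bounded by the number of pairs in $C \times C$, hence by $|C|^2 \le |Q|^2$. The argument is immediate once the preceding lemma is in hand; the only minor subtlety is observing that distinct runs with the same starting and ending states would contradict the uniqueness statement (so counting by endpoint pairs is lossless), and that the uniqueness applies globally (i.e., to all runs on $w$ from $p$ to $q$, not only those that remain inside $C$), which is exactly what the preceding lemma provides. There is therefore no real obstacle to overcome here.
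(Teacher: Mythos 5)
Your proof is correct and is essentially the paper's argument: the corollary follows directly by counting endpoint pairs in $C \times C$ and invoking the uniqueness lemma. The paper states it without proof as "a direct consequence" of the preceding lemma, and your filling-in of the counting step is exactly the intended reasoning.
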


The next lemma is the last ingredient needed to prove Lemma~\ref{lemma:polynomial}.

\begin{lemma}\label{lemma:SCCruns} 
For all words $w$, the sum of all runs of $\Aa$ over all prefixes of $w$ starting
and ending in a given SCC and using at most $m$ non-deterministic choices is bounded by $((m+1)|Q|^2)^{|Q|^2}$.
\end{lemma}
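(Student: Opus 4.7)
The plan is to exploit the uniqueness property of runs within $C$ just established together with the bounded-NDC constraint in order to encode each counted run by a compact ``signature'' that lives in a set of size at most $((m+1)|Q|^2)^{|Q|^2}$.

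First, by the preceding uniqueness lemma, for every pair $(p,q)\in C\times C$ and every prefix $w'$ of $w$ there is at most one run from $p$ to $q$ on $w'$ staying inside $C$. Consequently, the sum to be bounded is the cardinality of the set
\begin{equation*}
    S \;=\; \{(p,q,i)\ :\ p,q\in C,\ 0\le i\le |w|,\ \text{the unique run from } p \text{ to } q \text{ on } w[1..i] \text{ in } C \text{ uses at most } m \text{ NDCs}\}.
\end{equation*}
Since there are at most $|C|^2\le |Q|^2$ pairs $(p,q)$, what remains is to bound, for each such pair, the number of admissible prefix-lengths~$i$, and then multiply.

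Second, I would associate to each $i\in\{0,\dots,|w|\}$ a configuration $M_i:C\times C\to\{0,1,\dots,m\}\cup\{\bot\}$ where $M_i[p,q]$ records the NDC-count of the unique run from $p$ to $q$ on $w[1..i]$ inside $C$ (or $\bot$ when no such run of NDC-count $\le m$ exists). By the uniqueness lemma, the Boolean transition matrices $B_a$ restricted to $C$ satisfy that every product $B_{w'}$ is a $\{0,1\}$-matrix, so they generate a finite monoid of size at most $2^{|C|^2}$; tracking NDC-counts refines this to a set of at most $(m+2)^{|C|^2}$ possible matrices $M_i$. Augmenting each entry with an auxiliary tag taken from a set of size at most $|Q|^2$ (to distinguish different prefix-lengths producing the same NDC-pattern, e.g. by recording the identity of the last NDC outcome or the state reached by a canonical deterministic continuation) produces a signature whose value space has size at most $((m+1)|Q|^2)^{|Q|^2}$.

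Third, the final step is to check that the signature is injective on $S$: distinct counted triples must yield distinct signatures. This rests on two ingredients, (i) uniqueness of runs within $C$, which makes each entry of $M_i$ a faithful record of ``one'' run, and (ii) monotonicity of NDC-counts along the tree of runs from a fixed start state, so that whenever the signature repeats at two distinct prefix-lengths $i<j$, every new triple $(p,q,k)\in S$ with $k\in(i,j]$ would force a change in the configuration, contradicting the repetition.

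The main obstacle will be precisely engineering the auxiliary tag so that injectivity holds while keeping the per-entry value space bounded by $(m+1)|Q|^2$, and then carrying out the pigeonhole/finite-semigroup accounting cleanly enough to match the stated quantitative bound. This is essentially a bookkeeping exercise based on the finiteness of the Boolean transition monoid within $C$ and the non-decreasing nature of NDC counts, and once the signature is set up the bound $|S|\le((m+1)|Q|^2)^{|Q|^2}$ follows.
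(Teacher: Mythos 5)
Your proposal has two genuine problems, one at the level of what is being counted and one at the level of the proof technique.

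\textbf{What is actually being counted.} You interpret the quantity to bound as the cardinality of
\[
  S=\{(p,q,i):p,q\in C,\ 0\le i\le|w|,\ \text{the unique within-}C\text{ run from }p\text{ to }q\text{ on }w[1..i]\text{ has}\le m\text{ NDCs}\}.
\]
Under that reading the lemma is simply false: take a one-state SCC with a deterministic self-loop on $a$ and $w=a^n$; then for every $i\le n$ the triple $(q,q,i)$ belongs to $S$ (with $0$ NDCs), so $|S|=n+1$, while the claimed bound is $((0+1)\cdot 1)^1=1$. This ought to have been a red flag that the lemma statement is not to be read this literally. What the paper actually bounds, as its proof and the use in Lemma~\ref{lemma:polynomial} make clear, is the number of \emph{leaves} of the tree of runs, i.e.\ the number of \emph{maximal} within-SCC partial runs: those that either read all of $w$ inside the SCC or whose last transition exits the SCC. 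In the one-state example there is exactly one such leaf, matching the bound. The whole point of the lemma is that this count is bounded independently of $|w|$, and this is precisely where your set $S$ goes wrong.

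\textbf{Why the signature/monoid argument does not close.} Even for the correct target (number of leaves), the configurations $M_i$ you define record, per pair $(p,q)$, the NDC count of the one run from $p$ to $q$ inside the SCC. This captures no information about where and how runs branch or exit the SCC: in the counterexample above, $M_i$ is constant in $i$, and in general two prefix lengths with $M_i=M_j$ tell you nothing about what happens in between. The crucial quantitative input is not ``NDC counts live in a finite set'' but rather the interaction of three structural facts about the run tree: (1) at any fixed depth there are at most $|Q|^2$ live within-SCC run prefixes (Corollary~\ref{cor:SCCruns}), i.e.\ the tree has width $\le|Q|^2$; (2) outdegree is $\le|Q|$; (3) each root-to-leaf branch has at most $m$ split nodes. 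Your proposal never isolates fact (3), which is a per-branch constraint and is not expressible as a property of the matrices $M_i$. You yourself flag that ``engineering the auxiliary tag so that injectivity holds'' is the main obstacle; that is not bookkeeping---it is the missing argument, and I do not see how to supply it from the monoid picture. The paper instead proves a self-contained combinatorial claim: a rooted tree of width and outdegree at most $k$ with at most $m$ splits per branch has at most $((m+1)k)^k$ leaves, by a double induction on $m$ and $k$ that peels off the first split node, using that at most one of its subtrees can retain width $k$. That tree lemma, applied with $k=|Q|^2$, gives the bound directly and bypasses any need for an injective signature.

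So: the route is genuinely different from the paper's, but as stated it both aims at a false reformulation and lacks the key combinatorial idea (the per-branch split bound feeding a width/outdegree induction on trees) that makes the correct statement go through.
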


\begin{proof}
The lemma can be rephrased in terms of counting the number of leaves in a tree.
More precisely: runs correspond to branches (which are artificially rooted
together); internal nodes to states in runs in the given SCC; and leaves
correspond to states outside of the given SCC (so when a run leaves the given
SCC) or states after reading the whole $w$. Non-deterministic choices correspond to nodes having several children.

We consider finite trees. The width of the tree is the maximal number of
non-leaf nodes at a same depth (the root has depth $0$) and we say that a node
is a \textit{split} if it has more than one child. Recall that by definition of
$\Aa$ the outdegree of every node is bounded by $|Q|$. By
Corollary~\ref{cor:SCCruns}, the lemma boils down to proving that the number of leaves in a tree which: (1) has width at most $|Q|^2$, (2) has outdegree at most $|Q|$, (3) has at most $m$ split nodes on each branch, is at most $((m+1)|Q|^2)^{|Q|^2}$. Notice that the bound on the outdegree does not follow from the width bound because children can be leaves. Figure~\ref{fig:treewidth} visualises the runs on a SCC as trees on some example automaton.

\begin{figure}
    \centering
        \begin{tikzpicture}
        \path
        (0,1) edge[-,line width=5pt,black!20] ++ (8,0)
        (0,2.5) edge[-,line width=5pt,black!20] ++ (8,0)
        (0,4) edge[-,line width=5pt,black!20] ++ (8,0)
        ;
        
        \node at (1,6) {$a$};
        \node at (3,6) {$a$};
        \node at (5,6) {$b$};
        \node at (7,6) {$a$};
        
        \node[circle,minimum size=4pt,inner sep=0pt,draw=black,fill=black] (r) at (-2, 2.5) {};
        
        \node[circle,minimum size=4pt,inner sep=0pt,draw=black,fill=black,label=below:{$q_1$}] (r1) at (0, 1) {};
        \node[circle,minimum size=4pt,inner sep=0pt,draw=black,fill=black,label=below:{$q_2$}] (r2) at (0, 2.5) {};
        \node[circle,minimum size=4pt,inner sep=0pt,draw=black,fill=black,label=below:{$q_3$}] (r3) at (0, 4) {};
        
        \path
        (r) edge[-,line width=1pt] (r1)
        (r) edge[-,line width=1pt] (r2)
        (r) edge[-,line width=1pt] (r3)
        ;
        
        \node[circle,minimum size=4pt,inner sep=0pt,draw=black,fill=black,label=below:{$q_3$}] (r11) at (2, 1) {};
        \node[circle,minimum size=4pt,inner sep=0pt,draw=black,fill=black,label=below:{$q_2$}] (r22) at (2, 2.5) {};
        \node[circle,minimum size=4pt,inner sep=0pt,draw=black,fill=black,label=below:{$q_1$}] (r33) at (2, 4) {};
        
        \node[circle,minimum size=4pt,inner sep=0pt,draw=red,fill=red] (p1) at (2, 4.75) {};
        \node[circle,minimum size=4pt,inner sep=0pt,draw=red,fill=red] (p2) at (2, 3.25) {};
        
        \path
        (r1) edge [-,line width=1pt] (r11)
        (r1) edge [-,line width=1pt] (r22)
        (r3) edge [-,line width=1pt] (r33)
        (r3) edge [-,line width=1pt] (p1)
        (r2) edge [-,line width=1pt] (p2)
        ;
        
        \node[circle,minimum size=4pt,inner sep=0pt,draw=black,fill=black,label=below:{$q_1$}] (r111) at (4, 1) {};
        \node[circle,minimum size=4pt,inner sep=0pt,draw=black,fill=black,label=below:{$q_2$}] (r222) at (4, 2.5) {};
        \node[circle,minimum size=4pt,inner sep=0pt,draw=black,fill=black,label=below:{$q_3$}] (r333) at (4, 4) {};
        
        \node[circle,minimum size=4pt,inner sep=0pt,draw=red,fill=red] (p11) at (4, 1.75) {};
        \node[circle,minimum size=4pt,inner sep=0pt,draw=red,fill=red] (p22) at (4, 0.25) {};
        
        \path
        (r11) edge [-,line width=1pt] (r111)
        (r33) edge [-,line width=1pt] (r222)
        (r33) edge [-,line width=1pt] (r333)
        (r22) edge [-,line width=1pt] (p11)
        (r11) edge [-,line width=1pt] (p22)
        ;
        
        \node[circle,minimum size=4pt,inner sep=0pt,draw=black,fill=black,label=below:{$q_1$}] (r2222) at (6, 2.5) {};
        \node[circle,minimum size=4pt,inner sep=0pt,draw=black,fill=black,label=below:{$q_3$}] (r3333) at (6, 4) {};
        
        \node[circle,minimum size=4pt,inner sep=0pt,draw=red,fill=red] (p111) at (6, 4.75) {};
        \node[circle,minimum size=4pt,inner sep=0pt,draw=red,fill=red] (p222) at (6, 1.75) {};
        
        \path
        (r222) edge [-,line width=1pt] (r2222)
        (r222) edge [-,line width=1pt] (r3333)
        (r111) edge [-,line width=1pt] (p222)
        (r333) edge [-,line width=1pt] (p111)
        ;
        
        \node[circle,minimum size=4pt,inner sep=0pt,draw=red,fill=red,label=below:{$q_2$}] (r11111) at (8, 1) {};
        \node[circle,minimum size=4pt,inner sep=0pt,draw=red,fill=red,label=below:{$q_3$}] (r22222) at (8, 2.5) {};
        \node[circle,minimum size=4pt,inner sep=0pt,draw=red,fill=red,label=below:{$q_1$}] (r33333) at (8, 4) {};
        
        \node[circle,minimum size=4pt,inner sep=0pt,draw=red,fill=red] (p1111) at (8, 3.25) {};
        
        \path
        (r2222) edge [-,line width=1pt] (r11111)
        (r2222) edge [-,line width=1pt] (r22222)
        (r3333) edge [-,line width=1pt] (r33333)
        (r3333) edge [-,line width=1pt] (p1111)
        ;
        \end{tikzpicture}
\caption{In this example the set of states is $Q = Q_1 \cup P$, where $Q_1 = \{q_1, q_2, q_3\}$ are states in the given SCC and $P$ are the remaining states.
We consider runs over the word $w = aaba$.
The transitions are depicted only on the picture.
The width is $3$ and this bound is highlighted with grey shadows. The inner nodes are coloured black and the leaves are coloured red. The unlabelled leaves have states from $P$.}\label{fig:treewidth}
\end{figure}
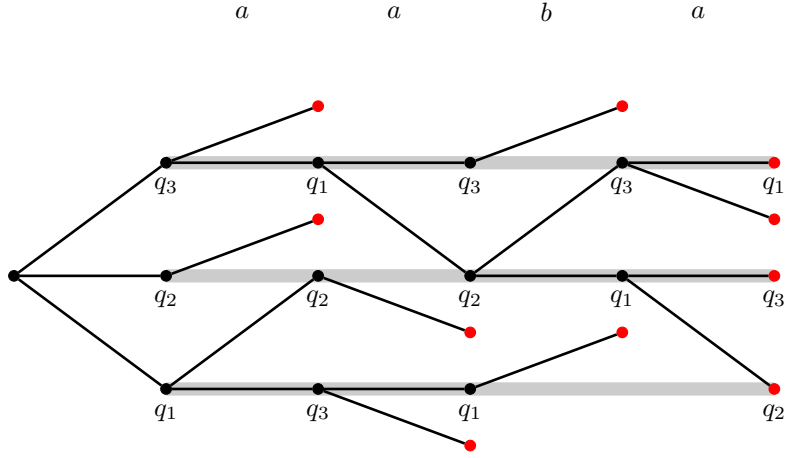

We will show a more general claim, which also proves Lemma~\ref{lemma:SCCruns}.

\begin{claim}
    For all non-negative integers $k$ and $m$, the number of leaves in a tree which
    has
    \begin{enumerate}
        \item[(1)] width and outdegree at most $k$ and
        \item[(2)] at most $m$ split nodes on each branch
    \end{enumerate}
    is at most $((m+1)k)^k$.
\end{claim}
The proof of the claim is by induction on $m$ and $k$. The base case $m=0$ is
immediate, since a tree with no split is a path and has one leaf. Moreover the
case $k=0$ corresponds to a tree consisting of only one node and the induction
hypothesis is also satisfied. Now, consider a tree with positive parameters $m$
and~$k$. We will focus on the first split node, that is
the node where the single branch from the root splits for
the first time. Let us call it $v$. By assumption, $v$ has at most $k$ children.
Furthermore, it is not hard to see that there is
at most one subtree of $v$ with width $k$. Hence,
the number of leaves in the tree is bounded by the sum of the leaves
of
\begin{itemize}
    \item one subtree of $v$ with width $k$ and at most $m-1$ split notes, and
    \item $k$ subtrees of $v$ with width at most $k-1$ and at most $m-1$ split nodes.
\end{itemize}
Overall, by the induction
hypothesis, the number of leaves in the tree is thus bounded by $(mk)^k +
k(m(k-1))^{k-1}$ which is smaller than $((m+1)k)^k$.
\end{proof}

\begin{proof}[Proof of Lemma~\ref{lemma:polynomial}]
We prove the lemma by induction on the number of states in $\Aa$. The case for one state is immediate. Let $w$ be a word and consider now a source-SCC in $\Aa$ that is to say an SCC such that there is no transition from any other SCC to this one. Such a source-SCC always exists.
By Lemma~\ref{lemma:SCCruns}, there are at most $((m+1)|Q|^2)^{|Q|^2}$ runs on a
prefix of $w$ in the source-SCC, using at most $m$ non-deterministic choices. 
If any run is continued after leaving this source-SCC we can bound the number of runs that arise from it by the induction hypothesis because the number of states has decreased.
Since every run either starts in this source-SCC (and if it leaves it, it does not go back to it) or never reaches it,
by induction hypothesis, there are at most
\begin{multline*}
    ((m+1)|Q|^2)^{|Q|^2} 2^{|Q|-1}((m+1)(|Q|-1)^2)^{(|Q|-1)^{3}} \\
    + 2^{|Q|-1}((m+1)(|Q|-1)^2)^{(|Q|-1)^{3}}
\end{multline*}
runs on $w$ in $\Aa$ using at most $m$ non-deterministic choices.
This concludes the proof since:
\begin{multline*}
  ((m+1)|Q|^2)^{|Q|^2} 2^{|Q|-1}((m+1)(|Q|-1)^2)^{(|Q|-1)^{3}} \\
     \shoveright{+ 2^{|Q|-1}((m+1)(|Q|-1)^2)^{(|Q|-1)^{3}}} \\
     \begin{aligned}
       & \leq \quad 2^{|Q|-1}((m+1)|Q|^2)^{|Q|^2 (1 + |Q| - 1)} + 2^{|Q|-1}((m+1)|Q|^2)^{|Q|^{3}} \\
       & = \quad 2^{|Q|}((m+1)|Q|^2)^{|Q|^{3}}
     \end{aligned}
\end{multline*}
\end{proof}

\section{Decidability of containment: the case finitely ambiguous vs. unambiguous}
\label{sec:decidabilitykvs1}
Our aim is to decide whether $\inter{\Aa} \leq \inter{\Bb}$. 
We first give a 
translation of the problem into a problem about 
the existence of integral exponents
for certain exponential inequalities. 

\paragraph{Notation} In the rest of the paper, we write $\exp(x)$ to denote
the exponential function $x \mapsto e^x$, and $\log(y)$ for the natural
logarithm function $y \mapsto \log_e(y)$. For a real number $x$ and a positive
real number $y$, we write $y^x$ for $\exp(x\log(y))$.

\subsection{Translating the containment problem into exponential inequalities}
\label{sec:equivalent}
We are going to translate the negation of the containment problem: Given two
finitely ambiguous PA $\Aa$ and $\Bb$, does there exist a
word $w$, such that $\inter{\Aa}(w) > \inter{\Bb}(w)$?  Consider two positive
integers $k$ and $n$, and vectors $\pp \in \mathbb{Q}_{> 0}^k$ and $\qq_1,
\dots, \qq_k \in \mathbb{Q}_{> 0}^n$.
We denote by $S(\pp,\qq_1,\dots,\qq_k): \mathbb{N}^n \to \mathbb{R}$ the
function associating a vector $\xx \in \mathbb{N}^n$ to
\(
	\sum_{i=1}^{k}p_i q_{i,1}^{x_1} \ldots q_{i,n}^{x_n},
\)
where $q_{i,j}$ is the $j$-th component of vector $\qq_i$.

\begin{proposition}
\label{proposition:translation}    
    Given a $k$-ambiguous automaton~$\Aa$ and an $\ell$-ambiguous
    automaton~$\Bb$, one can compute a positive integer $n$ and a finite set
    $\Delta$ of tuples $(\pp,\qq_1,\dots,\qq_{k'},\rr,\ss_1,\dots,\ss_{\ell'})$
    of vectors $\pp \in \mathbb{Q}_{>0}^{k'},\rr \in \mathbb{Q}_{>0}^{\ell'}$,
    for some $k' \leq k$ and $\ell' \leq \ell$; and $\qq_{i} \in
    \mathbb{Q}_{>0}^n, \ss_{j} \in \mathbb{Q}_{>0}^n$, for all $i$ and $j$; such
    that the following two conditions are equivalent:
    \begin{itemize}
        \item there exists $w \in \Sigma^*$ such that $\inter{\Aa}(w) >
            \inter{\Bb}(w)$,
        \item there exist $(\pp,\qq_1,\dots,\qq_{k'},
            \rr,\ss_1,\dots,\ss_{\ell'}) \in \Delta$ and $\xx \in \mathbb{N}^n$
            such that    
            \[
                S(\pp,\qq_{1},\dots,\qq_{k'})(\xx) >
                S(\rr,\ss_{1},\dots,\ss_{\ell'})(\xx).
            \]
    \end{itemize}
\end{proposition}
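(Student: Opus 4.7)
My approach rests on a structural decomposition of the accepting runs of a finitely ambiguous NFA, combined with a product construction to synchronise the decompositions of $\Aa$ and $\Bb$ on a common parameterised word.

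First, I would invoke the Weber--Seidl structural characterisation of finite ambiguity~\cite{ws91}: in a finitely ambiguous NFA the cycle structure is tame enough that every accepting run is uniquely captured by a \emph{skeleton} fixing a sequence of cycle-entry states $t_1,\ldots,t_n$, acyclic linking words $u_0,\ldots,u_n\in\Sigma^*$, and simple cycle labels $v_1,\ldots,v_n\in\Sigma^+$. Runs with the same skeleton are parameterised by a tuple $\xx=(x_1,\ldots,x_n)\in\nat^n$ recording the number of iterations of each cycle; the word read is $w(\xx)=u_0v_1^{x_1}u_1\cdots v_n^{x_n}u_n$, and the probability of the run factors as a single monomial $p\,q_1^{x_1}\cdots q_n^{x_n}$, where $p\in\rat_{>0}$ accumulates the initial distribution and the probabilities of the fixed edges, and $q_j\in\rat_{>0}$ is the probability of one turn around the cycle labelled $v_j$ at $t_j$. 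Since $\Aa$ is $k$-ambiguous, at most $k$ distinct skeletons share any given word-shape $(u_0,v_1,u_1,\ldots,v_n,u_n)$, and similarly at most $\ell$ for $\Bb$.

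To synchronise $\Aa$ with $\Bb$ on a common parameterised word, I would apply the same decomposition to the structural product NFA $\Aa\times\Bb$, whose ambiguity is at most $k\ell$ and hence finite. Each product skeleton projects to an $\Aa$-skeleton and a $\Bb$-skeleton that read the same $w(\xx)$. Enumerating, for each product word-shape, the $k'\le k$ distinct $\Aa$-skeletons and the $\ell'\le\ell$ distinct $\Bb$-skeletons sharing that shape yields monomials $p_i\,q_{i,1}^{x_1}\cdots q_{i,n}^{x_n}$ whose sum is $\inter{\Aa}(w(\xx))$ and monomials $r_j\,s_{j,1}^{x_1}\cdots s_{j,n}^{x_n}$ whose sum is $\inter{\Bb}(w(\xx))$. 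Taking $\Delta$ to be the finite set of all tuples $(\pp,\qq_1,\ldots,\qq_{k'},\rr,\ss_1,\ldots,\ss_{\ell'})$ harvested in this way yields both directions of the equivalence: any witness word decomposes as $w=w(\xx)$ for some product skeleton, and conversely any valid pairing of a tuple with an $\xx$ exhibits a concrete word on which the required strict inequality holds.

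The main obstacle will be making the skeleton decomposition rigorous: one must argue that finitely many skeletons exhaust the accepted words, that each skeleton admits a canonical identification of its iterated cycles (so distinct runs land in distinct families and the dimension $n$ is uniform within a shape), and that the monomial factorisation of probabilities is exact. The Weber--Seidl criteria applied to $\Aa\times\Bb$ rule out the pathological configurations (two distinct cycles at the same state with the same label, or otherwise overlapping cycles) that would threaten either uniqueness or the factorisation; once those are excluded, the rest is a direct unfolding of the cycle structure of the product NFA, together with the elementary observation that iterating a cycle $x$ times multiplies its transition probabilities, turning cycle-iteration counts into integer exponents in the monomials.
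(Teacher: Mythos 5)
Your decomposition idea — parameterise runs by cycle-iteration counts so that probabilities factor as monomials with integer exponents — is exactly the engine the paper uses (the paper phrases it as a spine of length $<|Q(\gamma)|^2$ plus a multiset of simple cycles, following Rackoff, rather than your linear pumping picture $u_0 v_1^{x_1}u_1\cdots v_n^{x_n}u_n$, but the mathematical content is the same). The gap is in the synchronisation step. Taking the plain product $\Aa\times\Bb$ only synchronises \emph{one} accepting run of $\Aa$ with \emph{one} accepting run of $\Bb$. The skeleton (or simple cycle decomposition) is a property of a \emph{run}, not of the word: a different accepting run of $\Aa$ on the same word $w(\xx)$ may cut the word at entirely different positions, so its decomposition does not share the word shape $(u_0,v_1,\ldots,u_n)$ you fixed. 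Consequently the claim that ``the $k'$ distinct $\Aa$-skeletons sharing that shape'' sum to $\inter{\Aa}(w(\xx))$ has no justification: nothing prevents further accepting runs of $\Aa$ on $w(\xx)$ whose skeletons do not respect your shape, and then your tuple underestimates $\inter{\Aa}$, breaking both directions of the equivalence. Weber--Seidl gives you finite ambiguity, but it does not give a common parameterisation of all accepting runs of a word.

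The paper closes exactly this gap by building, for each $(k',\ell')$, a $(k'+\ell'+3)$-component synchronized product $\Cc_{k',\ell'}$ containing: $k'$ copies of $\Aa$, $\ell'$ copies of $\Bb$, two counting automata $\Aa_{k'}$ and $\Bb_{\ell'}$ that accept precisely those words on which $\Aa$ (resp.\ $\Bb$) has exactly $k'$ (resp.\ $\ell'$) accepting runs, and a partition-tracking component forcing the $k'$ (and $\ell'$) copies to follow pairwise distinct runs. A single simple cycle decomposition of a run of $\Cc_{k',\ell'}$ then cuts the word at uniform positions across all components, yielding compatible decompositions of \emph{all} accepting runs of $\Aa$ and $\Bb$ simultaneously, with exact accounting — which is what makes $S(\pp^\gamma,\ldots)=\inter{\Aa}(w)$ and $S(\rr^\gamma,\ldots)=\inter{\Bb}(w)$ hold on the nose. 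Without the counting automata and the distinctness tracker, you cannot guarantee exhaustiveness or avoid double-counting, and the equivalence fails. Your last paragraph gestures at this as ``the main obstacle'' but offers no construction resolving it; that construction is the actual content of the proof.
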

It thus follows that to prove Theorem~\ref{theorem:decidability}, it suffices to
show decidability of the second item of
Proposition~\ref{proposition:translation} for a given element of $\Delta$ in the
cases where either $k$ or $\ell$ are equal to $1$.
The proof of Proposition~\ref{proposition:translation} is technical and is postponed to Section~\ref{sec:translation}.


\begin{example}
    Consider the following instance of the problem, where $k = n = 2$, $\ell =
    1$, and $p$ is a fixed rational number $0 \le p \le 1$:  Do there exist $x,y
    \in \nat$ such that $p\cdot \left(\frac{1}{12} \right)^x \cdot
    \left(\frac{1}{2}\right)^y + (1-p)\cdot \left(\frac{1}{3} \right)^x \cdot
    \left(\frac{1}{18}\right)^y < \left(\frac{1}{6} \right)^x \cdot
    \left(\frac{1}{6}\right)^y$. This can be rewritten as
    \[
        p\cdot \left(\frac{1}{2} \right)^x \cdot 3^y + (1-p)\cdot 2^x \cdot
        \left(\frac{1}{3}\right)^y < 1
    \]
    or equivalently, using the exponential function, as follows
    \[
        \exp(\log(p) - x\log(2) + y\log(3)) + \exp(\log(1-p) + x\log(2) -
        y\log(3)) < 1.
    \]
    Consider the set $V = \{(x,y) \in \real^2 \mid e^x + e^y < 1\}$ and denote
    by $b$ the point $(\log(p), \log(1-p))$. Let $\uu = (-\log(2), \log(2))$ and
    $\vv = (\log(3), -\log(3))$ be two vectors. See Figure~\ref{fig:example} for
    a geometric representation.  The question is now: do there exist $x,y \in
    \nat$ such that $b + x\uu + y\vv \in V$. We will show that the answer
    is yes if and only if $p \neq \frac{1}{2}$.
    
    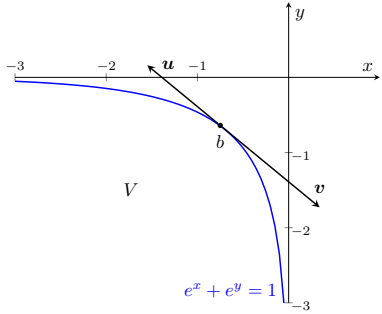
\begin{figure}
    \centering
        \begin{tikzpicture}[scale=0.7]
        \begin{axis}[
                  xmax=1,ymax=1,
                  xmin=-3,ymin=-3,
                  axis lines=middle,
                  xlabel={$x$},
              ylabel={$y$},
              xtick={-3,-2,-1},
              ytick={-3,-2,-1},
              yticklabel style = {font=\scriptsize,xshift=4ex,yshift=-0.5ex},
              xticklabel style = {font=\scriptsize,yshift=3.3ex}
        ]
        \addplot[blue,domain=-5:0,samples=100,thick]
            {(log2(1 - pow(2.7,x))/1.44)} node[above left] {$e^x + e^y = 1$};
        \node[label={180:{$V$}}] at (axis cs:-1.5,-1.5) {};
        \node[label={270:{$b$}},circle,fill,inner sep=1pt] (b)
            at (axis cs:-0.75,-0.64) {};

        \coordinate (u) at (axis direction cs:-0.8,0.8);
        \coordinate (v) at (axis direction cs:1.09,-1.09);

        \path[->]
        ($(b)+(u)$) edge[<-,thick,>=stealth] (b) node[right=1ex] {$\uu$};
        \path[->]
        ($(b)+(v)$) edge[<-,thick,>=stealth] (b) node[above=1ex] {$\vv$};

        \end{axis}
        \end{tikzpicture}
    \caption{The set $V$ is bounded by the plot $e^x + e^y = 1$ and the point
    $b$ is on that plot.}
    \label{fig:example}
    \end{figure}

    Let $C = \{(x,-x) \mid x \in \real\}$.  For $p=\frac{1}{2}$, the affine line
    $C + p$ is tangent to the blue plot and so, whatever the values of $x$ and
    $y$, $b + x\uu + y\vv$ cannot be in $V$.  For $p \neq \frac{1}{2}$, there is
    a value $\delta$ such that the whole interval strictly between $b$ and $b +
    (\delta,-\delta)$ is in $V$. Since $\log(2)$ and $\log(3)$ are rationally
    independent, the set $D = \{x\uu + y\vv \mid x,y \in \nat\}$ is a dense
    subset of~$C$, so in particular, there is a point of $D + b$ in the interval
    between $b$ and $b + (\delta,-\delta)$ and thus there exist $x,y \in \nat$
    such that $b + x\uu + y\vv \in V$.
\end{example}

\subsection{Decidability: the case finitely ambiguous vs. ambiguous}
We prove here the decidability of the containment problem 
when $\Aa$ is finitely ambiguous and 
$\Bb$ is unambiguous. The converse situation is tackled 
in Section~\ref{sec:decidability1vsk}.
\begin{proposition}
	Determining whether $\inter{\Aa} \leq \inter{\Bb}$
    is decidable when $\Aa$
    is finitely ambiguous and $\Bb$ is unambiguous.
\end{proposition}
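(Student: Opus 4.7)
The plan is to invoke Proposition~\ref{proposition:translation} and then exploit the fact that, because $\Bb$ is unambiguous, the resulting exponential inequality has at most a single monomial on its right-hand side. This reduces the problem to an elementary maximisation over $\mathbb{N}^n$.

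In detail, Proposition~\ref{proposition:translation} reduces deciding $\inter{\Aa} \not\leq \inter{\Bb}$ to deciding, for each of the finitely many tuples $(\pp,\qq_1,\dots,\qq_{k'},\rr,\ss_1,\dots,\ss_{\ell'}) \in \Delta$, whether some $\xx \in \mathbb{N}^n$ satisfies $S(\pp,\qq_1,\dots,\qq_{k'})(\xx) > S(\rr,\ss_1,\dots,\ss_{\ell'})(\xx)$. Since $\Bb$ is unambiguous, $\ell' \in \{0,1\}$ in every such tuple. The case $\ell'=0$ is trivial (empty right-hand side, strictly positive left-hand side as soon as $k'\geq 1$), as is $k'=0$. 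For the substantive case $k' \geq 1$ and $\ell'=1$, I divide both sides by $r_1\prod_j s_{1,j}^{x_j}$ and set $c_i = p_i/r_1 > 0$ and $t_{i,j} = q_{i,j}/s_{1,j} > 0$; the question becomes whether there exists $\xx \in \mathbb{N}^n$ with
\[
    F(\xx) \;:=\; \sum_{i=1}^{k'} c_i \prod_{j=1}^n t_{i,j}^{x_j} \;>\; 1.
\]

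I then split on whether any $t_{i,j}$ strictly exceeds $1$. If $t_{i_0,j_0} > 1$ for some pair $(i_0,j_0)$, then taking $\xx = m \cdot e_{j_0}$ (with $e_{j_0}$ the $j_0$-th unit vector) gives $F(\xx) \geq c_{i_0}\, t_{i_0,j_0}^m \to \infty$, so the inequality holds once $m$ is large enough: the answer is YES. Otherwise $t_{i,j} \leq 1$ for every $i,j$, each summand is coordinatewise non-increasing on $\mathbb{N}^n$, and so $F$ itself is non-increasing; hence $\sup_{\xx \in \mathbb{N}^n} F(\xx) = F(\boldsymbol{0}) = \sum_i c_i$, and a solution exists if and only if $\sum_i p_i > r_1$, a decidable rational comparison. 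Iterating this dichotomy over the finite and computable set $\Delta$ decides the problem.

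The main technical content of this section is therefore Proposition~\ref{proposition:translation} itself, whose proof is postponed to Section~\ref{sec:translation}; once that is in hand, restricting $\Bb$ to be unambiguous collapses the right-hand side to a single monomial and turns the integer-exponent problem into a trivial supremum computation. In particular, neither the decidability of real exponentiation nor the relaxation-and-rounding strategy required for the symmetric direction (Section~\ref{sec:decidability1vsk}) is invoked here.
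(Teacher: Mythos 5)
Your proof is correct and is essentially the paper's own argument: both invoke Proposition~\ref{proposition:translation} to reduce to the exponential inequality $\sum_i p_i\prod_j q_{i,j}^{x_j} > r\prod_j s_j^{x_j}$ and then split on whether some $q_{i,j} > s_j$ (your $t_{i_0,j_0}>1$), in which case shooting off in coordinate $j_0$ makes the left-hand side dominate, and otherwise observe that every monomial, and hence the sum, is coordinatewise non-increasing so the supremum is attained at $\boldsymbol{0}$. The normalisation $t_{i,j}=q_{i,j}/s_{1,j}$ and the explicit handling of the degenerate $k'=0$ / $\ell'=0$ tuples are cosmetic differences rather than a different route.
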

\begin{proof}
    Let $\Aa$ be $k$-ambiguous.  Proposition~\ref{proposition:translation} shows
    that it is sufficient to decide, given an integer $n$ and positive rational
    numbers $p, q_j, r_i, s_{i,j}$ for $i\in\{1,\ldots,k\}$,
    $j\in\{1,\ldots,n\}$, whether there exists $x_1,\dots,x_n \in \nat$ such
    that
    \begin{equation} \label{eq:kvs1}
        \sum_{i=1}^k p_i q_{i,1}^{x_1} \cdots q_{i,n}^{x_n} > r
        s_{1}^{x_1} \cdots s_{n}^{x_n}.
    \end{equation}

    We consider two cases. First, assume that there exist $i$ and $j$ such that
    $q_{i,j} > s_j$. Then in that case, for a large enough $m \in \mathbb{N}$
    condition~\eqref{eq:kvs1} will be satisfied for $(x_1,\ldots,
    x_j,\ldots,x_n) = (0, \ldots, m, \ldots, 0)$. Otherwise, assume that $\max\{
    q_{i,j} \st 1 \le i \le k \} \leq s_j$ for all $1 \le j \le n$. In this
    case, if there exists a valuation of the $x_i$
    satisfying~\eqref{eq:kvs1} then $(x_1, \dots, x_n) = (0, \dots, 0)$ also
    satisfies it. It is then sufficient to test condition~\eqref{eq:kvs1}
    for $x_1 = \dots = x_n = 0$ to conclude.
\end{proof}

\subsection{Proof of Proposition~\ref{proposition:translation}}
\label{sec:translation}
We write that the first state of a run is the first state of the first transition, and the last state of a run is the last state of the last transition.
A run is \emph{simple} if every state appearing in the sequence of transitions composing the run, appears at most twice.
A \emph{cycle} is a run in which the first and the last states coincide. 
A \emph{simple cycle} is a cycle which is a simple run.

For a state~$q$, we say that a cycle is a \emph{$q$-cycle} if the first
(and hence also the last) state is~$q$.
For a run $\rho = \rho' \cdot \rho''$, such that the last state of~$\rho'$ 
(and hence also the first state of~$\rho''$) is~$q$, and for a $q$-cycle
$\omega$, the result of \emph{injecting $\omega$ (after~$\rho'$) into~$\rho$} is
the run~$\rho' \cdot \omega \cdot \rho''$. 

For a run~$\rho$, we write $Q(\rho)$ for the set of states that occur in it.
For a set of states~$P$, we write $\periods(P)$ for the set of simple cycles
in which only states in~$P$ occur.
A \emph{simple cycle decomposition} is a pair $(\gamma, \sigma)$, where 
$\gamma$ is a run of length less than~$|Q(\gamma)|^2$ and 
$\sigma : \periods(Q(\gamma)) \to \nat$.  
We say that a simple cycle decomposition $(\gamma, \sigma)$ is a 
\emph{simple cycle decomposition of a run~$\rho$} if the run~$\rho$ can be 
obtained from~$\gamma$ by injecting $\sigma(\omega)$ cycles~$\omega$, for 
every simple cycle $\omega \in \periods(Q(\gamma))$, in some order.

\begin{proposition}
\label{prop:simple_cycle}
    Every run has a simple cycle decomposition.
\end{proposition}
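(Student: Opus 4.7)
The plan is to construct the decomposition by an iterative extraction procedure. We start with $\gamma := \rho$ and $\sigma \equiv 0$, and maintain two invariants throughout: that $Q(\gamma) = Q(\rho)$, and that $\rho$ is obtainable from $\gamma$ by injecting, for each simple cycle $\omega$, $\sigma(\omega)$ copies of $\omega$ in some order. As long as $|\gamma| \geq |Q(\gamma)|^2$, the goal is to locate a simple cycle $\omega$ occurring as a contiguous subsegment of $\gamma$ whose removal preserves $Q(\gamma)$; we then delete $\omega$ from $\gamma$ and increment $\sigma(\omega)$. Since $|\gamma|$ strictly decreases at each iteration, the procedure terminates with $|\gamma| < |Q(\gamma)|^2$, which is exactly the required bound on $\gamma$.

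The crux is showing that such an $\omega$ always exists when $|\gamma| \geq |Q(\gamma)|^2$. First, a pigeonhole applied to the $|\gamma| + 1 \geq |Q(\gamma)|^2 + 1$ state occurrences in $\gamma$ produces a state $q$ appearing at least $|Q(\gamma)| + 1$ times; slicing $\gamma$ between consecutive occurrences of $q$ exposes at least $|Q(\gamma)|$ pairwise disjoint $q$-subcycles $C_1, \dots, C_m$. Call $C_i$ \emph{removable} if every state occurring in $C_i$ also occurs somewhere in $\gamma$ outside $C_i$. I claim that some $C_i$ is removable: otherwise every $C_i$ admits a \emph{private} state $r_i$ that appears in $C_i$ but nowhere else in $\gamma$; these privates are pairwise distinct (a state private to $C_i$ cannot appear in any other $C_j$), and none of them equals $q$ (which appears as an endpoint of every $C_j$, hence outside any single $C_i$), so we would collect $m + 1 \geq |Q(\gamma)| + 1$ distinct states in $Q(\gamma)$, a contradiction.

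Once a removable $q$-subcycle $C_i$ is identified, we extract inside it a simple cycle $\omega$ as a contiguous subsegment by a standard shrinking argument: whenever some state appears three or more times in a cycle, cutting between two of its consecutive occurrences produces a strictly shorter cycle, and iterating eventually yields a cycle in which every state appears at most twice, i.e.\ a simple cycle, and this cycle is still a contiguous subsegment of $C_i$ and hence of $\gamma$. Because every state of $\omega$ lies in $C_i$, and $C_i$ is removable, each such state appears in $\gamma$ outside $C_i$ and in particular outside $\omega$; thus excising $\omega$ from $\gamma$ (legitimate since $\omega$ is a cycle, so the two endpoints of the cut coincide) preserves $Q(\gamma)$. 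The main obstacle is the removable-subcycle argument of the previous paragraph, which is precisely what the threshold $|Q(\gamma)|^2$ is engineered to make available; the remainder is bookkeeping, ensuring that every extracted cycle lies in $\periods(Q(\rho)) = \periods(Q(\gamma))$ and that $\rho$ is recovered from the final $\gamma$ by injecting the extracted cycles in reverse order of extraction.
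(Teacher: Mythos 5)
Your proof is correct, and it follows the same overall strategy as the paper: iteratively peel off, while $|\gamma| \geq |Q(\gamma)|^2$, a simple cycle whose removal keeps $Q(\gamma)$ fixed, terminating because the length strictly drops, with termination yielding the required bound on the spine. The difference is in how you certify that a removable simple cycle exists. The paper asserts (citing Rackoff) that a run of length at least $|Q(\rho)|^2$ contains $|Q(\rho)|$ non-overlapping simple cycles $\omega_1,\dots,\omega_{|Q(\rho)|}$ and then argues that the sets $W_i$ of states occurring strictly inside $\omega_i$ but nowhere else in $\rho$ are disjoint with union smaller than $|Q(\rho)|$, so some $W_i$ is empty. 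You instead pigeonhole on the $|\gamma|+1 \geq |Q(\gamma)|^2+1$ state occurrences to get a state $q$ repeated $|Q(\gamma)|+1$ times, producing $|Q(\gamma)|$ disjoint $q$-subcycles (not necessarily simple), argue via private states that one of these is removable, and only then shrink it to a contiguous simple subcycle. This buys you a fully self-contained argument that avoids having to justify the existence of $|Q|$ non-overlapping \emph{simple} cycles, which the paper leaves implicit; the trade-off is a two-stage extraction (removable subcycle, then simple subcycle inside it) rather than working with simple cycles from the start. Both counting arguments are pigeonhole in disguise and give the same threshold.
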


\begin{proof}
The above result is classic, cf.\ e.g.~\cite[proof of Lemma~4.5]{Rackoff78}.  It follows by repeatedly removing from a run~$\rho$, as long as its length is at least~$|Q(\rho)|^2$, some simple cycle whose removal does not decrease the set of states~$Q(\rho)$.  To see that such a simple cycle must exist, observe that if the length of~$\rho$ is at least~$|Q(\rho)|^2$, then it contains $|Q(\rho)|$ non-overlapping simple cycles~$\omega_1$,~\ldots, $\omega_{|Q(\rho)|}$; and if~$W_i$ is the set of all states that occur strictly inside~$\omega_i$ but nowhere else in~$\rho$, then the sets~$W_1$,~\ldots, $W_{|Q(\rho)|}$ are mutually disjoint and their union has size less than~$|Q(\rho)|$, so some~$W_i$ must be empty.
\end{proof}

If $(\gamma, \sigma)$ is a simple cycle decomposition, then we refer to 
$\gamma$ as its \emph{spine} and to $\sigma$ as its 
\emph{simple cycle count}. 
Observe that the number of distinct spines is finite;
more specifically, it is at most exponential in the size of the automaton, 
as is the set of simple cycles~$\periods(Q(\gamma))$ for every 
spine~$\gamma$. 

We say that a simple cycle decomposition $(\gamma, \sigma)$ is 
\emph{accepting} if the run~$\gamma$ is.
By Proposition~\ref{prop:simple_cycle}, every accepting
run has an accepting simple cycle decomposition.
Moreover, for every accepting spine~$\gamma$, and for every function
$\sigma : \periods(Q(\gamma)) \to \nat$, there is at least one accepting 
run~$\rho$, such that $(\gamma, \sigma)$ is its simple cycle decomposition.

\begin{proposition}
\label{proposition:fixed-ambiguity}
  There is an algorithm that given a finitely ambiguous probabilistic 
  automaton~$\Aa$ and a nonnegative integer~$i$, outputs a finite automaton
  that accepts the language of words on which~$\Aa$ has exactly~$i$ accepting
  runs.
\end{proposition}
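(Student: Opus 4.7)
The plan is to build a deterministic finite automaton $\Cc_i$ whose states record, for each state~$q$ of~$\Aa$, the number of positive-probability paths from the initial support into~$q$ reading the prefix so far. Concretely, after preprocessing~$\Aa$ so that it is trimmed (every state can reach a final state of $\Aa$ via transitions of positive probability), which does not change the set of accepting runs on any word, I would define $f_u : Q \to \nat$ by letting $f_u(q)$ be the number of sequences of transitions of positive probability forming a path reading~$u$ from some~$q_0$ with $\iota(q_0) > 0$ to~$q$. The number of accepting runs of~$\Aa$ on~$w$ is then exactly $\sum_{q \in F} f_w(q)$, and the map $u \mapsto f_u$ is driven by the one-letter recursion $f_{u \cdot a}(q') = \sum_{q : \delta(q,a,q') > 0} f_u(q)$.

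The crux of the argument is a boundedness lemma: if~$\Aa$ is trimmed and $k$-ambiguous, then $f_u(q) \leq k$ for every $u \in \Sigma^*$ and every $q \in Q$. To prove it, by trimming fix some $v \in \Sigma^*$, some $p \in F$ and a single path from~$q$ to~$p$ reading~$v$ with positive probability. Concatenating this path with each of the $f_u(q)$ distinct paths reaching~$q$ on~$u$ yields $f_u(q)$ distinct accepting runs on~$u v$, hence $f_u(q) \leq \sum_{p' \in F} f_{uv}(p') \leq k$, which is the desired bound. Since the preliminaries recall that the ambiguity of a finitely ambiguous PA is at most exponential in~$|Q|$, such a~$k$ is computable directly from~$\Aa$.

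With this in hand, the DFA is immediate. Its state set is $\{0, 1, \ldots, k\}^Q$, its initial state is the indicator function of $\{q \st \iota(q) > 0\}$, its transition function implements the recursion above (and, by the lemma, never produces a value exceeding~$k$ on any state reachable from the initial state), and its accepting states are precisely those functions~$f$ with $\sum_{q \in F} f(q) = i$. By construction, the language of~$\Cc_i$ is exactly $\{w \in \Sigma^* \st \Aa \text{ has exactly } i \text{ accepting runs on } w\}$, so it suffices to output~$\Cc_i$.

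The only real obstacle is the boundedness lemma: without it the intermediate counts could in principle blow up and the construction would not yield a \emph{finite} automaton. Once boundedness is established from the trimmed and $k$-ambiguous hypotheses, the rest is a routine subset-style determinization enriched with multiplicities, and both the trimming and the computation of a suitable~$k$ are effective, so the whole procedure is algorithmic.
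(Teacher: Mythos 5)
Your proof is correct and is, at its core, the same construction as the paper's: trim $\Aa$, argue that the per-state run counts stay bounded, and build a finite automaton over the bounded count vectors with the obvious one-letter update. The main point of divergence is how the boundedness is justified. The paper simply invokes the result of Weber and Seidl that the number of active runs of a trimmed finitely ambiguous automaton is bounded exponentially in $|Q|$, and then briefly gestures at a powerset-style construction that tracks all active runs. You instead give a short self-contained proof that each coordinate $f_u(q)$ is at most the ambiguity bound $k$: by trimming, a fixed positive-probability path from $q$ to a final state on some word $v$ turns the $f_u(q)$ distinct prefixes into $f_u(q)$ distinct accepting runs on $uv$, so $f_u(q) \le k$. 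This makes the paper's informal phrase precise and lets you write the DFA state set explicitly as $\{0,\ldots,k\}^Q$, with accepting states those $f$ satisfying $\sum_{q \in F} f(q) = i$. One small technicality worth flagging: the update $f_{u a}(q') = \sum_{q : \delta(q,a,q') > 0} f_u(q)$ could, on an unreachable configuration, leave $\{0,\ldots,k\}^Q$, so you should either restrict the DFA to the reachable part of the product (which your lemma guarantees stays in range) or saturate sums at $k+1$ and route overflows to a non-accepting sink; either fix is immediate. With that noted, the argument is complete, and both effective ingredients (trimming, and a computable $k$ exponential in $|Q|$ as recalled in the preliminaries) are in place.
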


\begin{proof}
  We can assume that $\Aa$ is trimmed (i.e., all states are reachable 
  from some initial state and can reach some final state).
  It is known that the number of all active runs in a trimmed finite ambiguous automaton is bounded exponentially in the number of states in $\Aa$~\cite{ws91}.
  We can therefore add an extra component to $\Aa$ that, using the powerset construction,  
  keeps track of all active runs in the automaton. 
  Using this component, the automaton $\Aa$ can extract the number 
  of all accepting runs.
\end{proof}


We are ready to prove Proposition~\ref{proposition:translation}.

\begin{proof}[Proof of Proposition~\ref{proposition:translation}]
  First, use Proposition~\ref{proposition:fixed-ambiguity} to compute finite
  automata $\Aa_{k'}$, $0 \leq k' \leq k$, and $\Bb_{\ell'}$, 
  $0 \leq \ell' \leq \ell$, that accept the languages of words on 
  which~$\Aa$ has exactly~$k'$ accepting runs and $\Bb$ has exactly~$\ell'$ 
  accepting runs, respectively.
  
  For all~$k'$, $0 \leq k' \leq k$, and for all $\ell'$, 
  $0 \leq \ell' \leq \ell$, we perform the following.
  Consider the synchronized product of~$\Aa_{k'}$, $\Bb_{\ell'}$, 
  $k'$~copies of~$\Aa$, and $\ell'$~copies of~$\Bb$.
  Moreover, equip the synchronized product with another component, 
  a finite automaton that maintains (in its state space) the partition of 
  the $k'$ components corresponding to copies of~$\Aa$, and of the partition 
  of the $\ell'$ components corresponding to copies of~$\Bb$, that reflects 
  which of the $k'$ runs of~$\Aa$, and which of the $\ell'$ runs of~$\Bb$, 
  respectively, have been identical so far. 
  Consider as final the states of this additional component in which all sets
  in both partitions are singletons. 
  The purpose of the last component is to be able to only consider runs of
  the synchronized product in which the $k'$ components corresponding to copies
  of~$\Aa$, and the $\ell'$ components corresponding to copies of~$\Bb$, have 
  all distinct runs.
  Similarly, the purpose of the copies of~$\Aa_{k'}$ and~$\Bb_{\ell'}$ is 
  to be able to only consider runs of the synchronized product which record 
  all the $k'$ distinct accepting runs of~$\Aa$ and all of the $\ell'$ 
  distinct accepting runs of~$\Bb$, respectively, on the underlying words.
  Let~$\Cc_{k', \ell'}$ be the resulting finite automaton with $k'+\ell'+3$ 
  components.
  
    The following proposition follows by the construction of 
  automaton~$\Cc_{k', \ell'}$. 
  \begin{proposition}
  \label{proposition:what-C-accepts}
    There are exactly $k'$ distinct runs of $\Aa$ on~$w$ and exactly
    $\ell'$ runs of $\Bb$ on~$w$, if and only if 
    there is an accepting run of~$\Cc_{k', \ell'}$ on~$w$.
  \end{proposition}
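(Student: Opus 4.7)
The plan is to prove the equivalence by matching the two directions directly against the purposes of the five components of~$\Cc_{k',\ell'}$ listed in the construction, so the proof becomes essentially a bookkeeping argument.

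For the forward direction, I would assume that $\Aa$ has exactly $k'$ distinct accepting runs $\rho_1, \dots, \rho_{k'}$ on~$w$ and $\Bb$ has exactly $\ell'$ distinct accepting runs $\sigma_1, \dots, \sigma_{\ell'}$ on~$w$. I then exhibit an accepting run of~$\Cc_{k',\ell'}$ on~$w$ by running the components in parallel: the $\Aa_{k'}$~component uses its unique witnessing run (which exists since $w$ has exactly $k'$ accepting runs of~$\Aa$), the $\Bb_{\ell'}$ component analogously, the $i$-th copy of~$\Aa$ executes $\rho_i$, the $j$-th copy of~$\Bb$ executes $\sigma_j$, and the partition-tracking component evolves deterministically from these choices. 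The key point is that because the $\rho_i$'s are pairwise distinct, the partition of $\{1,\dots,k'\}$ that identifies indices whose prefixes have been equal so far must refine to singletons by the end of~$w$; the same holds for the $\sigma_j$'s. Hence the partition component ends in a final state, so the whole product run is accepting.

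For the converse, suppose $\Cc_{k',\ell'}$ has an accepting run on~$w$. The $\Aa_{k'}$ and $\Bb_{\ell'}$ components being accepting forces $\Aa$ to have exactly $k'$ accepting runs and $\Bb$ to have exactly $\ell'$ accepting runs on~$w$. The projections of the product run onto the $k'$ copies of~$\Aa$ give $k'$ accepting runs of~$\Aa$ on~$w$, and the final state of the partition-tracking component guarantees that these $k'$ runs are pairwise distinct; together with the count given by $\Aa_{k'}$, they therefore enumerate \emph{all} accepting runs of~$\Aa$ on~$w$ without repetition. The argument for $\Bb$ is symmetric, which yields the claim.

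The proof carries essentially no obstacle beyond carefully writing down these bookkeeping facts; the only point worth being explicit about is that the partition component is well defined (each partition can only coarsen-to-finer along the run, since once two copies diverge on a letter they remain distinguishable for every extension, so ending in the all-singletons state is equivalent to the $k'$, respectively~$\ell'$, runs being pairwise distinct). I would state this monotonicity observation once in a short line and then apply it in both directions.
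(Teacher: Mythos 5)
Your proof is correct and takes the same route as the paper, which simply asserts that the proposition ``follows by the construction of automaton~$\Cc_{k',\ell'}$'' without spelling out the bookkeeping; you supply precisely the argument the paper leaves implicit, including the essential monotonicity observation that the partition component can only refine along a run. One tiny wording caveat: $\Aa_{k'}$ and $\Bb_{\ell'}$ need not be deterministic, so ``its unique witnessing run'' should just be ``some accepting run'', but the existence of one such run is all the argument uses, so this is cosmetic.
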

  
  Consider the set of spines of~$\Cc_{k', \ell'}$ in which all $k'+\ell'+3$ 
  components of the last state are accepting states; let~$m$ be the size of 
  this set of \emph{accepting spines}.
  For every such accepting spine $\gamma$, we define an instance of vectors
  $\pp^\gamma, \qq^\gamma_1, \ldots, \qq^\gamma_{k'}, \rr^\gamma, \ss^\gamma_1, \ldots, \ss^\gamma_{l'}$.
  If we set $n = |\periods(Q(\gamma))|$ and (arbitrarily) enumerate all simple 
  cycles in~$\periods(Q(\gamma))$ from~$1$ to~$n$, then
  \begin{itemize}
  \item
    $\pp^\gamma$ has $k'$ components: 
    for every $i$, such that $1 \leq i \leq k'$, 
    we set $p^\gamma_i$ to be the product of the probabilities of the transitions 
    in the $i$-th copy of~$\Aa$ in spine~$\gamma$; 
  \item
    $\qq^\gamma_i$ has $n$ components: 
    for every $1 \leq j \leq n$, we set the $j$-th component of $\qq^\gamma_i$
    to be the product of the probabilities of the 
    transitions in $i$-th copy of~$\Aa$ in the $j$-th cycle in the 
    set~$\periods(Q(\gamma))$; 
  \item
    $\rr^\gamma$ has $\ell'$ components: 
    for every $i$, such that $1 \leq i \leq \ell'$, 
    we set $r^\gamma_i$ to be the product of the probabilities of the transitions 
    in the $i$-th copy of~$\Bb$ in spine~$\gamma$; 
  \item
    $\ss^\gamma_i$ has $n$ components: 
    for $1 \leq j \leq n$, we set the $j$-th component of $\ss^\gamma_i$
    to be the product of the probabilities of the 
    transitions, in $i$-th copy of~$\Bb$, in the $j$-th cycle in the 
    set~$\periods(Q(\gamma))$.
  \end{itemize}
In the special case when $k' = 0$ or $l' = 0$ we put $0$ everywhere (which can be understood as a $0$-dimensional vector).

  For an arithmetic expression~$E$ over $n$ variables~$\xx$ 
  indexed by elements of a set~$I$, and for a function 
  $\sigma : I \to \nat$, we write $E[\sigma/\xx]$ for the numerical 
  value of the expression~$E$ in which every occurrence of 
  variable~$x_i$ was replaced by~$\sigma(i)$, for every~$i \in I$. 
  The following proposition follows again by the construction of
  automaton~$\Cc_{k', \ell'}$, taking into account the following 
  observations:
  \begin{itemize}
  \item
    the probability of a run of a probabilistic automaton can be 
    determined from its simple cycle decomposition $(\gamma, \sigma)$, 
    by taking the product of the following:
    \begin{itemize}
    \item 
      the product of the probabilities of the transitions in spine~$\gamma$,
    \item
      for every simple cycle $\omega \in \periods(Q(\gamma))$, the 
      $\sigma(\omega)$-th power of the product of the probabilities of 
      the transitions in~$\omega$;
    \end{itemize}
  \item  
    in an accepting run of~$\Cc_{k', \ell'}$ on a word~$w \in \Sigma^*$, 
    the $k'$ components that correspond to $k'$ copies of~$\Aa$ all follow 
    a distinct run of~$\Aa$ on~$w$, and hence by 
    Proposition~\ref{proposition:what-C-accepts}, $\inter{\Aa}(w)$ is the 
    sum of the probabilities of the $k'$ distinct runs followed by the 
    $k'$ copies of~$\Aa$; 
  \item  
    in an accepting run of~$\Cc_{k', \ell'}$ on a word~$w \in \Sigma^*$, 
    the $\ell'$ components that correspond to $\ell'$ copies of~$\Bb$ all 
    follow a distinct run of~$\Bb$ on~$w$, and hence by 
    Proposition~\ref{proposition:what-C-accepts}, $\inter{\Bb}(w)$ is the 
    sum of the probabilities of the $\ell'$ distinct runs followed by the 
    $\ell'$ copies of~$\Bb$.
  \end{itemize}

  \begin{proposition}
  \label{proposition:S-gives-inter}
    If there is an accepting run~$\rho$ of $\Cc_{k', \ell'}$ on 
    word~$w \in \Sigma^*$, then for every simple cycle decomposition
    $(\gamma, \sigma)$ of~$\rho$, we have
    \begin{itemize}
        \item $\inter{\Aa}(w) = S(\pp^\gamma,
            \qq^\gamma_1,\ldots,\qq^\gamma_{k'})[\sigma/\xx]$ and
        \item $\inter{\Bb}(w) = S(\rr^\gamma, \ss^\gamma_1,\ldots,
            \ss^\gamma_{l'})[\sigma/\xx]$.
    \end{itemize}
  \end{proposition}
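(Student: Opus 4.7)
The plan is to lift the factorisation of run probabilities along a simple cycle decomposition from a single automaton to the synchronised product~$\Cc_{k', \ell'}$, and then recognise that the resulting sums are precisely the values of~$S$.

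First I would observe that, by the construction of~$\Cc_{k', \ell'}$, every accepting run~$\rho$ of~$\Cc_{k', \ell'}$ on~$w$ projects coordinatewise onto $k'$ pairwise distinct accepting runs $\rho_1, \ldots, \rho_{k'}$ of~$\Aa$ on~$w$ and $\ell'$ pairwise distinct accepting runs $\rho'_1, \ldots, \rho'_{\ell'}$ of~$\Bb$ on~$w$ — distinctness is enforced by the partition-tracking component, and the copies of~$\Aa_{k'}$ and~$\Bb_{\ell'}$ guarantee that these are \emph{all} the accepting runs on~$w$ in each automaton, by Proposition~\ref{proposition:what-C-accepts}. Hence
\[
    \inter{\Aa}(w) = \sum_{i=1}^{k'} \prob{\Aa}{\rho_i}, \qquad \inter{\Bb}(w) = \sum_{i=1}^{\ell'} \prob{\Bb}{\rho'_i}.
\]

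Next, enumerate the simple cycles $\omega_1, \ldots, \omega_n$ of $\periods(Q(\gamma))$ as in the definition of~$\pp^\gamma, \qq^\gamma_i, \rr^\gamma, \ss^\gamma_i$. Given a simple cycle decomposition $(\gamma, \sigma)$ of~$\rho$, I would argue that projecting to the $i$-th copy of~$\Aa$ yields a decomposition of~$\rho_i$ into the projected spine plus, for each~$j$, exactly $\sigma(\omega_j)$ insertions of the projection of~$\omega_j$. This relies on the elementary observation that in a synchronised product, injecting a cycle into a run commutes with coordinate projection. Applying the first bullet preceding the proposition (probability of a run as the product over its spine and over its simple cycles, each raised to the corresponding count), we obtain
\[
    \prob{\Aa}{\rho_i} = p^\gamma_i \cdot \prod_{j=1}^n (q^\gamma_{i,j})^{\sigma(\omega_j)},
\]
since $p^\gamma_i$ is by definition the product of transition probabilities in the $i$-th copy of~$\Aa$ inside~$\gamma$, and $q^\gamma_{i,j}$ is the product of those inside~$\omega_j$. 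Summing over~$i$ yields $\inter{\Aa}(w) = S(\pp^\gamma, \qq^\gamma_1, \ldots, \qq^\gamma_{k'})[\sigma/\xx]$, and the argument for~$\Bb$ is symmetric. The degenerate cases $k' = 0$ or $\ell' = 0$ correspond to empty sums equal to zero, matching the stated zero-dimensional convention.

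The main technical nuisance — though conceptually routine — is the bookkeeping for the projection step. One must check that the projected spine is a legitimate run in~$\Aa$, that the projection of a cycle of~$\Cc_{k',\ell'}$ is a (not necessarily simple) cycle in the $i$-th component, and that cycle injection truly commutes with projection. Note that simple cycles in the product need \emph{not} project to simple cycles in the components: what matters is only that the probability factorisation survives the projection, which it does because the $k'+\ell'+3$ components of~$\Cc_{k',\ell'}$ advance independently on each letter, so the product of transition probabilities along~$\rho_i$ equals the product of the corresponding entries extracted from the transitions of~$\rho$.
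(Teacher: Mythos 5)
Your proposal is correct and takes essentially the same route as the paper, which presents this proposition as following directly from the construction of $\Cc_{k',\ell'}$ together with three stated observations: that run probabilities factor multiplicatively along a simple cycle decomposition, and that the $k'$ (resp.\ $\ell'$) component projections of an accepting run of $\Cc_{k',\ell'}$ on~$w$ are exactly the distinct accepting runs of~$\Aa$ (resp.~$\Bb$) on~$w$. You have merely made explicit the bookkeeping the paper leaves implicit — in particular the commutation of cycle injection with coordinate projection, and the (correct) caveat that projected cycles need not be simple, since only the multiplicative factorisation is needed.
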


  Let $\Delta$ be the set of all tuples $(\pp^\gamma, \qq^\gamma_1, \ldots, \qq^\gamma_{k'}, \rr^\gamma, \ss^\gamma_1, \ldots, \ss^\gamma_{l'})$ given by all accepting spines, in particular $|\Delta|$ is finite.
  We now argue that the two conditions in the statement of Proposition~\ref{proposition:translation}
  are indeed equivalent.
  Let $w \in \Sigma^*$ be a word such that 
  $\inter{\Aa}(w) > \inter{\Bb}(w)$. 
  Let the numbers of distinct accepting runs of~$\Aa$ and~$\Bb$, 
  respectively, on~$w$ be $k'$ and~$\ell'$, respectively. 
  Then, by Proposition~\ref{proposition:what-C-accepts}, there is an 
  accepting run~$\rho$ of~$\Cc_{k', \ell'}$ on~$w$.
  Let $(\gamma, \sigma)$ be a simple cycle decomposition of~$\rho$;
  note that since $\gamma$ is an accepting spine, we have 
  $(\pp^\gamma, \qq^\gamma_1,\ldots,\qq^\gamma_{k'}, \rr^\gamma, \ss^\gamma_{1},\ldots,\ss^\gamma_{l'}) \in \Delta$.  
  It then follows by Proposition~\ref{proposition:S-gives-inter}, 
  that $\sigma$ is a non-negative integer solution of 
  $S(\pp^\gamma, \qq^\gamma_1,\ldots,\qq^\gamma_{k'})[\sigma/\xx] > S(\rr^\gamma, \ss^\gamma_1,\ldots, \ss^\gamma_{l'})[\sigma/\xx]$. 
  
  Conversely, suppose that there is a non-negative integer solution~$\sigma$ of 
  the inequality 
  \[
      S(\pp^\gamma, \qq^\gamma_1,\ldots,\qq^\gamma_{k'})[\sigma/\xx] >
      S(\rr^\gamma, \ss^\gamma_1,\ldots, \ss^\gamma_{l'})[\sigma/\xx],
  \]
  for some quadruple 
  \[
      (\pp^\gamma, \qq^\gamma_1,\ldots,\qq^\gamma_{k'}, \rr^\gamma,
      \ss^\gamma_{1},\ldots,\ss^\gamma_{l'}) \in \Delta.
  \]
  This quadruple is in~$\Delta$ because $\gamma$ is an accepting spine
  of the automaton~$\Cc_{k', \ell'}$ for some $k'$ and~$\ell'$, such 
  that $0 \leq k' \leq k$ and $0 \leq \ell' \leq \ell$.  
  Let $\rho$ be an accepting run of~$\Cc_{k', \ell'}$ that is obtained
  by injecting into~$\gamma$, in some order, $\sigma(\omega)$ copies
  of the simple cycle~$\omega$, for all $\omega \in \periods(Q(\gamma))$;
  let~$w \in \Sigma^*$ be the word underlying the run~$\rho$. 
  By Proposition~\ref{proposition:S-gives-inter}, it follows that 
  $\inter{\Aa}(w) > \inter{\Bb}(w)$. 
\end{proof}

\section{Decidability of containment: the case unambiguous vs. finitely ambiguous}
\label{sec:decidability1vsk}
In this section we will show the more challenging part of
Theorem~\ref{theorem:decidability}, i.e., that the containment problem is
decidable for $\Aa$ unambiguous and $\Bb$ finitely ambiguous.  Our proof is
conditional on the first-order theory of the reals with the exponential function
being decidable. In~\cite{mw96}, the authors show that this is the case if a
conjecture due to Schanuel and regarding transcendental number theory is true.

\begin{theorem}\label{theorem:1-vs-k}
    Determining whether $\inter{\Aa} \leq \inter{\Bb}$ is decidable when
    $\Aa$ is unambiguous and $\Bb$ is finitely ambiguous, assuming Schanuel's
    conjecture is true.
\end{theorem}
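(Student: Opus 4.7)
The plan is to reduce the containment question via Proposition~\ref{proposition:translation} to an arithmetical existence problem and then solve the latter by real relaxation combined with a case analysis over the shape of the real sublevel set. Applying Proposition~\ref{proposition:translation} with $\Aa$ unambiguous (so $k=1$) and $\Bb$ being $\ell$-ambiguous, the negation of $\inter{\Aa} \leq \inter{\Bb}$ reduces to deciding, for each tuple $(p, \qq, \rr, \ss_1, \dots, \ss_{\ell'})$ in a computable finite set $\Delta$, whether some $\xx \in \nat^n$ satisfies $p \prod_j q_j^{x_j} > \sum_i r_i \prod_j s_{i,j}^{x_j}$. Setting $c_i = r_i/p$ and $t_{i,j} = s_{i,j}/q_j$ and dividing by the left-hand side, this is equivalent to asking whether the function $F(\xx) := \sum_{i=1}^{\ell'} c_i \prod_{j=1}^n t_{i,j}^{x_j}$ satisfies $F(\xx) < 1$ for some $\xx \in \nat^n$.

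I extend $F$ to a strictly convex, real-analytic function on $\real_{\geq 0}^n$ by $F(\yy) = \sum_i c_i \exp\bigl(\sum_j y_j \log t_{i,j}\bigr)$, and denote by $V = \{\yy \in \real_{\geq 0}^n : F(\yy) < 1\}$ its open convex sublevel set. Under Schanuel's conjecture, Macintyre-Wilkie guarantees decidability of the first-order theory of $(\real, +, \cdot, <, \exp)$, so sentences about $V$ are effectively decidable; by iterating over parameters one can also compute witnesses and bounds. I split into three cases. If $V = \emptyset$, no integer witness exists. If $V$ is bounded, I compute a rational bound $B$ with $V \subseteq [0,B]^n$ by checking statements of the form ``$\forall \yy : F(\yy) < 1 \Rightarrow \|\yy\|_\infty \leq B$'' for increasing~$B$, and then exhaustively test all $\xx \in [0,B]^n \cap \nat^n$. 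If $V$ is unbounded, there exist $\yy_0, \dd \in \real_{\geq 0}^n$ with $\dd \neq 0$ such that $F(\yy_0 + t\dd)$ stays below $1$ for all large~$t$; taking the limit, this is equivalent to $\sum_j d_j \log t_{i,j} \leq 0$ for every~$i$ together with the strict inequality $\sum_{i \in I_0} c_i \exp\bigl(\sum_j y_{0,j} \log t_{i,j}\bigr) < 1$, where $I_0 := \{i : \sum_j d_j \log t_{i,j} = 0\}$. Using Baker's theorem on the $\rat$-linear independence of prime logarithms, I argue that the integer orbit $\{(\sum_j x_j \log t_{i,j})_i : \xx \in \nat^n\}$ is dense enough in the cone it generates over $\real_{\geq 0}$ that some integer $\xx$ actually lies in the open set $V$.

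The main obstacle is the rounding step in the unbounded case, particularly when the feasible direction $\dd$ must be irrational --- the generic situation, since the $\log t_{i,j}$ are typically $\rat$-linearly independent transcendentals, so rational directions satisfying all the equality conditions may fail to exist. The argument here relies on a Kronecker-style density statement: by Baker's theorem the set of prime logarithms $\{\log p\}$ appearing in the factorisations of the $t_{i,j}$ is $\rat$-linearly independent, so the $\nat^n$-orbit projected via the log-rate matrix densely fills the real cone spanned by its columns, and continuity of $F$ together with the strictness of the limit inequality then yields an integer witness inside $V$. Ensuring that the three cases cover all configurations (including unbounded directions lying on the boundary of $\real_{\geq 0}^n$) and that integer witnesses can be extracted effectively from the Macintyre-Wilkie decision procedure forms the technical heart of the proof.
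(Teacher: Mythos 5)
Your overall reduction matches the paper's: Proposition~\ref{proposition:translation} brings the problem to deciding whether some $\xx\in\nat^n$ satisfies $F(\xx)<1$ for $F$ a positive exponential sum, and Macintyre--Wilkie is invoked for the theory of real exponentiation. However, the case analysis on the real sublevel set $V$ has a genuine gap: your ``bounded / unbounded'' dichotomy is the wrong split, and the density argument you rely on in the unbounded case fails in precisely the situation that needs the most care.

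The correct dichotomy, which is what the paper's Lemma~\ref{lemma:proc1} (via Lemmas~\ref{lemma:alternative} and~\ref{lemma:orthog}) hinges on, is whether some \emph{non-zero integer vector} $\dd$ is orthogonal to the recession cone $C=\{\xx:\bb_i^\top\xx\le 0 \wedge M\xx\le 0\}$ of $V$. If no integer vector is orthogonal to $C$, Kronecker's theorem on simultaneous Diophantine approximation (not Baker's theorem) shows the translated $\nat^n$-orbit meets the open set $V$. But $V$ can easily be unbounded \emph{and} admit such a $\dd$: for instance $C$ may be a single ray $\{(t,t):t\ge 0\}$, orthogonal to $(1,-1)$. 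In that situation the orbit is \emph{not} dense in the direction of $C$, your density claim is simply false, and one cannot conclude that an integer witness exists. Note that this is not cured by $\mathbb{Q}$-linear independence of the prime logarithms: the log-rate vectors $(\log t_{i,1},\dots,\log t_{i,n})$ can be rationally dependent even when the underlying prime logs are independent (e.g.\ $t_{i,1}=2$, $t_{i,2}=4$), and in any case a non-trivial $C$ with an integer orthogonal complement blocks density regardless.

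The missing idea is a \emph{recursive dimension reduction}: when an integer $\dd$ is orthogonal to $C$, Minkowski--Weyl gives $\widehat{X}=B+C$ with $B$ bounded, so $\dd^\top\xx$ ranges over a bounded interval $[a,b]$ on $X$; one then iterates over integers $i\in[a,b]$, parametrises the affine lattice slice $\{\xx\in\int^n:\dd^\top\xx=i\}$ by $\int^{n-1}$ via a unimodular change of variables, substitutes into $F$, and recurses. Your proposal has no such recursion, and a flat three-case split cannot replace it. A further, smaller issue: you need to search over all candidate $\dd$ (checking each with the Macintyre--Wilkie oracle) rather than extract one directly, since the oracle decides sentences but does not produce witnesses; the paper handles this by dovetailing the search for $\dd,a,b$ in step~1 of the procedure.
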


\subsection{Integer programming problem with exponentiation}
Given two positive integers $n$ and $\ell$, we define $\mathcal{F}_{n,\ell}$ to
be the set of all the functions $f:\mathbb{R}^n\rightarrow \mathbb{R}$ such that
there exist $\rr \in \mathbb{Q}_{> 0}^\ell$ and $\ss_1, \dots, \ss_\ell \in
\mathbb{Q}_{> 0}^n$ such that $f(\xx) = \sum_{i=1}^{\ell}r_i s_{i,1}^{x_1}
\ldots s_{i,n}^{x_n}$.  Observe that this is just a lifting of the $S(\cdot)$
function, defined in the previous section, to real-valued parameters.  Consider
the following integer programming problem with exponentiation.

\begin{problem}[IP+EXP]\hfill
    \begin{itemize}
    \item \textbf{Input:} Three positive integers $n$, $\ell$ and $m$, a
        function 
    $f\in \mathcal{F}_{n,\ell}$, a matrix $M\in  \mathbb{Z}^{m\times n}$, and a
        vector $\cc \in \mathbb{Z}^m$.
    \item \textbf{Question:} Does there exist $\xx \in \mathbb{Z}^n$
        such that $f(\xx) < 1$ and $M\xx < \cc$?
    \end{itemize}
\end{problem}

In the sequel, we will show that the above problem is decidable.
\begin{theorem} \label{thm:main}
    The IP+EXP problem is decidable, assuming Schanuel's conjecture is true.
\end{theorem}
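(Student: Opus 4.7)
The plan is to reduce IP+EXP to the decision problem for $(\mathbb{R},+,\cdot,\exp,<)$, which is decidable by Macintyre--Wilkie~\cite{mw96} under Schanuel's conjecture. The overall strategy follows the \emph{relaxation and rounding} paradigm advertised in the introduction. For the relaxation step, I rewrite each $s_{i,j}^{x_j}$ as $\exp(x_j L_{i,j})$ where $L_{i,j}$ is the unique real satisfying $\exp(L_{i,j})=s_{i,j}$; this is a first-order definition. The real feasibility of the system $f(\xx)<1$ and $M\xx<\cc$ then becomes an existential sentence over $\mathbb{R}_{\exp}$ and is therefore decidable. If the real relaxation is infeasible, so is the integer problem and we return \emph{no}; otherwise we proceed to the rounding step.

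For rounding, observe that the real feasible set $F=\{\xx\in\mathbb{R}^n : f(\xx)<1,\ M\xx<\cc\}$ is open and convex, because $\log f$ is a log-sum-exp of linear functions (hence convex) and the polyhedral part is an open convex set. Whether $F$ is bounded is itself a first-order question over $\mathbb{R}_{\exp}$. If $F$ is bounded, I would extract an explicit bounding box $F\subseteq[-B,B]^n$ with a computable integer $B$ by a further $\mathrm{Th}(\mathbb{R}_{\exp})$ query (e.g.\ by bisection on each coordinate) and then enumerate the finitely many integer candidates inside the box. If $F$ is unbounded, I analyse its recession cone
\[
R(F)=\{\dd\in\mathbb{R}^n : M\dd\leq \mathbf{0},\ \yy_i\cdot\dd\leq 0 \text{ for all } i\},\qquad \yy_i=(L_{i,1},\ldots,L_{i,n}).
\]
If $R(F)$ contains an integer direction $\dd$ lying strictly in its relative interior, then for any real $\xx^{*}\in F$ and sufficiently large $t$, the Euclidean ball of radius $\sqrt{n}/2$ around $\xx^{*}+t\dd$ sits inside $F$ and meets $\mathbb{Z}^n$, yielding an integer witness. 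Searching for such a $\dd$ is integer linear programming over the rational part of $R(F)$.

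The main obstacle is the residual subcase: $R(F)$ admits no strict integer direction, yet $F\cap\mathbb{Z}^n$ may still be non-empty. To handle it I would exploit the algebraic structure of the exponents. Writing $L_{i,j}=\sum_k a_{i,j,k}\log p_k$ where the $p_k$ range over primes dividing any numerator or denominator of the $s_{i,j}$, the $\mathbb{Q}$-linear independence of $\{\log p_k\}$ (Baker's theorem, or elementary number theory) reduces the constraint $f(\xx)<1$ to a Diophantine question about the integer linear forms $b_{i,k}(\xx)=\sum_j a_{i,j,k}x_j$. A density argument in the spirit of the $\log 2,\log 3$ example of Section~\ref{sec:equivalent} should then guarantee that integer witnesses exist as soon as $F$ is non-empty and has the correct asymptotic shape along a rational ray of its recession cone. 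Each of the above conditions can be phrased as a first-order sentence in $\mathbb{R}_{\exp}$, so combining the case analysis with Macintyre--Wilkie gives the decision procedure.
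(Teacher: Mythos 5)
Your proposal correctly identifies the key ingredients---the Macintyre--Wilkie oracle for $\mathrm{Th}(\mathbb{R}_{\exp})$, the passage to the real relaxation, the convexity of the feasible set $F$, and the role of the recession cone---but the case analysis you set up does not cover all cases, and the place where you explicitly admit a gap is exactly where the paper's main lemma lives.

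The paper does not split on ``$F$ bounded'' versus ``$F$ unbounded''; it splits on whether some non-zero \emph{integer} vector $\dd$ is orthogonal to the recession cone $C$ of the enveloping polyhedron $\widehat X$. These two dichotomies are not the same, and yours leaves a hole. When $F$ is unbounded, you look for an integer direction in the relative interior of $R(F)$; but $R(F)$ is cut out by the irrational normals $\bb_i=(\log s_{i,1},\dots,\log s_{i,n})$, so its extremal rays can be irrational and its relative interior need not contain any integer vector even when $F\cap\mathbb{Z}^n\neq\emptyset$. Your ``residual subcase'' is therefore not an edge case but the heart of the matter, and the sketch you give for it (Baker's theorem, $\mathbb{Q}$-independence of $\log p_k$, a density argument ``in the spirit of'' the two-dimensional example) does not amount to a proof. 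The paper's Lemma~\ref{lemma:alternative} handles this cleanly and in full generality using a corollary of \emph{Kronecker's theorem on simultaneous Diophantine approximation} (Proposition~\ref{pro:kronecker}): if no non-zero integer vector is orthogonal to $C$ and $X$ is non-empty and open, then some integer point lies in $X+C\subseteq X$. This is the density argument made precise, and it is what your proposal is missing.

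The second structural gap is that, in the complementary case, the paper does \emph{not} conclude that $F$ is bounded and then enumerate a box. It only concludes (Lemma~\ref{lemma:orthog}, via Minkowski--Weyl on $\widehat X = B + C$) that the single linear form $\dd^\top\xx$ is bounded on $X$. The algorithm then \emph{slices} $X$ along the integer hyperplanes $\dd^\top\xx=i$ for $i$ in a computable interval, reparameterises each slice by an $(n-1)$-dimensional integer lattice (via a unimodular change of variables), and recurses on the lower-dimensional IP$+$EXP instance. Your bounded-box enumeration only handles the degenerate case where \emph{every} direction is bounded; in general the recursion by slicing is necessary. Relatedly, your claim that ``each of the above conditions can be phrased as a first-order sentence in $\mathbb{R}_{\exp}$'' is not right for the conditions that quantify over integer vectors $\dd$ (orthogonality to $C$, existence of a strict integer direction); the paper sidesteps this by brute-force searching over $\dd\in\mathbb{Z}^n$ and only using the $\mathbb{R}_{\exp}$ oracle to \emph{check} the boundedness of $\dd^\top X$ for each candidate $\dd$, relying on Lemma~\ref{lemma:proc1} to guarantee that the search terminates when no integer solution exists, and on semi-decidability of the positive side to terminate otherwise.

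In short: keep the oracle and the convexity observation, but replace the bounded/unbounded split by the orthogonal-integer-vector split; replace the box enumeration by the slicing-and-recursion scheme; and replace the hand-waved density argument by Kronecker's theorem applied to a spanning set of the recession cone.
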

Theorem~\ref{theorem:1-vs-k} is a direct corollary of Theorem~\ref{thm:main}.

\begin{proof}[Proof of Theorem~\ref{theorem:1-vs-k}]
    Proposition~\ref{proposition:translation} shows that, in order to prove
    Theorem~\ref{theorem:1-vs-k}, it is sufficient to decide, given an integer
    $n$ and positive rational numbers $p, r_i, q_{j}, s_{i,j}$ for
    $i\in\{1,\ldots,\ell\}$, $j\in\{1,\ldots,n\}$, whether there exist
    $x_1,\ldots,x_n \in \nat$ such that $p q_{1}^{x_1} \cdots q_{n}^{x_n} >
    \sum_{i=1}^\ell r_i s_{i,1}^{x_1} \cdots s_{i,n}^{x_n}$ or equivalently,
    whether there exist $x_1,\ldots,x_n \in \nat$ such that:
    \begin{equation} \label{eq:finite-unamb}
        \sum_{i=1}^\ell r_ip^{-1} (s_{i,1}q_1^{-1})^{x_1} \cdots
        (s_{i,n}q_n^{-1})^{x_n} < 1.
    \end{equation}

    Define $f:\mathbb{R}^n\rightarrow \mathbb{R}$ such that $f(\xx) =
    \sum_{i=1}^{\ell}r_ip^{-1} (s_{i,1}q_1^{-1})^{x_1} \cdots
    (s_{i,n}q_n^{-1})^{x_n}$.  Then, inequality~\eqref{eq:finite-unamb} becomes
    $f(\xx)<1$.  We can now apply Theorem~\ref{thm:main} with $m$ set to be $n$;
    $M$, to be $-Id$, where $Id$ is the identity matrix; and $\cc$ to be the
    null vector.
\end{proof}

Since the IP+EXP problem is semi-decidable (indeed, we can enumerate the vectors
$\xx$ in $\mathbb{Z}^n$ to find one satisfying the conditions), it will suffice
to give a semi-decision procedure to determine whether the inequalities
$f(\xx)<1 \wedge M\xx<\cc$ have no integer solution.  We give now such a
procedure.

\subsection{Semi-decision procedure for the complement of IP+EXP}
Consider as input for the IP+EXP problem three positive integers $n$, 
$\ell$, $m$, a function 
$f\in \mathcal{F}_{n,\ell}$, 
a matrix $M\in  \mathbb{Z}^{m\times n}$, and a vector $\cc \in \mathbb{Z}^m$.
Denote by $X$ the set of \textit{real} solutions of the problem, 
i.e., the set of vectors
\(
    X = \{ \xx \in \mathbb{R}^n \st f(\xx)<1 \wedge M\xx < \cc\}. 
\)

\paragraph{Proc($n$, $\ell$, $m$, $f$, $M$, $\cc$)}
\begin{enumerate}
    \item Search for a non-zero vector $\dd\in \mathbb{Z}^n$ and
        $a,b\in\mathbb{Z}$ such that $\{\dd^\top \xx \st \xx \in X\} \subseteq
        [a,b]$. Set $i = a$.
    \item If $i>b$, then stop and return $\texttt{YES}$. Otherwise, let $Y_i$ be
        the set of vectors $\xx \in  \mathbb{Z}^n$ satisfying $d_1x_1 + \cdots +
        d_nx_n = i$.  If $Y_i$ is empty, then increment $i$ and start again from
        step $2$.  Otherwise:
        \begin{enumerate}
            \item  Compute $N \in \int^{n\times (n-1)}$ and $\hh\in
                \mathbb{Z}^n$ such that $Y_i = \{ N\boldsymbol{y} + \hh \st \yy
                \in \mathbb{Z}^{n-1}\}$.
            \item If $n-1=0$ and $f(\hh)<1 \wedge M\hh <\cc$ then return
                \texttt{NO}, otherwise increment $i$ and start again from step
                $2$.
            \item If $n-1>0$ then recursively call \textbf{Proc($n-1$, $\ell$,
                $m$, $f'$, $M'$, $\cc'$)}, where $f' \in \mathcal{F}_{n-1,\ell}$
                is defined as $f'(\yy) = f(N\yy + \hh)$; $M'\in
                \mathbb{Z}^{m\times (n-1)}$, as $M' = MN$; and $\cc' \in
                \mathbb{Z}^m$, as $\cc - M\hh$. If the procedure stops and
                returns \texttt{YES} then increment $i$ and start again from
                step $2$. If the procedure stops and returns \texttt{NO} then
                return \texttt{NO}.
    \end{enumerate}
\end{enumerate}

\begin{lemma} \label{lemma:procedure}
    The above semi-decision procedure stops and
    outputs \texttt{YES} if and only if there is no integer valuation of 
    $\xx$ that satisfies the constraints, i.e. $X\cap \int^n$ is empty.
\end{lemma}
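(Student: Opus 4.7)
The proof is by strong induction on $n$, establishing both directions of the equivalence in tandem.

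\textbf{Soundness.} Assume the procedure returns YES. Then step~1 succeeded in finding $(\dd, a, b)$ with $\dd \in \mathbb{Z}^n \setminus \{0\}$ and $\dd^\top X \subseteq [a, b]$, and for every $i \in [a, b] \cap \mathbb{Z}$ with $Y_i \neq \emptyset$, either $n = 1$ and the test in step~2(b) failed, or $n > 1$ and the recursive call returned YES. Given any $\xx \in X \cap \mathbb{Z}^n$, one has $\dd^\top \xx \in [a, b] \cap \mathbb{Z}$, so $\xx \in Y_i$ for some such $i$; writing $\xx = N\yy + \hh$ with $\yy \in \mathbb{Z}^{n-1}$ and expanding $s_{i,k}^{x_k} = s_{i,k}^{h_k} \prod_j (s_{i,k}^{N_{kj}})^{y_j}$ yields $f'(\yy) = f(\xx) < 1$ together with $M'\yy = M\xx - M\hh < \cc'$. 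Thus $\yy$ is an integer solution of the recursive instance (or $\xx = \hh \in X$ when $n = 1$), contradicting the inductive hypothesis.

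\textbf{Completeness.} The principal obstacle is to show that step~1 terminates whenever $X \cap \mathbb{Z}^n = \emptyset$, i.e.\ that a suitable triple $(\dd, a, b)$ exists. The set $X$ is convex: $f$ is a sum of exponentials of affine forms and hence log-convex, so $\{\xx : f(\xx) < 1\}$ is convex, and its intersection with the open polyhedron $\{\xx : M\xx < \cc\}$ remains convex. Being lattice-free and convex, $X$ is flat in some integer direction by Khinchin's flatness theorem: there exist $\dd \in \mathbb{Z}^n \setminus \{0\}$ and $a, b \in \mathbb{Z}$ with $X \subseteq \{\xx : a \leq \dd^\top \xx \leq b\}$ (for $n = 1$ this is immediate, since any convex subset of $\mathbb{R}$ disjoint from $\mathbb{Z}$ lies in some open unit interval). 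The enumeration in step~1 therefore succeeds, each candidate being checkable via the Macintyre--Wilkie decidability of the theory of real exponentiation (conditional on Schanuel's conjecture).

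Once $(\dd, a, b)$ is fixed, for each $i \in [a, b] \cap \mathbb{Z}$ with $Y_i \neq \emptyset$, the recursive IP+EXP instance also has no integer solution: any $\yy \in \mathbb{Z}^{n-1}$ satisfying $f'(\yy) < 1$ and $M'\yy < \cc'$ would yield $N\yy + \hh \in X \cap Y_i \subseteq X \cap \mathbb{Z}^n$. When $n = 1$ the direct test in step~2(b) fails for the single candidate $\hh$, and when $n > 1$ the inductive hypothesis guarantees the recursive call returns YES. Hence the loop on $i$ exhausts $[a, b] \cap \mathbb{Z}$ without ever returning NO, and the procedure outputs YES, closing the induction.
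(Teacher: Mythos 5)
Your soundness direction (procedure returns \texttt{YES} implies $X \cap \mathbb{Z}^n = \emptyset$) essentially matches the paper: you use the change of variables $\xx = N\yy + \hh$ exactly as the paper's Lemma~\ref{lemma:proc2} does, and the induction on $n$ closes the argument.

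The interesting divergence is in the completeness direction, specifically in showing that when $X\cap\mathbb{Z}^n=\emptyset$ a suitable triple $(\dd,a,b)$ exists, so that step~1 terminates. The paper's Lemma~\ref{lemma:proc1} is proved in a bespoke way: it identifies the recession cone $C$ of the exponential--polyhedral set, uses Kronecker's theorem on simultaneous Diophantine approximation (Lemma~\ref{lemma:alternative}) to show that if no integer vector is orthogonal to $C$ then $X$ contains a lattice point, and uses the Minkowski--Weyl decomposition of the polyhedral envelope $\widehat X$ (Lemma~\ref{lemma:orthog}) to conclude boundedness of $\dd^\top X$. You instead observe that $f$ is a sum of exponentials of affine forms and hence convex, so that $X$ is an open convex set, and then invoke Khinchin's flatness theorem. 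The convexity observation is a genuinely nice structural point the paper does not make explicit, and the flatness-theorem route, if available, is more modular.

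However, you gloss over a real issue: the classical Khinchin flatness theorem is stated for \emph{compact} convex bodies, while $X$ here can very well be unbounded (indeed the paper's whole recession-cone machinery exists to deal with that). The extension of the flatness theorem to unbounded full-dimensional lattice-free convex sets is true (it can be found, e.g., in the Kannan--Lov\'asz framework), but it is not a one-line corollary of the compact case; proving it essentially requires the same kind of recession-cone and Kronecker-density analysis that the paper carries out. You also need $X$ to be open (so that non-emptiness gives full-dimensionality), which rules out the degenerate lower-dimensional counterexamples to a naive unbounded flatness statement --- this is implicitly used but not flagged. So your argument is valid as an alternative route, but only if you cite the flatness theorem in a form that covers unbounded convex sets; as written, the appeal to ``Khinchin's flatness theorem'' hides most of the work that the paper's Section~\ref{section:proof-main-lemma} actually does.
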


We prove this lemma in Section~\ref{sec:lemma:procedure}. Before,
let us shortly comment on both steps of the procedure.

\paragraph*{Step 1 of the procedure}
First, notice that the only step which might not terminate in a call to our
procedure is step~$1$.  Indeed, once $\dd$, $a$, and $b$ are fixed, there are
only finitely many integers $i \in [a,b]$ that have to be considered in step
$2$.

Moreover, for each integer vector $\dd\in \mathbb{Z}^n$ and $a,b\in\mathbb{Z}$,
the inclusion $\{\dd^\top \xx \st \xx \in X\} \subseteq [a,b]$ that needs to be
checked in step $1$ can be formulated as a decision problem in the first-order
logic over the structure $(\mathbb{R},+,\times,\exp)$.  Since this structure has
a decidable first-order theory subject to Schanuel's conjecture~\cite{mw96}, the
inclusion can be decided for each fixed $\dd$, $a$, and $b$.


\paragraph*{Step 2 of the procedure}
For fixed $\dd$, $a$, and $b$, one can compute in a standard way
the set of all integer solutions $Y_i$ 
(see, e.g.,~\cite{cohen93}), as we now explain. By
performing elementary column operations,
find a $n \times n$ unimodular (i.e. with determinant equal to $1$ or $-1$) 
integer matrix $U$ such that
\[
    (d_1,\dots,d_n) U =  (g, 0, \dots, 0),
\]
where $g=\gcd(d_1,\dots,d_n)$. Recall that $Y_i$ is the set
of integer solutions of
$\dd^\top\xx = i$.  We apply the change of variables $U\yy=\xx$ to it,
where $\yy=(y_1,\dots,y_n)$, to obtain $\dd^\top U \yy = i$.  Since $\dd^\top  U
\yy = (g,0,\dots,0)\yy = gy_1$, the transformed equation is $gy_1 = i$ and the
matrix $U$ gives a one-to-one correspondence between integer solutions $\yy$ of
the transformed equation and solutions $\xx \in Y_i$.  Now the
transformed equation has a solution if and only if $g$ divides $i$,  in which
case $y_1 = i/g$. Furthermore, in this
case a general solution from $Y_i$ can be written in the form
$\xx=N\yy'+\hh$ for $N$ a $n \times (n-1)$ integer matrix and $\hh\in
\mathbb{Z}^n$ (both derived from $U$) and $\yy'=(y_2,\dots,y_n)$.

\subsection{Proof of Lemma~\ref{lemma:procedure}}\label{sec:lemma:procedure}
The proof of Lemma~\ref{lemma:procedure} relies on the two following lemmas. The
first one is the most technical contribution of the paper and is proved in
Section~\ref{section:proof-main-lemma}. It ensures termination of step~$1$ in
the procedure when there is no integer solution. 

\begin{lemma} \label{lemma:proc1}
    If the set $X$ contains no integer point then there must exist a non-zero
    integer vector $\dd\in \mathbb{Z}^n$ and $a,b\in\mathbb{Z}$ such that
    $\{\dd^\top \xx \st \xx \in X\} \subseteq [a,b]$. 
\end{lemma}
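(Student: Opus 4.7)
The plan is to exploit the convexity of~$X$ together with a Kronecker-type density argument on the torus $T^n = \mathbb{R}^n/\mathbb{Z}^n$, reducing the lemma to a statement about the recession cone of~$X$.

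First I would observe that~$X$ is an open convex subset of~$\mathbb{R}^n$, since $f$ is a positive combination of exponentials of linear forms and hence convex. A routine computation identifies the recession cone
\[
  C \defequals \bigl\{ \uu \in \mathbb{R}^n \st \vv_i^\top \uu \le 0 \text{ for every } i, \ M\uu \le \mathbf{0} \bigr\},
\]
where $\vv_i = (\log s_{i,1},\dots,\log s_{i,n})$; and one checks $X + C \subseteq X$, because the $\le 0$ constraints in~$C$ preserve the strict inequalities defining~$X$. For any nonzero $\dd \in \mathbb{Z}^n$ the projection $\{\dd^\top \xx \st \xx \in X\}$ is an open interval of~$\mathbb{R}$, and it is bounded iff $\dd^\top \uu = 0$ for every $\uu \in C$, iff $\dd \in C^\perp$. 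The lemma therefore reduces to showing: if $X$ contains no integer point, then $C^\perp$ contains a nonzero integer vector.

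The key step is the contrapositive. Assuming $C^\perp \cap \mathbb{Z}^n = \{\mathbf{0}\}$, I would show that $X$ must contain an integer point. The image of the convex cone~$C$ in~$T^n$ is a sub-semigroup containing~$\mathbf{0}$, and in a compact abelian group the closure of such a set is a closed subgroup~$H$. By Pontryagin duality, $H = T^n$ iff no nontrivial character of~$T^n$ vanishes on $C \bmod \mathbb{Z}^n$. Triviality of the character $\chi_\dd(\boldsymbol{x}) = e^{2\pi i \dd^\top \boldsymbol{x}}$ (for $\dd \in \mathbb{Z}^n$) on $C \bmod \mathbb{Z}^n$ means $\dd^\top \uu \in \mathbb{Z}$ for every $\uu \in C$; since $t \uu \in C$ for all $t \ge 0$, this upgrades to $\dd^\top \uu = 0$, i.e., $\dd \in C^\perp$. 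Our assumption forces $\dd = \mathbf{0}$, so $H = T^n$ and the image of~$C$ is dense in~$T^n$.

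To extract an integer point I would then fix any $\xx_0 \in X$ (if $X = \emptyset$ the lemma is trivial) together with $\epsilon > 0$ such that $B(\xx_0,\epsilon) \subseteq X$. Density applied to the class of $-\xx_0$ in~$T^n$ produces $\uu \in C$ and $\boldsymbol{m} \in \mathbb{Z}^n$ with $\|\xx_0 + \uu - \boldsymbol{m}\| < \epsilon$, whence $\boldsymbol{m} \in B(\xx_0,\epsilon) + \uu \subseteq X + C \subseteq X$, contradicting $X \cap \mathbb{Z}^n = \emptyset$. This produces the desired nonzero $\dd \in C^\perp \cap \mathbb{Z}^n$, and one then takes $a = \lfloor \inf_{\xx \in X} \dd^\top \xx \rfloor$ and $b = \lceil \sup_{\xx \in X} \dd^\top \xx \rceil$. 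The main hurdle is the Pontryagin/Kronecker step, which translates the algebraic condition $C^\perp \cap \mathbb{Z}^n = \{\mathbf{0}\}$ into the topological density of $C \bmod \mathbb{Z}^n$; the positive-scaling structure of the cone is essential, as it is precisely what allows the character condition $\dd^\top \uu \in \mathbb{Z}$ to be strengthened to $\dd^\top \uu = 0$.
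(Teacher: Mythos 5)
Your overall plan --- identify the recession cone~$C$ of~$X$, establish $X + C \subseteq X$, and apply a Kronecker-type density argument in the contrapositive to produce an integer point in $X$ whenever $C^\perp$ meets $\mathbb{Z}^n$ only trivially --- is exactly the paper's strategy. The Pontryagin-duality framing of the density step (a character $\chi_\dd$ vanishes on $C\bmod\mathbb{Z}^n$ iff $\dd\perp C$, upgraded from $\dd^\top\uu\in\mathbb{Z}$ to $\dd^\top\uu=0$ via positive scaling of the cone) is equivalent to the paper's direct appeal to~\cite[Corollary 2.8]{kp97}, and that part of your argument is fine.

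There is, however, a genuine gap in the boundedness step. You assert that $\{\dd^\top\xx \st \xx \in X\}$ is bounded if and only if $\dd \in C^\perp$, and the ``if'' direction --- the one you need when you take $a=\lfloor\inf\rfloor$ and $b=\lceil\sup\rceil$ --- is simply not true for a general open convex set and its recession cone. A counterexample: $X = \{(x,y)\in\mathbb{R}^2 \st y > x^2\}$ is open and convex with recession cone $C = \{(0,v) \st v \geq 0\}$, and $\dd = (1,0)$ is orthogonal to $C$, yet $\{\dd^\top\xx \st \xx\in X\} = \mathbb{R}$. What saves the argument in the paper is the specific structure of $f$: each summand $\exp(\bb_i^\top\xx + a_i)$ is nonnegative, so $f(\xx)<1$ already forces every affine constraint $\bb_i^\top\xx + a_i < 0$. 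Hence $X$ lies inside the \emph{polyhedron} $\widehat{X} = \{\xx \st \bb_i^\top\xx + a_i\le 0 \text{ for all } i,\ M\xx\le\cc\}$, which by Minkowski--Weyl decomposes as $\widehat{X} = B + C$ with $B$ bounded; only from that polyhedral envelope does $\dd\perp C$ yield a bounded projection (this is precisely the paper's Lemma~\ref{lemma:orthog}). Adding this enveloping-polyhedron step would complete your proof; without it, the claim about boundedness has no justification.
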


This second lemma guarantees
that the recursive calls in step~$2$ guarantee the correct output.
\begin{lemma}
\label{lemma:proc2} 
    Given a non-zero vector $\dd\in \mathbb{Z}^n$ and an integer $i$, there
    exists $\xx \in \mathbb{Z}^n$ such that \(f(\xx)<1 \land M\xx < \cc \land
    \dd^\top \xx = i \) if and only if there exists $\yy \in \mathbb{Z}^{n-1}$
    such that \(f'(\yy)<1 \land M'\yy < \cc' \) where $f'$, $M'$ and $\cc'$ are
    as defined in the procedure.
\end{lemma}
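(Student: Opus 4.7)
The plan is a direct verification that $f'$, $M'$, and $\cc'$ have been engineered precisely so that the affine change of variables $\xx = N\yy + \hh$ transports the constraints on $\xx$ (subject to $\dd^\top \xx = i$) into the constraints on $\yy$. The only ingredient not immediate from the definitions is the fact that the set $Y_i$ of integer solutions of $\dd^\top \xx = i$ equals $\{N\yy + \hh : \yy \in \mathbb{Z}^{n-1}\}$; I will take this as granted, since it is exactly what Step 2(a) of the procedure computes via the unimodular transformation described immediately before the lemma. In particular the map $\yy \mapsto N\yy + \hh$ is a bijection between $\mathbb{Z}^{n-1}$ and $Y_i$.

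For the forward direction, assume $\xx \in \mathbb{Z}^n$ satisfies $f(\xx) < 1$, $M\xx < \cc$, and $\dd^\top \xx = i$. Let $\yy \in \mathbb{Z}^{n-1}$ be the unique vector with $\xx = N\yy + \hh$. Then $f'(\yy) = f(N\yy + \hh) = f(\xx) < 1$ by definition of $f'$, and using $N\yy = \xx - \hh$ I obtain $M'\yy = MN\yy = M\xx - M\hh < \cc - M\hh = \cc'$. For the backward direction, given $\yy \in \mathbb{Z}^{n-1}$ satisfying $f'(\yy) < 1$ and $M'\yy < \cc'$, set $\xx := N\yy + \hh$. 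Then $\xx \in Y_i \subseteq \mathbb{Z}^n$, so $\dd^\top \xx = i$; moreover $f(\xx) = f(N\yy + \hh) = f'(\yy) < 1$ and $M\xx = MN\yy + M\hh = M'\yy + M\hh < \cc' + M\hh = \cc$.

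The lemma is therefore essentially an unfolding of definitions once the integer parametrization of $Y_i$ is in place, so there is no real main obstacle: all the substantive work was already done when constructing $N$ and $\hh$ in Step 2(a). What this lemma provides is the correctness guarantee that the recursive call \textbf{Proc}$(n-1, \ell, m, f', M', \cc')$ faithfully encodes the sub-problem of finding an integer point of $X$ lying on the affine hyperplane $\dd^\top \xx = i$.
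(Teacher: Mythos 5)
Your proof is correct and follows the same route as the paper: both directions are a straightforward unfolding of the definitions of $f'$, $M'$, $\cc'$ via the integer parametrization $Y_i = \{N\yy + \hh : \yy \in \mathbb{Z}^{n-1}\}$. Incidentally, your statement of the backward direction is slightly cleaner than the paper's, which contains a small slip (it writes $\yy \in \mathbb{Z}^{n-1} \cap Y_i$, though $Y_i \subseteq \mathbb{Z}^n$).
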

\begin{proof}
    We want to prove that given a non-zero vector $\dd\in \mathbb{Z}^n$ and an
    integer $i$, there exists $\xx \in \mathbb{Z}^n$ such that \(f(\xx)<1 \land
    M\xx < \cc \land \dd^\top \xx = i \) if and only if there exists $\yy \in
    \mathbb{Z}^{n-1}$ such that \(f'(\yy)<1 \land M'\yy < \cc' \) where $f'$,
    $M'$ and $\cc'$ are as defined in the procedure.  Recall that $Y_i$ is the
    set of vectors $\xx$ such that $\dd^\top \xx = i$ and that $Y_i = \{
    N\boldsymbol{y} + \hh \st \yy \in \mathbb{Z}^{n-1}\}$ for some $N \in
    \int^{n\times (n-1)}$ and $\hh\in \mathbb{Z}^n$. 

    Let $\xx \in \mathbb{Z}^n$ such that 
    \(
        f(\xx)<1 \land M\xx < \cc \land \dd^\top \xx = i. 
    \)

    Then $\xx \in Y_i$ and thus there is $\yy\in \mathbb{Z}^{n-1}$ such that
    $\xx = N\yy + \hh$.  We have: $f'(\yy) = f(N\yy + \hh) = f(\xx) <1$ and
    $M'\yy = MN\yy = M(\xx - \hh) = M\xx - M\hh < \cc -M\hh = \cc'$.

    Conversely, consider $\yy\in \mathbb{Z}^{n-1} \cap Y_i$ such that
    \(f'(\yy)<1 \land M'\yy < \cc' \). Let $\xx = N\yy + \hh$. Then $\xx \in
    Y_i$ and thus $\dd^\top \xx = i$. Moreover, $f(\xx) = f(N\yy + \hh) =
    f'(\yy) < 1$ and $ M\xx = M(N\yy + \hh) = MN\yy + M\hh = M'\yy + M\hh < \cc'
    + M\hh = \cc$.
\end{proof}

We prove Lemma~\ref{lemma:procedure}.

\paragraph{First direction: when the procedure returns \texttt{YES}} Suppose
first that the semi-decision procedure stops and outputs \texttt{YES}. Then
there exist a non-zero vector $\dd\in \mathbb{Z}^n$ and $a,b\in\mathbb{Z}$ such
that $\{\dd^\top \xx \st \xx \in X\} \subseteq [a,b]$ as in step $1$, and for
all integers $i \in [a,b]$, one of the following situations occurs:
\begin{enumerate}
    \item $Y_i$ is empty,
    \item $n-1 = 0$, $Y_i = \{\hh\}$ as defined in step $2.a$ but $\hh$ is not
        an integer solution of the problem,
    \item $n-1 > 0$ and the recursive call stops and outputs \texttt{YES}.
\end{enumerate}
By definition of $\dd$, in order to prove that there is no integer solution of
the problem, we need to show that in all those cases, and for all $i \in [a,b]$,
$Y_i \cap X = \emptyset$.  It is clear for items $1$ and $2$ and we use
Lemma~\ref{lemma:proc2} and an induction for item $3$.

\paragraph{Second direction: when $X \cap \mathbb{Z}^n = \emptyset$}
If there is no integer solution then by Lemma~\ref{lemma:proc1}, there must
exist a non-zero vector $\dd\in \mathbb{Z}^n$ and $a,b\in\mathbb{Z}$ such that
$\{\dd^\top \xx \st \xx \in X\} \subseteq [a,b]$ as in step $1$.  Moreover, for
any of those choices, if for an integer $i \in [a,b]$, the set $Y_i$ of vectors
$\xx \in  \mathbb{Z}^n$ satisfying $d_1x_1 + \cdots + d_nx_n = i$ is non-empty,
then, 
\begin{enumerate}
    \item if $n = 1$, then $\hh$ as defined in step $2.a$, is not a solution of
        the problem (by hypothesis) and thus the procedure stops and returns
        \texttt{YES},
    \item if $n > 1$, we use Lemma~\ref{lemma:proc2} and, by induction, the
        recursive call must return \texttt{YES}.
\end{enumerate}

\subsection{Proof of Lemma~\ref{lemma:proc1}} \label{section:proof-main-lemma}
Fix three positive integers $n$, $\ell$, $m$, a function $f\in
\mathcal{F}_{n,\ell}$, a matrix $M\in  \mathbb{Z}^{m\times n}$, and a vector
$\cc \in \mathbb{Z}^m$.  Recall that we denote by $X$ the set of vectors
\begin{equation*}
    X = \{ \xx \in \mathbb{R}^n \st f(\xx)<1 \wedge M\xx < \cc\}. 
\end{equation*} 
We want to prove that if the set $X$ contains no integer point then there must
exist a non-zero integer vector $\dd\in \mathbb{Z}^n$ and $a,b\in\mathbb{Z}$
such that $\{\dd^\top \xx \st \xx \in X\} \subseteq [a,b]$. 

We will use the following corollary of Kronecker's theorem on
simultaneous Diophantine approximation.  It generalises the fact that
any line in the plane with irrational slope passes arbitrarily close
to integer points in the plane.
\begin{proposition}\cite[Corollary 2.8]{kp97}.
    Let $\uu,\uu_1,\dots,\uu_s$ be vectors in $\mathbb{R}^n$.  Suppose
    that for all $\dd\in\mathbb{Z}^n$ we have $\dd^\top\uu=0$ whenever
    $\dd^\top \uu_1 = \dots = \dd^\top \uu_s = 0$.  Then for all
    $\varepsilon>0$ there exist real numbers $\lambda_1,\dots,\lambda_s
    \geq 0$ and a vector $\vv \in \mathbb{Z}^n$ such that
    \(
        \left\|\uu + \sum_{i=1}^s \lambda_i \uu_i - \vv \right\|_\infty \leq
        \varepsilon.
    \)
\label{pro:kronecker}
\end{proposition}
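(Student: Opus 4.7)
The plan is to recast the approximation claim as a density statement on the torus $\mathbb{T}^n = \mathbb{R}^n / \mathbb{Z}^n$ and then invoke Pontryagin duality. Writing $\pi : \mathbb{R}^n \to \mathbb{T}^n$ for the quotient map, the conclusion is equivalent to asking that $-\pi(\uu)$ belong to the closure of the set
\[
    C = \Bigl\{\sum_{i=1}^s \lambda_i\, \pi(\uu_i) : \lambda_i \geq 0\Bigr\} \subseteq \mathbb{T}^n,
\]
since $\|\uu + \sum \lambda_i \uu_i - \vv\|_\infty \leq \epsilon$ for some $\vv \in \mathbb{Z}^n$ is exactly the assertion that $\pi(\uu + \sum \lambda_i \uu_i)$ lies within sup-distance $\epsilon$ of $0$ in $\mathbb{T}^n$. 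So it suffices to show that the closure $H := \overline{C}$ contains $\pi(\uu)$, once we have seen that $H$ is a subgroup and hence symmetric.

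First I would argue that $H$ is a closed subgroup of $\mathbb{T}^n$. By construction $C$ is closed under addition, so $H$ is a closed subsemigroup of the compact abelian group $\mathbb{T}^n$, and any nonempty closed subsemigroup of a compact group is automatically a subgroup. Indeed, for any $h \in H$ the sequence $(kh)_{k \geq 1}$ has, by compactness, a convergent subsequence $k_j h \to g$, so the differences $(k_{j+1} - k_j) h$ tend to $0$ in $\mathbb{T}^n$; choosing $m \geq 2$ with $mh$ arbitrarily close to $0$ makes $(m-1)h \in H$ arbitrarily close to $-h$, whence $-h \in H$.

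Next I would characterise $H$ through its annihilator. Identifying $\widehat{\mathbb{T}^n}$ with $\mathbb{Z}^n$ via $\dd \mapsto \chi_\dd$, where $\chi_\dd(\pi(y)) = \exp(2\pi i\, \dd^\top y)$, the standard Pontryagin duality for closed subgroups of the torus says that $\pi(y) \in H$ iff $\chi_\dd(\pi(y)) = 1$ for every $\dd \in \mathbb{Z}^n$ with $\chi_\dd|_H \equiv 1$. A character $\chi_\dd$ vanishes on $C$ (and hence on $H$) iff $\exp(2\pi i \lambda\, \dd^\top \uu_i) = 1$ for every $i$ and every $\lambda \geq 0$; since $\lambda \mapsto \exp(2\pi i \lambda c)$ is constantly $1$ on $\mathbb{R}_{\geq 0}$ only when $c = 0$, the annihilator of $H$ is exactly $\{\dd \in \mathbb{Z}^n : \dd^\top \uu_i = 0 \text{ for all } i\}$. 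The hypothesis then yields $\dd^\top \uu = 0$, i.e.\ $\chi_\dd(\pi(\uu)) = 1$, for every $\dd$ in this annihilator, so by duality $\pi(\uu) \in H$, as needed.

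The main obstacle I expect is the last computation: one has to take the cone (rather than merely the subgroup) generated by the $\pi(\uu_i)$ seriously, which forces the equality $\dd^\top \uu_i = 0$ rather than the weaker $\dd^\top \uu_i \in \mathbb{Z}$ that one would obtain from the closed subgroup generated by the $\pi(\uu_i)$ alone; this is precisely what makes the non-negativity of the $\lambda_i$ cost nothing beyond the hypothesis. The subgroup-closure argument and the Pontryagin duality step are otherwise standard.
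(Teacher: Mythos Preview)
The paper does not supply its own proof of this proposition; it is quoted verbatim from an external reference (Corollary~2.8 of~\cite{kp97}) and used as a black box in the proof of Lemma~\ref{lemma:alternative}. So there is no in-paper argument to compare against.

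Your argument is correct and is one of the standard derivations of Kronecker-type approximation via duality on the torus. Two cosmetic points: first, the expression $\lambda_i\,\pi(\uu_i)$ is not literally well-defined since $\mathbb{T}^n$ carries no $\mathbb{R}$-action; you mean $C=\pi\bigl(\{\sum_i\lambda_i\uu_i:\lambda_i\ge 0\}\bigr)$, which is what you use anyway. Second, in the ``closed subsemigroup is a subgroup'' step you should separate off the trivial case $h=0$, since otherwise the gaps $k_{j+1}-k_j$ could all equal~$1$ and your chosen $m\ge 2$ need not exist; when $h\neq 0$ the convergence $(k_{j+1}-k_j)h\to 0$ forces $k_{j+1}-k_j\ge 2$ eventually. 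Neither point affects the substance.

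Your closing remark is exactly the right emphasis: because $C$ contains the whole ray $\{\pi(\lambda\uu_i):\lambda\ge 0\}$, a character $\chi_{\dd}$ is trivial on $C$ only when $\dd^\top\uu_i=0$ (not merely $\dd^\top\uu_i\in\mathbb{Z}$), so the annihilator of $H$ matches the hypothesis precisely and the non-negativity constraint on the $\lambda_i$ is free.
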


By definition, there exist vectors $\rr \in \mathbb{Q}_{> 0}^\ell$ and $\ss_1,
\dots, \ss_\ell \in \mathbb{Q}_{> 0}^n$ such that $f(\xx) = \sum_{i=1}^{\ell}r_i
s_{i,1}^{x_1} \ldots s_{i,n}^{x_n}$.  Let $\aa \in \real^\ell$ and $\bb_i \in
\real^n$ be defined by $a_i = \log(r_i)$ and $\bb_i = (\log(s_{i,1}), \ldots,
\log(s_{i,n}))$. We can then rewrite $f(\xx)$ as follows
\[
    f(\xx) = \exp(\bb_1^\top \xx + a_1) + \dots + \exp(\bb_\ell^\top \xx +
    a_\ell).
\]

Let us now consider the cone
\begin{equation}\label{eq:rec-cone}
    C = \left\{ \xx \in \mathbb{R}^n \:\middle|\: \bb_1^\top\xx \leq 0
    \wedge \dots \wedge \bb_\ell^\top\xx \leq 0 \wedge M\xx\leq 0 \right\}. 
\end{equation}
It is easy to see that $X + C \subseteq X$.

\begin{lemma}
    Suppose that $X$ is non-empty and that no non-zero integer vector in
    $\mathbb{Z}^n$ is orthogonal to $C$.  Then  
    $X \cap \mathbb{Z}^n$ is non-empty. 
\label{lemma:alternative}
\end{lemma}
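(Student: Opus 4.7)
The plan is to combine Kronecker's theorem (Proposition~\ref{pro:kronecker}) with two simple observations about the set $X$: that it is open, and that it is invariant under adding elements of the recession cone $C$. First I would note that $X$ is an open subset of $\real^n$, since the constraint $f(\xx) < 1$ is open by continuity of $f$ and $M\xx < \cc$ is a finite intersection of open halfspaces. As $X$ is non-empty by assumption, there exist $\xx_0 \in X$ and $\delta > 0$ with the open ball $B(\xx_0, \delta) \subseteq X$.

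Next I would verify that $X + C \subseteq X$. Using the exponential form $f(\xx) = \sum_{i=1}^\ell \exp(\bb_i^\top \xx + a_i)$, for any $\yy \in C$ we have $\bb_i^\top \yy \leq 0$ and $M\yy \leq 0$, hence $M(\xx + \yy) \leq M\xx < \cc$ and
\[
    f(\xx + \yy) \;=\; \sum_{i=1}^\ell \exp(\bb_i^\top \xx + a_i)\,\exp(\bb_i^\top \yy) \;\leq\; f(\xx) \;<\; 1.
\]

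I would then pick a finite set of generators $\uu_1, \ldots, \uu_s$ of the polyhedral cone $C$ (which exist since $C$ is defined by finitely many linear inequalities). The assumption that no non-zero integer vector is orthogonal to $C$ is equivalent to: for every $\dd \in \mathbb{Z}^n$, $\dd^\top \uu_1 = \cdots = \dd^\top \uu_s = 0$ forces $\dd = 0$, in which case $\dd^\top \xx_0 = 0$ trivially. This is exactly the hypothesis of Proposition~\ref{pro:kronecker} applied with $\uu := \xx_0$. Invoking it with $\varepsilon = \delta/2$ yields non-negative scalars $\lambda_1, \ldots, \lambda_s$ and an integer vector $\vv \in \mathbb{Z}^n$ with $\bigl\|\xx_0 + \sum_{i=1}^s \lambda_i \uu_i - \vv\bigr\|_\infty \leq \delta/2$.

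The key step is then a rewriting trick: set $\boldsymbol{\eta} := \vv - \xx_0 - \sum_i \lambda_i \uu_i$, so $\|\boldsymbol{\eta}\|_\infty \leq \delta/2 < \delta$. Then $\xx_0 + \boldsymbol{\eta} \in B(\xx_0, \delta) \subseteq X$ and $\sum_i \lambda_i \uu_i \in C$, whence
\[
    \vv \;=\; (\xx_0 + \boldsymbol{\eta}) + \sum_{i=1}^s \lambda_i \uu_i \;\in\; X + C \;\subseteq\; X,
\]
so $\vv \in X \cap \mathbb{Z}^n$. The subtle point, and the reason the argument works at all, is that Kronecker's theorem only produces $\vv$ close to a point of $X$, not inside $X$; it is the openness of $X$ at $\xx_0$ combined with the monotonicity of the constraints along $C$ that lets us absorb the approximation error into $\xx_0$ rather than into the cone direction, turning ``close to $X$'' into ``inside $X$''.
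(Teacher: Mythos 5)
Your proof is correct and follows essentially the same route as the paper's: openness of $X$, the invariance $X + C \subseteq X$, Kronecker's theorem (Proposition~\ref{pro:kronecker}) applied with $\uu = \xx_0$ and a finite set of vectors of $C$ spanning $\vspan(C)$, and the observation that the resulting integer point decomposes as a perturbation of $\xx_0$ plus a cone element. The only cosmetic differences are that you take a full generating set of the polyhedral cone where the paper takes any spanning subset of $C$, and you use $\varepsilon = \delta/2$ to guard against the closed-ball inequality in Kronecker's statement, a small piece of extra care.
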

\begin{proof}
    Let $\uu \in X$.  Since $X$ is open, there exists $\varepsilon>0$ such that
    the open ball $B_\varepsilon(\uu)$ is contained in $X$.  We therefore have
    that $B_\varepsilon(\uu) + C \subseteq X$.

    We will apply Proposition~\ref{pro:kronecker} to show that
    $B_\varepsilon(\uu) + C$ contains an integer point and hence that $X$
    contains an integer point.  To this end, let vectors $\uu_1,\dots,\uu_s \in
    C$ be such that $\vspan\{\uu_1,\dots,\uu_s\}=\vspan(C)$.  Then
    no non-zero vector in $\mathbb{Z}^n$ is orthogonal to $\uu_1,\dots,\uu_s$.
    By Proposition~\ref{pro:kronecker}, there exist real numbers
    $\lambda_1,\dots,\lambda_s \geq 0$ and an integer vector $\vv \in
    \mathbb{Z}^n$ such that
    \(
        \left\|\uu + \sum_{i=1}^s \lambda_i \uu_i - \vv \right\|_\infty \leq
        \varepsilon.
    \)
    Thus, $\vv \in B_{\varepsilon}(\uu) +C \subseteq X$.
\end{proof}

The contrapositive of the above result states that if $X$ contains no integer
point, then there must exist an integer vector that is orthogonal to $C$. For
the desired result, it remains for us to prove the boundedness claim.

\begin{lemma}
    Suppose that $\dd \in \mathbb{Z}^n$ is orthogonal to the cone $C$.
    Then $\{ \dd^\top \uu \st \uu \in  X\}$ is bounded.
\label{lemma:orthog}
\end{lemma}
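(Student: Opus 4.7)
The plan is to exploit the polyhedral structure hidden inside~$X$ via Farkas' lemma. Although $X$ is cut out by the nonlinear constraint $f(\xx)<1$, each summand $\exp(\bb_i^\top \xx + a_i)$ is strictly positive, so $f(\xx)<1$ forces $\exp(\bb_i^\top \xx + a_i)<1$ individually, i.e.\ $\bb_i^\top \xx < -a_i$, for every~$i$. Consequently every $\xx \in X$ satisfies the strict linear system $\bb_i^\top \xx < -a_i$ for $1 \leq i \leq \ell$ together with $M\xx < \cc$. Stack the row vectors $\bb_1^\top,\dots,\bb_\ell^\top$ on top of the rows of~$M$ into a single matrix $A \in \real^{(\ell+m)\times n}$, and stack $-a_1,\dots,-a_\ell,c_1,\dots,c_m$ into a vector $\boldsymbol{\gamma} \in \real^{\ell+m}$. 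Then $X \subseteq \{\xx : A\xx < \boldsymbol{\gamma}\}$ and the cone~$C$ from~\eqref{eq:rec-cone} is exactly $\{\xx : A\xx \leq 0\}$, i.e.\ the homogenisation of the same system.

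The hypothesis that $\dd$ is orthogonal to~$C$ is equivalent to saying that both $\dd^\top \xx \leq 0$ and $-\dd^\top \xx \leq 0$ hold for every $\xx$ with $A\xx \leq 0$. Applying Farkas' lemma to each of these two halfspace constraints in turn yields nonnegative vectors $\yy, \yy' \in \real_{\geq 0}^{\ell+m}$ such that $\dd = A^\top \yy$ and $-\dd = A^\top \yy'$.

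Combining the two ingredients, for every $\xx \in X$ we obtain
\[
  \dd^\top \xx \;=\; \yy^\top (A\xx) \;\leq\; \yy^\top \boldsymbol{\gamma} \qquad \text{and} \qquad -\dd^\top \xx \;=\; (\yy')^\top (A\xx) \;\leq\; (\yy')^\top \boldsymbol{\gamma},
\]
so $\dd^\top \xx$ lies in the bounded interval $[-(\yy')^\top \boldsymbol{\gamma},\, \yy^\top \boldsymbol{\gamma}]$, proving the lemma. The only nontrivial ingredient is Farkas' lemma; the rest merely unfolds definitions, so I do not expect any genuine obstacle beyond identifying the right conic decomposition of $\dd$.
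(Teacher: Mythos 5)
Your proof is correct, but it takes a genuinely different route from the paper's. The paper introduces the enveloping polyhedron $\widehat{X} = \{\xx \st A\xx \leq \boldsymbol{\gamma}\}$ (same linear relaxation you describe) and then invokes the Minkowski--Weyl decomposition $\widehat{X} = B + C$ with $B$ bounded: since $\dd$ is orthogonal to the recession cone $C$, the values of $\dd^\top\uu$ over $\widehat{X}$ coincide with those over the bounded set $B$, and the claim follows. You instead decompose the objective direction rather than the feasible set: from $\pm\dd$ belonging to the polar cone of $C = \{\xx \st A\xx \leq 0\}$, Farkas' lemma produces nonnegative certificates $\yy, \yy'$ with $\dd = A^\top\yy$ and $-\dd = A^\top\yy'$, and multiplying through by the inequalities $A\xx < \boldsymbol{\gamma}$ gives explicit two-sided bounds $-(\yy')^\top\boldsymbol{\gamma} \leq \dd^\top\xx \leq \yy^\top\boldsymbol{\gamma}$ for $\xx \in X$. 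The two arguments are dual facets of the same polyhedral theory (Farkas underlies Minkowski--Weyl), and both hinge on the identical elementary observation that $f(\xx)<1$ forces each term $\exp(\bb_i^\top\xx+a_i)<1$, hence $\bb_i^\top\xx + a_i < 0$. Your version has the small advantage of producing a concrete interval $[-(\yy')^\top\boldsymbol{\gamma},\ \yy^\top\boldsymbol{\gamma}]$ and avoids citing the decomposition theorem, while the paper's phrasing is arguably a touch shorter once one is already thinking in terms of recession cones.
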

\begin{proof}
    Define the ``enveloping polygon'' of $X$ to be 
    \[
        \widehat{X} =
        \left\{
            \xx \in \mathbb{R}^n \:\middle|\:
            \bb_1^\top \xx + \aa_1 \leq 0 \wedge \dots \wedge
            \bb_\ell^\top \xx + \aa_\ell \leq 0 \wedge  M\xx \leq \cc
        \right\}. 
    \]
    Clearly it holds that $X\subseteq \widehat{X}$.  Moreover, by the
    Minkowski-Weyl decomposition theorem we can write $\widehat{X}$ as a sum
    $\widehat{X}=B+C$ for $B$ a bounded polygon and $C$ the cone defined in
    (\ref{eq:rec-cone}).  Since $\dd$ is orthogonal to $C$ by assumption, it
    follows that $\{ \dd^\top \uu \st \uu \in \widehat{X}\}=\{\dd^\top \uu \st
    \uu \in B\}$ is bounded and hence $\{ \dd^\top \uu \st \uu \in X\}$ is
    bounded.  The result immediately follows.
\end{proof}

We can now complete the proof of Lemma~\ref{lemma:proc1}.
\begin{proof}[Proof of Lemma~\ref{lemma:proc1}]
By Lemma~\ref{lemma:alternative} there exists a non-zero integer
vector $\dd\in\mathbb{Z}^n$ such that $\dd$ is orthogonal to the cone
$C$ defined in (\ref{eq:rec-cone}).  Then by
Lemma~\ref{lemma:orthog} we obtain that $\{ \dd^\top \uu \st \uu \in
X\}$ is contained in a bounded interval.
\end{proof}

\section{Undecidability for linearly ambiguous automata}
\label{sec:undecidability}
In this section we prove Theorem~\ref{theorem:undecidability}. That is, we argue
that the emptiness and containment problems are undecidable for the class of linearly ambiguous
PA. In fact, We will prove a more general result in
Proposition~\ref{proposition:automaton}.

The proof is done by a reduction from the halting problem for two-counter
machines. The reduction resembles the one used to prove undecidability of the
comparison problem for another quantitative extension of Boolean automata:
max-plus automata~\cite{Colcombet, abk11}.

\newcommand{\A}{\mathcal{A}}
\newcommand{\B}{\mathcal{B}}
\newcommand{\Ap}{\mathcal{A}'}
\newcommand{\Bp}{\mathcal{B}'}
\newcommand{\funcAp}{[\![\mathcal{A}']\!]}
\newcommand{\funcBp}{[\![\mathcal{B}']\!]}
\newcommand{\Azero}{\A_0}
\newcommand{\Bzero}{\B_0}
\newcommand{\funcAzero}{[\![\Azero]\!]}
\newcommand{\funcBzero}{[\![\Bzero]\!]}
\newcommand{\Aone}{\A_1}
\newcommand{\Bone}{\B_1}
\newcommand{\funcAone}{[\![\Aone]\!]}
\newcommand{\funcBone}{[\![\Bone]\!]}
\renewcommand{\Aa}{\A_2}
\newcommand{\Ba}{\B_2}
\newcommand{\funcAa}{[\![\Aa]\!]}
\newcommand{\funcBa}{[\![\Ba]\!]}
\newcommand{\Ab}{\A_3}
\renewcommand{\Bb}{\B_3}
\newcommand{\funcAb}{[\![\Ab]\!]}
\newcommand{\funcBb}{[\![\Bb]\!]}
\newcommand{\Az}{\A_6}
\newcommand{\Bz}{\B_6}
\newcommand{\funcAz}{[\![\Az]\!]}
\newcommand{\funcBz}{[\![\Bz]\!]}

\renewcommand{\funcA}{\inter{\mathcal{A}}}
\renewcommand{\funcB}{\inter{\mathcal{B}}}

\newcommand{\alphabet}{\Sigma}

\subsection{Two-counter machines}
\emph{Two-counter machines (or Minsky machines)} can be defined in several ways,
all equivalent in terms of expressiveness. We use here the following
description: A two-counter machine is a deterministic finite-state machine with
two counters that can be incremented, decremented, or tested for $0$. Formally, 
it is given by a tuple $(Q, T^+_1, T^+_2, T^-_1, T^-_2, q_{init},
q_{halt})$ where:
\begin{itemize} 
    \item $Q$ is a finite set of states.
    \item $T^+_1$ (resp.\ $T^+_2$) is a subset of $Q^2$. If $(p,q) \in T^+_1$
        (resp.\ $T^+_2$) then there is a transition from the state $p$ to the
        state $q$ which increments the first counter (resp.\ second counter).
    \item $T^-_1$ (resp.\ $T^-_2$) is a subset of $Q^3$. If $(p,q,r) \in T^-_1$
        (resp.\ $T^-_2$) then there is a transition from the state $p$ which goes
        to the state $q$ if the current value of the first (resp.\ second)
        counter is $0$ (it does not change the counters), and which goes to the
        state $r$ otherwise and decrements the first (resp.\ second) counter.
    \item $q_{init} \in Q$ is the initial state and $q_{halt} \in Q$ is the
        final state such that there is no outgoing transition from $q_{halt}$
        (for all transitions $(q,p) \in T^+_1 \cup T^+_2$ or $(q,p,r) \in T^-_1
        \cup T^-_2$, $q \neq q_{halt}$). We also assume that $q_{init} \neq
        q_{halt}$.
\end{itemize}
Moreover the machine is deterministic:  for every state there  is  at  most  one
action that  can  be  performed, i.e. for all $q\in Q$, there is at most one
transition of the form $(q,p)$ or $(q,p,r)$ in $T^+_1 \cup T^+_2 \cup T^-_1 \cup
T^-_2$ and $T^+_1 \cap T^+_2 = \emptyset$ and $T^-_1 \cap T^-_2 = \emptyset$.

The  semantics  of  a  two-counter  machine  are  given  by  means  of  the
valuations  of  the counters that are pairs of non-negative integers.  An
execution with counters initialised to $(n^0_1,n^0_2)$ is a sequence of
compatible transitions and valuations denoted by
$$
(n^0_1,n^0_2)
\xrightarrow{t_1} (n^1_1,n^1_2) \xrightarrow{t_2} (n^2_1,n^2_2)
\xrightarrow{t_3} \dotsm \xrightarrow{t_k} (n^k_1,n^k_2)
$$
such that:
\begin{itemize}
\item for all $i\in \{1,\ldots,k\}$, if $t_i \in T^+_1$ (resp.\ $T^+_2$), then
    $n_1^i = n_1^{i-1}+1$ and $n_2^i = n_2^{i-1}$ (resp.\ $n_1^i = n_1^{i-1}$ and
    $n_2^i = n_2^{i-1}+1$);
\item for all $i\in \{1,\ldots,k\}$, if $t_i \in T^-_1$ (resp.\ $T^-_2$), then
    $n_1^i = n_1^{i-1}=0$ or $n_1^i = n_1^{i-1}-1$ and $n_2^i = n_2^{i-1}$
    (resp.\ $n_2^i = n_2^{i-1}=0$ or $n_2^i = n_2^{i-1}-1$ and $n_1^i =
    n_1^{i-1}$);
\item for all $i\in \{1,\ldots,k-1\}$, if $t_i = (p_i,q_i) \in T^+_1 \cup T^+_2$
    then $t_{i+1} \in \{q_i\} \times (Q\cup Q^2)$;
\item for all $i\in \{1,\ldots,k-1\}$, if $t_i = (p_i,q_i,r_i) \in T^-_1 \cup
    T^-_2$ and $n_{i-1} = 0$ then $t_{i+1} \in \{q_i\} \times (Q\cup Q^2)$,
    otherwise if $n_{i-1} \neq 0$ then $t_{i+1} \in \{r_i\} \times (Q\cup
    Q^2)$.
\end{itemize}
We say that the machine \emph{halts} if there is a (unique) execution with counters
initialised to $(0,0)$ starting in $q_{init}$ reaching the state $q_{halt}$.

\begin{proposition}[\cite{minsky67}]
The halting problem for two-counter machines is undecidable. 
\end{proposition}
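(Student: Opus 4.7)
The plan is to reduce from the halting problem for Turing machines via Minsky's classical two-step construction, which the paper cites via~\cite{minsky67}. The argument proceeds as two independent simulations that compose to yield the desired result.

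First, I would simulate an arbitrary Turing machine by a multi-counter machine with a small fixed number of counters (three suffice). The tape contents to the left and right of the head are encoded as two integers written in some fixed base $b$ and stored in two counters; an auxiliary third counter serves as scratch space to implement multiplication by $b$, integer division by $b$, and extraction of the low digit (which represents the symbol under the head). Each Turing machine step then becomes a short deterministic sequence of increments, decrements, and zero-tests, staying within the signature $(Q, T^+_1, T^+_2, T^+_3, T^-_1, T^-_2, T^-_3, q_{init}, q_{halt})$ of a three-counter machine.

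Second, and this is the main technical step, I would reduce any $k$-counter machine to a two-counter machine using a G\"odel-style encoding. Fix distinct primes $p_1, \dots, p_k$ and encode the tuple $(a_1, \dots, a_k)$ by the single integer $N = p_1^{a_1} \cdots p_k^{a_k}$, stored in the first counter, with the second counter used as scratch. Incrementing (resp.\ decrementing) the $i$-th counter amounts to multiplying (resp.\ dividing) $N$ by $p_i$, and testing whether the $i$-th counter is zero amounts to checking whether $p_i \mid N$. Each of these operations is implemented as a short subroutine that repeatedly transfers units between the two counters in batches of size $p_i$, with the finite-state control tracking the residue modulo $p_i$; the zero-test semantics of $T^-_1$ and $T^-_2$ are exploited to detect when one counter has been fully drained.

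The main obstacle is keeping the simulation deterministic (as required by the definition of two-counter machines in the paper) and correctly distinguishing the two branches of a zero-test on the simulated counter, given that the only way to ``read'' $N$ is to decrement it. Careful design of the divisibility subroutine, which restores $N$ after the test by transferring the scratch counter back, resolves this. Composing the two reductions, from any Turing machine $M$ one obtains a deterministic two-counter machine that, started with both counters at $0$ in state $q_{init}$, reaches $q_{halt}$ if and only if $M$ halts on the empty input; undecidability of the Turing halting problem then transfers to the two-counter halting problem.
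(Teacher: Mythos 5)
The paper does not prove this proposition; it simply cites Minsky~\cite{minsky67}, and your two-stage outline (Turing machine to a fixed number of counters via base-$b$ tape halves, then $k$ counters to two via G\"odel encoding $N = p_1^{a_1}\cdots p_k^{a_k}$ with one scratch counter) is precisely the classical argument given in that reference. The outline is correct, and your attention to determinism and to restoring $N$ after divisibility tests addresses the standard pitfalls.
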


\subsection{Reduction from the halting problem for two-counter machines}

\begin{proposition}\label{proposition:automaton}
    Given a two counter machine, one can construct a linearly ambiguous
    probabilistic automaton $\A$ such that the machine halts if and only if
    there exists a word $w$ such that $\funcA(w) \geq \frac{1}{2}$
    (resp.\ $>$, $\leq$, $<$).
\end{proposition}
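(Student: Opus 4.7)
The plan is to reduce from the halting problem for deterministic two-counter machines. Given such a machine $M$, I would design a word encoding of halting executions and build a linearly-ambiguous probabilistic automaton $\A$ whose value on $w$ meets the threshold $\tfrac{1}{2}$ exactly when $w$ encodes the (necessarily unique) halting execution of $M$. The automaton $\A$ will be realised as a weighted sum of several component automata, exploiting the ambiguity-preservation remark: if each component is at most linearly ambiguous, so is the weighted sum.

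For the encoding I would take an alphabet containing one letter per transition of $M$ together with two padding letters $\#_1, \#_2$, and demand that a purported execution be presented as
\[
    w = \#_1^{n_1^0}\#_2^{n_2^0}\, t_1\, \#_1^{n_1^1}\#_2^{n_2^1}\, t_2\, \cdots\, t_k\, \#_1^{n_1^k}\#_2^{n_2^k},
\]
with $n_1^0 = n_2^0 = 0$ and $t_k$ entering $q_{\text{halt}}$. Then $\A$ will combine: (i) an \emph{unambiguous} syntactic checker (using determinism of $M$) that accepts with probability $1$ exactly when $t_1 \cdots t_k$ is a valid transition sequence from $q_{\text{init}}$ to $q_{\text{halt}}$ in $M$; (ii) for each counter $j \in \{1,2\}$, a consistency checker that nondeterministically picks one transition index $i$ and verifies the arithmetic $n_j^i = n_j^{i-1} \pm 1$ or $n_j^i = n_j^{i-1}$ by running ``decaying-probability'' branches whose acceptance probabilities are $2^{-n_j^{i-1}}$ and $2^{-n_j^i \pm 1}$; (iii) zero-test checkers that guess one zero-test step and verify that the relevant $\#_j$ block is empty. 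Each component guesses at most one position in $w$, so each is linearly ambiguous, and by the weighted-sum remark the whole of $\A$ remains linearly ambiguous.

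Weights are then tuned so that $\funcA(w) \leq \tfrac{1}{2}$ for every $w$, with equality iff every arithmetic identity and every zero-test check succeeds, i.e.\ iff $w$ is the encoding of the unique halting execution of $M$. Since $M$ is deterministic, such a $w$ exists iff $M$ halts, giving the $\geq \tfrac{1}{2}$ case. A small additive perturbation of the weights recovers the strict version $>\tfrac{1}{2}$. For the dual cases $\leq, <$, I would pass to the complement automaton $\overline{\A}$, whose value is $1 - \funcA$; linear ambiguity of $\overline{\A}$ is obtained by designing the underlying nondeterministic skeleton of $\A$ to be sub-deterministic on its syntactic spine, so that the extra sink $q_\bot$ of the complement construction adds only a single accepting run per position.

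The hardest step will be the precise tuning of the counter-consistency gadgets. A single gadget that checked all transitions at once would have to guess both a transition index \emph{and} a position inside the corresponding $\#_j$ block, producing quadratic ambiguity as in \cite{frw17}. The trick to bring it down to linear ambiguity is to split the work: each gadget guesses only the transition index, then performs a purely probabilistic walk through the adjacent $\#_j$ blocks whose acceptance probability encodes $2^{-n_j^i}$ with no further branching. Verifying that this walk can be implemented inside a PA without a second layer of non-determinism, and that the $2^{-n}$ accounting balances across the weighted sum so that the threshold $\tfrac{1}{2}$ is met precisely on halting executions and strictly undercut otherwise, is the technical core of the argument.
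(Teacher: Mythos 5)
Your high-level plan matches the paper's strategy closely: encode executions as words with unary counter blocks separated by transition letters, build a weighted sum of component checkers each of which guesses a single transition index (hence linear ambiguity), and then recover the $\geq/>/\leq/<$ variants by complementation. The outline is sound and the remark that guessing a transition position per gadget preserves linear ambiguity is exactly the paper's mechanism. However, there is a genuine gap precisely at what you yourself flag as ``the technical core of the argument,'' and it is not a detail one can defer: you never specify \emph{how} a probabilistic automaton can certify the equality $n_j^i = n_j^{i-1}+1$ (or $=$, or $-1$) via a threshold on acceptance probability. Acceptance probabilities can only be summed with fixed weights; they cannot be compared or subtracted inside the automaton, so ``running decaying-probability branches whose acceptance probabilities are $2^{-n_j^{i-1}}$ and $2^{-n_j^i\pm 1}$'' does not by itself yield a condition that distinguishes equality from inequality. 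Some algebraic trick is needed to turn ``two exponentials are equal'' into ``a weighted sum of products of exponentials achieves a threshold.''

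The paper supplies exactly such a trick. For each gadget it compares $\inter{\Aone}(w)$ against $\inter{\Bone}(w)$, where the $\Bone$ side contributes a term proportional to $\left(\tfrac{1}{2}\right)^{n_i+1+n_{i+1}}$ and the $\Aone$ side contributes the average $\tfrac{1}{2}\bigl(\left(\tfrac{1}{4}\right)^{n_{i+1}} + \left(\tfrac{1}{4}\right)^{n_i+1}\bigr)$; writing $p=\left(\tfrac{1}{2}\right)^{n_i+1}$ and $q=\left(\tfrac{1}{2}\right)^{n_{i+1}}$, this reduces to the elementary inequality $pq \leq \tfrac{1}{2}(p^2+q^2)$, with equality iff $p=q$, i.e.\ iff $n_i+1=n_{i+1}$. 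Because of this, the natural intermediate target is not ``one automaton $\A$ with $\funcA\leq\tfrac{1}{2}$'' directly, but two automata $\A,\B$ with $\funcA(w)\geq\funcB(w)$ for all $w$ and equality precisely on the halting encoding; the single automaton of the statement is then $\tfrac{1}{2}\funcB + \tfrac{1}{2}(1-\funcA)$, and the strict variants are obtained not by an unspecified ``small additive perturbation'' but by exhibiting a concrete lower bound $\funcA(w)\geq\funcB(w)+\tfrac{1}{13}\left(\tfrac{1}{2}\right)^{2(|w|+1)}$ on the gap (which needs a dedicated linearly-ambiguous gadget computing $w\mapsto \tfrac{1}{13}\left(\tfrac{1}{2}\right)^{2(|w|+1)}$). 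Without identifying the $2pq\leq p^2+q^2$ mechanism (or an equivalent one), the ``tuning of weights'' you postpone cannot be carried out, so the proof as written does not close. You should also note that linear ambiguity of the complemented gadgets is not automatic from ``sub-determinism on the syntactic spine'' — the complement of the paper's gadget picks up a second source of nondeterminism at the sink, and one has to check by hand (as the paper does) that the total number of accepting runs still grows at most linearly.
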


Theorem~\ref{theorem:undecidability} is an immediate corollary of
Proposition~\ref{proposition:automaton} since it is trivial to construct a PA
that outputs probability $\frac{1}{2}$ for all words.

We follow these steps:
\begin{enumerate}
\item We construct two linearly ambiguous PA $\A$ and $\B$
    such that the machine halts if and only if there is a word $w$
    such that $\funcA(w) \leq \funcB(w)$. 
\item From $\A$ and $\B$, we construct $\Ap$ and $\Bp$, also linearly ambiguous,
    such that the machine halts if and only if there is a word  $w$
    such that $\funcAp(w) < \funcBp(w)$. 
\item We show that the functions $1-\funcA$, $1-\funcB$, $1-\funcAp$,
    $1-\funcBp$ are also computed by linearly ambiguous PA.
\end{enumerate}

We show that Proposition~\ref{proposition:automaton} follows from steps 1, 2, and 3. We
have that for all words $w$
\begin{align*}
	\funcA(w) \leq \funcB(w) \quad 
      & \iff \quad \frac{1}{2}\funcA(w) + \frac{1}{2}(1-\funcB(w)) \leq \frac{1}{2} \\
      & \iff \quad \frac{1}{2}\funcB(w) + \frac{1}{2}(1-\funcA(w)) \geq \frac{1}{2}. 
\end{align*}
Since
$1-\funcA$ and $1-\funcB$ are linearly ambiguous by 3, then, $\frac{1}{2}\funcA
+ \frac{1}{2}(1-\funcB)$ and $\frac{1}{2}\funcB + \frac{1}{2}(1-\funcA)$
are also linearly ambiguous. From
1, we get that the two variants with non-strict inequalities of the problem are
undecidable. We proceed similarly using $\Ap$ and $\Bp$ to prove
undecidability of the two other variants with strict inequalities. 

\subsubsection*{Step 1}
Let $T = T^+_1 \cup T^+_2 \cup T^-_1 \cup T^-_2$ and $\Sigma = \{a,b\} \cup T$.
The idea is to encode the executions of the two-counter machine into words over
the alphabet $\Sigma$.
A block $a^m$ (resp.\ $b^m$) encodes the fact that the value of 
the first (resp.\ second) counter is~$m$. 
For example, given $t \in T_1^+$ and $t' \in T_2^{-}$, a word 
$a^nb^m t a^{n+1}b^m t' a^{n+1}b^{m'}$, encodes an execution 
starting with value $n$ in the first counter and $m$ in the 
second counter. 
Transition $t$ then increases the value of the first counter 
to $n+1$ without changing the value of the second one. The configuration is thus 
encoded by the infix $a^{n+1}b^m$. 
Next, transition $t'$ is taken, and either $m'=m=0$ or $m' = m-1$. 
Moreover, if $t = (p,q)$ and $t' = (r,s,u)$ then $q = r$ 
(i.e., the states between transitions have to match).

The PA $\A$ and $\B$ are constructed in such a way that for all words $w$ it
holds that
\[
	\begin{cases}
    \funcA(w) = \funcB(w) & \text{if } w
    	\text{ represents a valid halting execution of the machine}\\
    \funcA(w) > \funcB(w) & \text{otherwise.}
    \end{cases}
\]
The automata $\A$ and $\B$ are constructed as a weighted sum of seven PA,
each of them checking some criteria that a word $w$ should fulfill 
in order to represent a valid halting execution.

\paragraph*{Automaton $\Azero$}
Conditions such as asking that the encoded execution start in $q_{init}$, end in
$q_{halt}$, represent a valid path in the machine (with respect to the
states), and that the encoding be of the good shape, i.e.\ contain alternating blocks of $a$'s, blocks of
$b$'s and letters from $T$, are all regular conditions that can thus be checked
by a (deterministic Boolean) automaton. The exhaustive list of such conditions
and their explanations are given below. We then define $\Azero$ to be a
deterministic PA such that $\funcAzero(w) = 0$ if and only
if $w$ satisfies all these regular conditions and $\funcAzero(w) = 1$ otherwise.

We give here a precise description of the conditions checked by $\Azero$.
\begin{enumerate}
\item The word $w$ belongs to $T_{init}((a^*b^*)T)^*$ where $T_{init}$ is the subset of $T$ of the transitions started in $q_{init}$.
\item The word $w$ represents an execution ending in $q_{halt}$, i.e.\ it either
    ends
    \begin{itemize}
        \item with a letter $(q, q_{halt})$,
        \item with a word of the form $t a^n (q, q_{halt},r)$ where $t\in
            T$, $(q, q_{halt},r) \in T^-_2$ and $n$ is a non-negative
            integer (resp. $t b^n (q, q_{halt},r)$ where $t\in T$, $(q,
            q_{halt},r) \in T^-_1$ and $n$ a non-negative integer),
        \item or with a word of the form $t a^n b^m (q, r, q_{halt})$ where
            $t\in T$, $(q, r, q_{halt}) \in T^-_2$ and $m$ a positive
            integer (resp. $t a^n b^m (q, r, q_{halt})$ where $t\in T$,
            $(q, r, q_{halt}) \in T^-_1$ and $n$ a positive integer).
    \end{itemize}
\item The transitions are state-compatible, i.e.\ if $w$ contains a factor $(p,q)a^nb^m t$ 
	with $t \in T$ then $t$ starts in $q$ and if $w$ contains a factor $t' a^nb^m 
    (p,q,r) a^{n'}b^{m'} t$ with $t,t' \in T$ and $(p,q,r) \in T^-_1$ (resp.\ $T^-_2$)
    then $t$ starts in $q$ if $n=0$ and in $r$ if $n>0$ (resp.\ $t$ starts in $q$ if $m=0$
    and in $r$ if $m>0$).
\item We also check that if in the execution represented by the word, at some
    point the value in the first (resp.\ second) counter is $0$ and a transition
    from $T^-_1$ (resp.\ $T^-_2$) is taken then the value in the counter is still
    $0$ after the transition. In terms of words, this means that if $tb^mt'a^nb^{m'}t''$ is a factor of the word with $t,t'' \in T$ and $t' \in T^-_1$ then $n=0$ (and similarly for the second counter). 
\end{enumerate}  

The automaton $\Azero$ will make sure that $\inter{\mathcal{A}}(w) = 0$ only if 
$w$ is \emph{proper}, i.e.\ of the good shape as given above. We are now left to check that
the counters are properly incremented and decremented.

\paragraph*{Automata $\Aone$ and $\Bone$}
The automata $\Aone$ and
$\Bone$ check that a proper word encodes an execution where the first counter
is always correctly incremented after reading transitions from $T_1^+$.
Consider the automaton $\Cc(x,y,z)$ in Figure~\ref{figure:undec-autA}. It is
parameterised by three probability variables $x,y,z > 0$. The parameter $x$ is
the probability used by the initial distribution, and parameters $y$ and $z$ are
used by some transitions.

\begin{figure}[!htbp]
\begin{center}
    \begin{tikzpicture}[scale=0.5]
    \tikzset{
        initial text={\small{$x$}},
    }

    \node[state, initial] (q_1) at (-8,0) {};
    \node[state, initial] (q_2) at (-3,0) {};
    \node[state,accepting] (q_3) at (3,0) {};
    \node[state, accepting] (q_4) at (8,0) {};

    \path[trans]
    (q_1) edge [loop above] node         {\small{$T:\frac{1}{2}$}} ();
    \path[trans]     (q_1) edge [loop below] node         {\small{$a,b:1$}} ();
    \path[trans]
    (q_1) edge [bend left] node [above] {\small{$T:\frac{1}{2}$}}(q_2);
    \path[trans]     (q_2) edge [loop above] node         {\small{$a:y$}} ();
    \path[trans]     (q_2) edge [loop below] node         {\small{$b:1$}} ();
    \path[trans]     (q_2) edge  node [above] {\small{$T_1^{+}:y$}}(q_3);
    \path[trans]     (q_3) edge [loop above] node         {\small{$a:z$}} ();
    \path[trans]     (q_3) edge [bend left] node [above] {\small{$T,b:1$}}(q_4);
    \path[trans]
    (q_4) edge [loop above] node         {\small{$\Sigma:1$}} ();
    \end{tikzpicture}
\end{center}
\caption{\label{figure:undec-autA} Gadget automaton $\Cc(x,y,z)$ used to check if
the first counter is incremented properly.}
\end{figure}

We only take into consideration proper words as given by the automaton $\Azero$.
Notice that the only
non-deterministic transitions in $\Cc(x,y,z)$ are the ones going out from the
the leftmost state upon reading letters from~$T$.  It follows that $\Cc(x,y,z)$
is linearly ambiguous.  In fact, for every position in $w$ labelled by an
element $t$ from $T^+_1$ there is a unique accepting run that first reaches a
final state upon reading $t$.  By construction, we have
\begin{align} \label{eq:valueA}
    \inter{\Cc(x,y,z)}(w) & \quad = \quad \sum_{t_i \in T^+_1}
    x\left(\frac{1}{2}\right)^{i-1}y^{n_{i}+1}z^{n_{i+1}}.
\end{align}

Let $x = \frac{1}{2}$, $y = 1$ and $z = \frac{1}{4}$.  We define $\Bone$ as
$\Cc(x,x,x)$ and $\Aone$ as a weighted sum of $\Cc(x,y,z)$ and $\Cc(x,z,y)$ with
weights $(\frac{1}{2},\frac{1}{2})$. Since $\Cc(\cdot,\cdot,\cdot)$ is linearly
ambiguous the obtained automata are also linearly ambiguous.  We prove that
$\inter{\Aone}(w) = \inter{\Bone}(w)$ only if $n_i + 1 = n_{i+1}$ for all $i$
such that $t_i \in T_1^{+}$ and $\inter{\Aone}(w) > \inter{\Bone}(w)$ otherwise.

By~\eqref{eq:valueA} it suffices to show that for every $i$ it holds that
\[
    \left(\frac{1}{2}\right)^{n_{i}+1 + n_{i+1}} \quad \le \quad 
    \frac{1}{2}\left(\left(\frac{1}{4}\right)^{n_{i+1}}
    + \left(\frac{1}{4}\right)^{n_{i}+1} \right)
\]
and that the equality holds only if $n_i + 1 = n_{i+1}$.
Let $p = \frac{1}{2}^{n_{i}+1}$ and $q = \frac{1}{2}^{n_{i+1}}$ then this reduces to
$$
pq \le \frac{1}{2}\left( p^2 + q^2 \right).
$$
This is true for every $p,q$ and moreover the equality holds if and only if $p = q$, which is equivalent to $n_i + 1 = n_{i+1}$.
We conclude with the following remark that will be useful for Step 2.

\begin{remark}
If $\inter{\Aone}(w) > \inter{\Bone}(w)$ then $\funcAone(w) \geq \funcBone(w) + \left(\frac{1}{2}\right)^{2(|w|+1)}$.
To obtain this bound notice that the probabilities and initial distribution of the automata $\Aone$ and $\Bone$ are $0$, $1$, $\frac{1}{2}$ and $\frac{1}{4}$. Hence for every word $w$ the probability assigned to it is either $0$ or a multiple of $\left(\frac{1}{4}\right)^{|w|+1}$. It follows that if $\Aone$ and $\Bone$ assign different probabilities to $w$ then the difference is at least $\left(\frac{1}{4}\right)^{|w|+1}$, which proves the remark.
\end{remark}

\newcommand{\Dd}{\mathcal{D}}

\paragraph*{Automata $\Aa$, $\Ba$, \ldots}

Similarly, we construct automata $\Aa$, $\Ba$, $\Ab$, $\Bb$, \ldots $\Az$, $\Bz$ to check the other criteria: 
\begin{itemize}
    \item the value of the second counter is correctly incremented when taking a
        transition from $T^+_2$,
    \item the value of the first (resp.\ second) counter remains the same when
        using a transition from $T^+_2 \cup T^-_2$ (resp.\ $T^+_1 \cup T^-_1$),
    \item the first (resp.\ second) counter is correctly decremented when using a
        transition from $T^-_1$ (resp.\ $T^-_2$) and the current value is not $0$.
\end{itemize}

For example we define $\A_5$ and $\B_5$ to check if the decrements of the first counter are correct using the gadget automaton $\Dd(x,y,z)$ in Figure~\ref{figure:undec-autA2}.
Let $x = \frac{1}{2}$, $y = 1$ and $z = \frac{1}{4}$. We define $\B_5$ as
$\Cc(x,x,x)$ and $\A_5$ as a weighted sum of $\Dd(x,y,z)$ and $\Dd(x,z,y)$ with
weights $(\frac{1}{2},\frac{1}{2})$. 

\begin{figure}[!htbp]
\begin{center}
    \begin{tikzpicture}[scale=0.5]
    \tikzset{
        initial text={\small{$x$}},
       	accepting/.style={accepting by arrow},
    }

\node[state, initial] (q_1) at (-10,0) {};
\node[state] (q_2) at (-3,0) {};
\node[state,accepting] (q_3) at (3,0) {};
\node[state, accepting] (q_4) at (10,0) {};
\node[state, initial] (q) at (-6.5,-3) {};

\path[trans]     (q_1) edge [loop above] node         {\small{$T:\frac{1}{2}$}} ();
\path[trans]     (q_1) edge [loop below] node         {\small{$a,b:1$}} ();
\path[trans]     (q_1) edge node [sloped, above] {\small{$T:\frac{1}{2}$}}(q);
\path[trans]     (q) edge  node [sloped, above] {\small{$a:y$}}(q_2);
\path[trans]     (q_2) edge [loop above] node         {\small{$a:y$}} ();
\path[trans]     (q_2) edge [loop below] node         {\small{$b:1$}} ();
\path[trans]     (q_2) edge  node [above] {\small{$T^-_1:z$}}(q_3);
\path[trans]     (q_3) edge [loop above] node         {\small{$a:z$}} ();
\path[trans]     (q_3) edge [bend left] node [above] {\small{$T,b:1$}}(q_4);
\path[trans]     (q_4) edge [loop above] node         {\small{$\alphabet:1$}} ();
    \end{tikzpicture}
\end{center}
\caption{\label{figure:undec-autA2} Gadget automaton $\Dd(x,y,z)$ used to check if
the first counter is decremented properly.}
\end{figure}

For all these automata $\Aone$, $\Bone$, \ldots $\Az$, $\Bz$, we have that for every proper word~$w$, $[\![\A_i]\!](w) = [\![\B_i]\!](w)$ if the corresponding increments or decrements are properly performed, and $[\![\A_i]\!](w) > [\![\B_i]\!](w)$ otherwise. We remark again that, in that case, for all $i$, we have $[\![\A_i]\!](w) \geq [\![\B_i]\!](w) + \left(\frac{1}{2}\right)^{2(|w|+1)}$. Note also that all the automata constructed above are linearly ambiguous (the only non-deterministic choices are in the first states when reading~$T$). 

Let us define $\A$ (resp.\ $\B$) as the weighted sum of the above automata
computing the function $\frac{7}{13}\funcAzero + \frac{1}{13}\funcAone + \cdots
+ \frac{1}{13}\funcAz$ (resp.\ $\frac{1}{13}\funcBone + \cdots +
\frac{1}{13}\funcBz$). The PA $\A$ and $\B$ are linearly ambiguous.

\begin{fact}
    For every word $w$ that does not represent the halting execution exactly one of
    the two cases below applies:
    \begin{itemize}
        \item either $w$ is not proper, $\funcAzero(w) = 1$, and $\funcA(w)
            \geq \frac{7}{13} > \frac{6}{13} \geq \funcB(w)$;
        \item or $w$ encodes an execution where the counters are not properly valued
            in at least one place, and in that case, there exists an $i$ such that $\A_i(w)
            > \B_i(w)$ (and $\geq$ for the other $i$). We thus obtain that $\funcA(w) >
            \funcB(w)$.
    \end{itemize}
\end{fact}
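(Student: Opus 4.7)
The plan is to prove the Fact by a straightforward case analysis, exploiting the weighted-sum structure of $\A$ and $\B$ together with the properties already established for each of the gadget automata $\Azero,\Aone,\Bone,\ldots,\Az,\Bz$. Fix a word $w$ that does not represent the (unique) halting execution of the two-counter machine. The two cases are: (i) $w$ is not proper, i.e.\ it fails at least one of the regular syntactic conditions enforced by $\Azero$; or (ii) $w$ is proper but one of its increment/decrement/unchanged-counter checks fails.

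In case (i) the construction of $\Azero$ gives $\funcAzero(w)=1$, so from $\A=\frac{7}{13}\funcAzero+\frac{1}{13}\sum_{i=1}^{6}\funcAi$ and the fact that every term $\funcAi(w)$ is a probability in $[0,1]$, I immediately get $\funcA(w)\ge \frac{7}{13}$. On the other side, $\B$ is the weighted sum of only the six automata $\Bone,\ldots,\Bz$ with weights $\frac{1}{13}$ each, so $\funcB(w)\le \frac{6}{13}$. Hence $\funcA(w)\ge\frac{7}{13}>\frac{6}{13}\ge\funcB(w)$, as claimed.

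In case (ii) $w$ is proper, so $\funcAzero(w)=0$, and the two weighted sums reduce to $\funcA(w)=\frac{1}{13}\sum_{i=1}^{6}\funcAi(w)$ and $\funcB(w)=\frac{1}{13}\sum_{i=1}^{6}\funcBi(w)$. Since $w$ is proper but not a valid halting execution, at least one of the counter-update checks $\A_i,\B_i$ fails; by the property already proved for each pair of gadget automata, this yields $\funcAi(w)>\funcBi(w)$ for that index, while the analogous property for the remaining indices $j\ne i$ gives $\funcAj(w)\ge\funcBj(w)$ (proper words either satisfy the check, in which case equality holds, or fail it, in which case strict inequality holds). Summing and dividing by $13$ produces the strict inequality $\funcA(w)>\funcB(w)$.

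There is no substantive obstacle here: the Fact is a bookkeeping consequence of three facts that have already been put in place in Step~1, namely $\funcAzero(w)=1$ for non-proper $w$, $\funcAzero(w)=0$ for proper $w$, and $\funcAi(w)\ge\funcBi(w)$ always (with strict inequality precisely when the $i$-th counter check fails). The only mild thing to verify is that the weights $\tfrac{7}{13}$ and $\tfrac{1}{13}$ are chosen large enough so that a single violation detected by $\Azero$ in case (i) already dominates the six other components; this is exactly why $7>6$ suffices and why the weights were set as they are.
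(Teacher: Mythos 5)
Your proof is correct and follows exactly the bookkeeping argument the paper intends: since the paper states this Fact without an explicit proof, you are filling in the routine verification that the weighted-sum definitions $\funcA=\frac{7}{13}\funcAzero+\frac{1}{13}\sum_i\inter{\A_i}$ and $\funcB=\frac{1}{13}\sum_i\inter{\B_i}$, combined with the established gadget properties ($\funcAzero(w)\in\{0,1\}$ according to properness, and $\inter{\A_i}(w)\ge\inter{\B_i}(w)$ on proper $w$ with strict inequality iff the $i$-th counter check fails), yield the two stated inequalities. The dichotomy is handled correctly, and the observation that a proper word not encoding the halting execution must fail some counter check is exactly the point the construction is designed to deliver.
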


Remark that in the case when $\funcA(w) > \funcB(w)$ we have $\funcA(w) \geq \funcB(w) + \frac{1}{13}\left(\frac{1}{2}\right)^{2(|w|+1)}$.

\begin{fact}
    If the two-counter machine halts then for the word representing the halting
    execution we have that $\funcA(w) = \funcB(w)$.
\end{fact}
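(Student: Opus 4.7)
The plan is to unwind the definitions of $\A$ and $\B$ and invoke the two kinds of correctness conditions that the halting-execution word $w$ must satisfy. First I would observe that, since $w$ encodes an actual execution of the deterministic two-counter machine from $(q_{init},0,0)$ to $q_{halt}$, it is \emph{proper} in the sense of $\Azero$: it lies in $T_{init}((a^*b^*)T)^*$, finishes with an appropriate $q_{halt}$-transition, has state-compatible consecutive transitions, and never performs a non-trivial decrement on a zero counter. By construction of $\Azero$ as a deterministic PA that outputs $0$ exactly on proper words, this gives $\funcAzero(w) = 0$.

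Next, because $w$ faithfully encodes the machine's trajectory, every increment, decrement, and non-modifying transition updates the counter blocks correctly. So for each pair $(\A_i, \B_i)$ with $i\in\{1,\dots,6\}$, the block-length matching condition on which the equality of the two gadgets depends is satisfied at every relevant position. The equality case of the AM-GM-style inequality $pq \leq \tfrac{1}{2}(p^2+q^2)$, already analysed in detail for $(\A_1,\B_1)$ and for $(\A_5,\B_5)$, therefore forces $[\![\A_i]\!](w) = [\![\B_i]\!](w)$ for every $i\in\{1,\dots,6\}$.

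Combining these two observations through the weighted-sum definitions of $\A$ and $\B$, the conclusion follows by a single calculation:
\[
\funcA(w) \;=\; \tfrac{7}{13}\,\funcAzero(w) + \tfrac{1}{13}\sum_{i=1}^{6}[\![\A_i]\!](w) \;=\; 0 + \tfrac{1}{13}\sum_{i=1}^{6}[\![\B_i]\!](w) \;=\; \funcB(w).
\]

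The only part that is not entirely routine bookkeeping is verifying that the equality analysis spelled out for $(\A_1,\B_1)$ and $(\A_5,\B_5)$ really does extend to the four remaining pairs. Each of those pairs is built from the same gadget template $\Cc(x,y,z)$ or $\Dd(x,y,z)$ by merely swapping the roles of the two counters, exchanging the increment/decrement role, or tracking invariance rather than change, so the identical AM-GM argument applies in each case. Spelling this out pair by pair is the only mildly tedious step and is where I would spend the bulk of the writing.
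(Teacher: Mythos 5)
Your proposal is correct and matches the paper's intended (implicit) argument: the paper states this Fact as a direct consequence of the construction without a separate proof, and your write-up simply unwinds that construction, noting that a halting execution yields a proper word (so $\funcAzero(w)=0$) with all counter updates correct (so the equality case of the gadget inequality gives $[\![\A_i]\!](w)=[\![\B_i]\!](w)$ for $i=1,\dots,6$), which combines through the weighted sums to $\funcA(w)=\funcB(w)$.
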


\subsubsection*{Step 2}

As noticed previously, by construction of $\A$ and $\B$, we have that:
\begin{itemize}
    \item there exists a word $w$ such that $\funcA(w) > \funcB(w)$ if and only
        if
    \item there exists a word $w$ such that $\funcA(w) \geq \funcB(w) +
        \frac{1}{13}\left(\frac{1}{2}\right)^{2(|w|+1)}$  if and only if
    \item there exists a word $w$ such that
    	\begin{equation}\label{eq:primed-autos}
        \frac{1}{2}\funcA(w) \quad \geq \quad 
        \frac{1}{2}\funcB(w) + \frac{1}{26}\left(\frac{1}{2}\right)^{2(|w|+1)}.
        \end{equation}
\end{itemize}
Both functions in the last inequality are computed by linearly ambiguous
PA. To obtain the weighted sum of the right hand side we need to construct an automaton that outputs, for every word $w$, the probability $ \frac{1}{13}\left(\frac{1}{2}\right)^{2(|w|+1)}$. This is easy to obtain by slightly modifying the example in Figure~\ref{fig:too-much-ambiguity}. Notice that this is an unambiguous automaton and its complement is linearly ambiguous.

For the rest of this section, we denote by $\Ap$ and $\Bp$ the PA for
the right and left sides of the inequality~\eqref{eq:primed-autos}, respectively. 

\subsubsection*{Step 3}

We will now argue that $1-\funcA$, $1-\funcB$, $1-\funcAp$, $1-\funcBp$ are also
computed by linearly ambiguous PA.
In Figure~\ref{figure:complement}, we show the complement automaton of
the gadget $\Cc(x,y,z)$ from Figure~\ref{figure:undec-autA}. We show that the so-obtained
automaton is linearly ambiguous. In the end we will conclude the argument using the property that all components $\A_i$ and $\B_i$ for $i > 0$ are weighted sums of such gadgets.

\begin{figure}[!htbp]
\begin{center}
    \begin{tikzpicture}[scale=0.5]
    \tikzset{
        initial text={\small{$x$}},
	    accepting/.style={accepting by arrow}
    }

    \node[state, initial, accepting] (q_1) at (-8,0) {$p$};
    \node[state, initial, accepting] (q_2) at (-3,0) {$q$};
    \node[state] (q_3) at (3,0) {};
    \node[state] (q_4) at (8,0) {};
    \node[state, accepting,initial,initial text={\small{$1-2x$}}] (sink)
        at (3,-6) {$\bot$};

    \path[trans]
    (q_1) edge [loop above] node         {\small{$T:\frac{1}{2}$}} ();
    \path[trans]
    (q_1) edge [loop below] node         {\small{$a,b:1$}} ();
    \path[trans]
    (q_1) edge [bend left] node [above] {\small{$T:\frac{1}{2}$}}(q_2);
    \path[trans]     (q_2) edge [loop above] node         {\small{$a:y$}} ();
    \path[trans]     (q_2) edge [loop below] node         {\small{$b:1$}} ();
    \path[trans]     (q_2) edge [bend left]  node [above] {\small{$T_1^{+}:y$}}(q_3);
    \path[trans]     (q_3) edge [loop above] node         {\small{$a:z$}} ();
    \path[trans]     (q_3) edge [bend left] node [above] {\small{$T,b:1$}}(q_4);

    \path[trans] (q_2) edge
    node [align=center,sloped, above]
    {\small{$a, T_1^{+}:1-y$}\\\small{$T\setminus T_1^{+}:1$}} (sink);
    \path[trans]		(sink) edge[loop above] node {\small{$\Sigma:1$}} ();
    \end{tikzpicture}
\end{center}
\caption{\label{figure:complement} Complement automaton of $\mathcal{C}(x,y,z)$
after trimming.}
\end{figure}
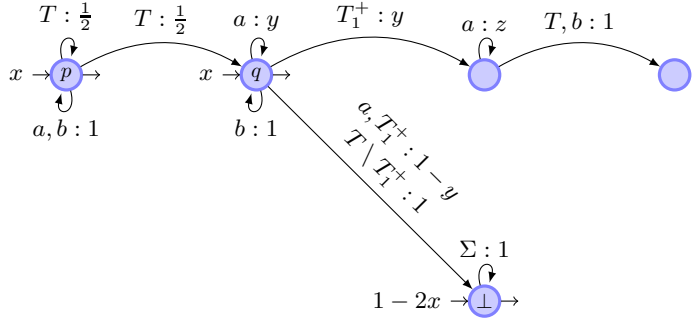

First, notice that we can trim the automaton to have only three states $p,q$ and $\bot$ because the remaining states are not accepting and it is not possible to reach an accepting state from them.
Recall that the ambiguity of an automaton relies only on its underlying structure (Section ~\ref{sec:ambiguity}).
Hence we can focus on the Boolean automaton (i.e.\ without probabilities) from Figure~\ref{figure:complement2} and analyze its ambiguity.
In the trimmed automaton there are only two places with non-deterministic choices: when reading a letter in $T$ from $p$ the automaton can either remain in $p$ or move to $q$; when reading $a$ from $q$ the automaton can either remain in $q$ or move to $\bot$.

Notice that all three states are both initial and accepting.
Let us decompose the set of accepting runs of the automaton on a word $w$ depending on where the run starts and where the run ends, which is 9 cases in total.
We focus only on the case for run starting in $p$ and ending in $\bot$; the remaining cases obviously provide at most a linear number of runs.
The automaton can move from $p$ to $q$ only 
when reading an element of $T$. Fix a word $w$ and consider positions $i$ and $i'$
such that $t_i \in T$ and $i'$ is maximal such that positions between $i$ and $i'$
have labels from $\Sigma \setminus T$ (i.e. position $i'$ is labelled with a letter
from $T$ or is equal to $|w|+1$). We show that there are at most
$f(i) \defequals i'-i + 1$ accepting runs starting from $p$ and ending in $q$ after
reading $t_i$.
This is because in state $q$, when reading an element from $T$, the automaton
has to move to $\bot$. Hence the number of all accepting runs from $p$ to $\bot$
is bounded by the sum of all $f(i)$ through all positions $1 \leq i \leq |w|$ such that $t_i \in T$.
We conclude that the automaton is linearly ambiguous.

\begin{figure}[!htbp]
\begin{center}
\begin{tikzpicture}[scale=0.5]
\tikzset{
	initial text={},
	accepting/.style={accepting by arrow},
}

\node[state, initial, accepting] (q_1) at (-8,0) {$p$};
\node[state, initial, accepting] (q_2) at (-3,0) {$q$};
\node[state, accepting,initial] (sink) at (2,0) {$\bot$};

\path[trans]	(q_1) edge [loop above] node {\small{$\Sigma$}} ();
\path[trans]	(q_2) edge [loop above] node {\small{$a,b$}} ();
\path[trans]	(sink) edge [loop above] node {\small{$\Sigma$}} ();

\path[trans]	(q_1) edge [bend left] node[above] {\small{$T$}} (q_2);
\path[trans]	(q_2) edge [bend left] node[above] {\small{$\Sigma \setminus \{b\}$}} (sink);
\end{tikzpicture}
\end{center}
\caption{\label{figure:complement2} Trimmed version of the automaton from Figure~\ref{figure:complement}; the transition probabilities have been omitted.}
\end{figure}
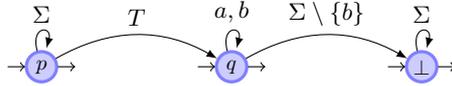

To conclude our argument about the ambiguity of all constructed automata, recall that
automata $\A$, $\B$, $\Ap$, and $\Bp$ are constructed as weighted sums of
PA obtained from $\mathcal{A}_i$ and $\mathcal{B}_j$, which are weighted sums of gadgets like $\Cc(x,y,z)$. 
For example the function $1-\funcA$ is
\begin{align*}
    1-\funcA \quad & = \quad 1- \left(\frac{7}{13}\funcAzero + \frac{1}{13}\funcAone + \cdots +
    \frac{1}{13}\funcAz\right) \\
    & = \quad \frac{7}{13}(1 - \funcAzero) + \frac{1}{13}(1 - \funcAone) + \cdots +
    \frac{1}{13}(1 - \funcAz) \\
\end{align*}
It is thus sufficient to complement each member of the sum. Showing that all these
complements are still linearly ambiguous follows the ideas given above for the
automaton in Figure~\ref{figure:complement}.

\section{Conclusion}
\label{sec:conclusions}
In this work we have shown that the containment problem for PA
is decidable if one of the automata is finitely ambiguous and the other
one is unambiguous. Interestingly, for one of the two cases, our proposed
algorithm uses a satisfiability oracle for a theory whose decidability is
equivalent to a weak form of Schanuel's conjecture. We have complemented our
decidability results with a proof of undecidability for the case when the
given automata are linearly ambiguous.

Decidability of the containment problem when both automata are allowed to be
finitely ambiguous remains open. One way to tackle it is to study
generalizations of the IP+EXP problem introduced in
Section~\ref{sec:decidability1vsk}. This problem asks whether there exists $\xx
\in \mathbb{N}^n$ such that $f(\xx) < 1$ and $M\xx  < \cc$ for a given function
$f$ defined using exponentiations, a given matrix $M$, and vector $\cc$.  A
natural way to extend the latter would be to ask that $f(\xx) < g(\xx)$, where
$g$ is obtained in a similar way as~$f$.  The main obstacle, when trying to
generalize our decidability proof for that problem, is that we lack a
replacement for the cone $C$  needed  in order to obtain a result similar to
Lemma~\ref{lemma:orthog} using the Minkowski-Weyl decomposition.

\section*{Acknowledgements}
This work was supported by the EPSRC grant EP/P020992/1 and the EPSRC fellowship
EP/N008197/1.  R. Lazi\'{c} was also supported by a Leverhulme Trust Research
Fellowship RF-2017-579; F. Mazowiecki, by the French National Research Agency
(ANR) in the frame of the ``Investments for the future'' Programme IdEx Bordeaux
(ANR-10-IDEX-03-02); G. A. P\'erez, by an F.R.S.-FNRS Aspirant fellowship and an
FWA postdoc fellowship.

We thank Shaull Almagor and Isma??l Jecker for some helpful remarks.

\bibliographystyle{elsarticle-num}
\bibliography{refs}

\end{document}